\numberwithin{equation}{section}
\newtheorem{Thm}{Theorem}[section]
\newtheorem{Cor}[Thm]{Corollary}
\newtheorem{Lem}[Thm]{Lemma}
\newtheorem{Prop}[Thm]{Proposition}
\newtheorem{Rmk}[Thm]{Remark}
\newcommand{\p}{\varphi}
\newcommand{\N}{\mathbb{N}}
\newcommand{\R}{\mathbb{R}}
\renewcommand{\div}{{\rm div}}
\let\old@tocline\@tocline
\let\section@tocline\@tocline
\newcommand{\subsection@dotsep}{4.5}
\newcommand{\subsubsection@dotsep}{4.5}
     \leaders\hbox{$\m@th
        \mkern \subsection@dotsep mu\hbox{.}\mkern \subsection@dotsep mu$}\hfill
\let\subsection@tocline\@tocline
\let\@tocline\old@tocline
     \leaders\hbox{$\m@th
        \mkern \subsubsection@dotsep mu\hbox{.}\mkern \subsubsection@dotsep mu$}\hfill
\let\subsubsection@tocline\@tocline
\let\@tocline\old@tocline
\let\old@l@subsection\l@subsection
\let\old@l@subsubsection\l@subsubsection
\def\@tocwriteb#1#2#3{%
  \begingroup
    \@xp\def\csname #2@tocline\endcsname##1##2##3##4##5##6{%
      \ifnum##1>\c@tocdepth
      \else \sbox\z@{##5\let\indentlabel\@tochangmeasure##6}\fi}%
    \csname l@#2\endcsname{#1{\csname#2name\endcsname}{\@secnumber}{}}%
  \endgroup
  \addcontentsline{toc}{#2}%
    {\protect#1{\csname#2name\endcsname}{\@secnumber}{#3}}}%
\newlength{\@tocsectionindent}
\newlength{\@tocsubsectionindent}
\newlength{\@tocsubsubsectionindent}
\newlength{\@tocsectionnumwidth}
\newlength{\@tocsubsectionnumwidth}
\newlength{\@tocsubsubsectionnumwidth}
\newcommand{\settocsectionnumwidth}[1]{\setlength{\@tocsectionnumwidth}{#1}}
\newcommand{\settocsubsectionnumwidth}[1]{\setlength{\@tocsubsectionnumwidth}{#1}}
\newcommand{\settocsubsubsectionnumwidth}[1]{\setlength{\@tocsubsubsectionnumwidth}{#1}}
\newcommand{\settocsectionindent}[1]{\setlength{\@tocsectionindent}{#1}}
\newcommand{\settocsubsectionindent}[1]{\setlength{\@tocsubsectionindent}{#1}}
\newcommand{\settocsubsubsectionindent}[1]{\setlength{\@tocsubsubsectionindent}{#1}}
\renewcommand{\l@section}{\section@tocline{1}{\@tocsectionvskip}{\@tocsectionindent}{}{\@tocsectionformat}}%
\renewcommand{\l@subsection}{\subsection@tocline{1}{\@tocsubsectionvskip}{\@tocsubsectionindent}{}{\@tocsubsectionformat}}%
\renewcommand{\l@subsubsection}{\subsubsection@tocline{1}{\@tocsubsubsectionvskip}{\@tocsubsubsectionindent}{}{\@tocsubsubsectionformat}}%
\newcommand{\@tocsectionformat}{}
\newcommand{\@tocsubsectionformat}{}
\newcommand{\@tocsubsubsectionformat}{}
\def\csname toc@1format\endcsname{\@tocsectionformat}
\def\csname toc@2format\endcsname{\@tocsubsectionformat}
\def\csname toc@3format\endcsname{\@tocsubsubsectionformat}
\newcommand{\settocsectionformat}[1]{\renewcommand{\@tocsectionformat}{#1}}
\newcommand{\settocsubsectionformat}[1]{\renewcommand{\@tocsubsectionformat}{#1}}
\newcommand{\settocsubsubsectionformat}[1]{\renewcommand{\@tocsubsubsectionformat}{#1}}
\newlength{\@tocsectionvskip}
\newcommand{\settocsectionvskip}[1]{\setlength{\@tocsectionvskip}{#1}}
\newlength{\@tocsubsectionvskip}
\newcommand{\settocsubsectionvskip}[1]{\setlength{\@tocsubsectionvskip}{#1}}
\newlength{\@tocsubsubsectionvskip}
\newcommand{\settocsubsubsectionvskip}[1]{\setlength{\@tocsubsubsectionvskip}{#1}}
\patchcmd{\tocsection}{\indentlabel}{\makebox[\@tocsectionnumwidth][l]}{}{}
\patchcmd{\tocsubsection}{\indentlabel}{\makebox[\@tocsubsectionnumwidth][l]}{}{}
\patchcmd{\tocsubsubsection}{\indentlabel}{\makebox[\@tocsubsubsectionnumwidth][l]}{}{}
\newcommand{\@sectypepnumformat}{}
\renewcommand{\contentsline}[1]{%
  \expandafter\let\expandafter\@sectypepnumformat\csname @toc#1pnumformat\endcsname%
  \csname l@#1\endcsname}
\newcommand{\@tocsectionpnumformat}{}
\newcommand{\@tocsubsectionpnumformat}{}
\newcommand{\@tocsubsubsectionpnumformat}{}
\newcommand{\setsectionpnumformat}[1]{\renewcommand{\@tocsectionpnumformat}{#1}}
\newcommand{\setsubsectionpnumformat}[1]{\renewcommand{\@tocsubsectionpnumformat}{#1}}
\newcommand{\setsubsubsectionpnumformat}[1]{\renewcommand{\@tocsubsubsectionpnumformat}{#1}}
\renewcommand{\@tocpagenum}[1]{%
  \hfill {\mdseries\@sectypepnumformat #1}}
\let\oldappendix\appendix
\renewcommand{\appendix}{%
  \leavevmode\oldappendix%
  \addtocontents{toc}{%
    \protect\settowidth{\protect\@tocsectionnumwidth}{\protect\@tocsectionformat\sectionname\space}%
    \protect\addtolength{\protect\@tocsectionnumwidth}{2em}}%
}
\let\oldtableofcontents\tableofcontents
\renewcommand{\tableofcontents}{%
  \vspace*{-\linespacing}
  \oldtableofcontents}
\date{\today}
\title[Macroscopic Non-uniqueness]{
Macroscopic Non-uniqueness and Limits of Hamiltonian Dynamics}
\author{S. Dostoglou}
\author{Jianfei Xue}
\address{Department of Mathematics, 
          University of Missouri, 
          Columbia, MO 65211}
\begin{document}

\begin{abstract}
{We construct explicit examples of 
spontaneous energy generation and
non-uniqueness for the compressible Euler system, with and without pressure, by taking limits of Hamiltonian dynamics as the number of molecules increases to infinity. 
The examples come from rescalings of well-posed, deterministic systems of molecules that either collide elastically or interact via singular pair potentials.}
\end{abstract}

\subjclass[2000]{76, 82C,28, 35, 37J,K}   
         
\maketitle
\tableofcontents

\section{Introduction}
{
Non-uniqueness for weak solutions of hydrodynamic equations is well known. 
Examples include the
construction of V.\,Scheffer \cite{Sch} and A.\,Shnirelman \cite{Sh} of non-trivial weak solutions of (incompressible, two-dimensional) Euler equations with compact time and space support,
and the work by C.\,De\,Lellis and L.\,Sz\'ekelyhidi \cite{dLS} showing that non-uniqueness (of 
the incompressible and compressible Euler equations in dimension greater or equal to two) persists even under ``admissibility" conditions. \cite{D} is a standard reference on the non-uniqueness of weak solutions of hyperbolic conservation laws in general.

In an attempt to investigate the origin of this behavior, we adopt here the point of view that hydrodynamic equations are the result of averaging  microscopic evolution equations (cf.\,\cite{M}, p.\,81, and \cite{B}, Part I, \S 20) 
to construct explicit examples of 
spontaneous macroscopic energy generation and
non-uniqueness for the compressible Euler system, with and without pressure, as limits of Hamiltonian dynamics. 
Our examples are rescaled limits of well-posed, deterministic systems of molecules that either collide elastically or interact via rescaled, singular pair interaction potentials, at the limit of infinitely many molecules,
cf.\,C.B.\,Morrey's work \cite{Mor}. 
For each moment $t$ and finite $N$, the positions and velocities of the molecules define the probability measure $\displaystyle M_t^{(N)}(d\bold x,d\bold v):=\dfrac 1N \sum_{k=1}^N\delta_{(\bold x_k,\bold v_k)}(d\bold x,d\bold v)$. In all examples here, $M_t^{(N)}$ converges weakly to $M_t$ as $N \to \infty$ and for each $(t,\bold x)$ the macroscopic density is given by the first marginal of $M_t$ and the macroscopic velocity by the barycentric projection of $M_t$ at $\bold x$ with respect to this marginal.

The first part of this article, consisting of Sections \ref{Section:Gost} and
\ref{section:reversal}, is centered on an example showing spontaneous generation of macroscopic velocitiy.
 The microscopic systems start with groups of motionless molecules and a single molecule, macroscopically undetectable, initially at a sufficiently large distance from the group, moving towards the group. Macroscopically, the limit of these flows describes a line segment in $\R^2$ at rest for $t\in(-\infty,0]$, which splits into two equal parts moving away from each other with velocities $\pm 1$ as soon as $t$ becomes positive. The macroscopic velocity and the macroscopic density from $M_t$ turn out to be a weak solution of the $2$-dimensional presureless Euler for all $t$ in $\R$. This solution is macroscopically as ``inadmissible" as those of Scheffer and  Shnirelman in that kinetic energy is spontaneously created at $t_0$. (Microscopically, total energy is, of course, conserved.) 
As Hamiltonian flows are time reversible, in Section \ref{section:reversal} the flows $M_t^{(N)}$ are reversed to produce a solution to the $2$-dimensional presureless Euler that does decrease energy. When compared to an elementary transverse flow, this provides an example of non-uniqueness for the presureless Euler under the admissibility condition of non-increasing energy.

In the second part of this article,  Section  \ref{layers} provides an interpretation, via a microscopic derivation, of the well known non-uniqueness of the Cauchy problem for the $1$-dimensional Euler system. We show how three moment equations derived from the transport equation
 \begin{equation}\label {eq: free flow eqn intro}
 \partial_t M_t + v\partial_xM_t =0
 \end{equation} 
can result in the $1$-dimensional Euler system.
The main point here is that two flows of probability measures solving the same transport equation, even if their moments coincide at $t=0$, in general will not have identical moments for all later times. Indeed, we construct two limit measures $M_t$ and $\widetilde  M_t$ both solving \eqref {eq: free flow eqn intro} and 
resulting in the $1$-dimensional Euler system.
At $t=0$, both $ M_t$ and $\widetilde M_t$ give the same macroscopic density, velocity, and pressure. Macroscopically, the solutions produced by $M_t$ and $\widetilde  M_t$ can be pictured as a segment of two and three layers, respectively,   on top of each other moving freely, see Figures \ref{Fig: two vertical leayers} and \ref{Fig: three vertical leayers}. The solutions in Section \ref{layers} are surrounded by vacuum (zero density).

}

\section{Preliminaries and Notation}
\label{why transport}
\subsection{Measure theory}

Recall that a sequence of finite measures $M_n(dx)$ converges {\em weakly} to a finite measure $M(dx)$ if for any $f(x)$ continuous and bounded $\displaystyle \int f(x) M_n(dx) \to \int f(x) M(dx)$, $n \to \infty$.
We then write $M_n \Rightarrow M$.

\medskip
\noindent
For $f:X\to Y$ measurable and $M$ a probability measure on $X$ 
the {\em push-forward measure} $f_\# M$  of $f$ on $Y$ (the distribution measure of the random variable $f$) is $(f_\# M) (B) =  M(f^{-1}(B))$. We often write $fM$ for this push-forward.
 
\medskip
\noindent
If $M$ is on $\R^{2d}$ its {\em first marginal} will be $(\pi_1)_\# M$, for $\pi_1: \R^{2d} \to \R^{d}, \pi_1(\bold x,\bold v) = \bold x$. 

\medskip
\noindent
$\displaystyle \int M_{\bold x}(d\bold v) \mu(d\bold x)$ is a shorthand for the measure 
$f \mapsto \displaystyle \int \left(\int f(\bold x,\bold v)M_{\bold x}(d\bold v)\right) \mu(d\bold x)$.

\medskip
\noindent
The {\em disintegration} of $M(d\bold x,d\bold v)$  with respect to its first marginal $\mu(d\bold x)$ is
 the unique, up to a $\mu$-measure $0$,
 family $M_{\bold x}(d\bold v)$  such that
$\displaystyle  M(d\bold x,d\bold v)=
     \int M_{\bold x}(d\bold v) \mu(d\bold x) $.
The barycentric projection of this disintegration is 
$ \displaystyle    \overline{\bold v}(\bold x) 
     = 
     \int \bold v M_{\bold x}(d\bold v)$ for $\bold x$ in the support of $\mu$, $\overline{\bold v} = 0$ otherwise. For details see \cite{AGS}, Section 5.3, or \cite{DJX}, Section 3.1. 
 
 \subsection{Finite systems}
 \label{subsec: finite}
A system of $N$ molecules in $\R^d$ will be described by the positions and velocities of the molecules, $\left(\bold x_k (t), \bold u_k (t) \right)$, $1 \leq k \leq N$, evolving via Hamiltonian dynamics with pairwise interaction $\Phi_{\sigma}(r)$ of finite range $\sigma$:
\begin{equation} \label{Ham}
\begin{split}
     \frac{d }{dt} \bold x_k(t)
    &=
    \bold u_k(t),
    \\ 
    \frac{d}{dt} \bold u_k (t)
    &=
    -
    \frac{1}{N}
    \sum_{\substack{j=1\\ j\neq k}}^{N}
    \Phi'_{\sigma}
    \left( 
   \left|\bold x_k(t) - \bold x_j(t)\right|
    \right)
    \frac{\bold x_k(t) - \bold x_j(t)}
     {\left|\bold x_k(t) - \bold x_j(t)\right|}.
\end{split}
\end{equation}
Following Morrey \cite{Mor}, we shall take $\displaystyle \Phi_{\sigma}(r) = \Phi\left(\frac{r}{\sigma}\right)$ for some $\Phi:(0, \infty) \to [0,\infty)$ satisfying:
\begin{equation} \label{Uinteraction}
   \lim_{r \to 0} \Phi (r) = + \infty, \quad
   \Phi' \leq 0, \quad \Phi'' \geq 0, \quad
   \Phi(r)  \neq 0 \Leftrightarrow  0< r<1.
\end{equation}

 For each $N$, suppose that a system $\displaystyle \left( \bold x_k^{(N)}(t), \bold u_k^{(N)}(t)\right)$, $k=1,\ldots,N$ evolves according to 
 \eqref{Ham}. Of central importance will be the corresponding $t$-family of probability measures on $\R^{2d} $:
\begin{equation} \label {M_t^{N}}
\begin{split}
     M_t^{(N)}(d\bold x,d\bold v)&:=
     \frac1N
     \sum_{k=1}^{N} 
     \delta_{\left(\bold x_k^{(N)}(t), \bold u_k^{(N)}(t)\right)}
     (d \bold x,d \bold v),\ t\geq 0 \text{, or } t\in \R. 
\end{split}
\end{equation}
When $M_t^{(N)}$ converges weakly to some $M_t$, it is crucial to note that the empirical measure formed by neglecting a single molecule converges weakly to the same $M_t$. (In fact, neglecting $o(N)$ number of molecules has the same effect.)
In this sense, any single molecule is macroscopically {\em invisible}. The construction in Section \ref{Section:Gost} relies heavily on this observation.


\section{Spontaneous Macroscopic Velocity Generation from Hamiltonian Dynamics}
\label{Section:Gost}
This section presents an example of a microscopic Hamiltonian flow with macroscopic limit, as $N\to \infty$, that shows spontaneous velocity generation. 
The microscopic systems start with groups of motionless molecules and a single molecule, initially at a sufficiently large distance from the group, moving towards the group with large velocity. For $t<0$, as $N\to \infty$, the moving  molecule is invisible and the macroscopic system is motionless. However, as the moving molecule starts interacting with the group at $t=0$, its energy is transferred to the rest of the system in such a way that all other molecules acquire speed $1$ to create macroscopic velocity for $t>0$.

There are similarities here with Lanford \cite{L}, pp.\,50--53, although Lanford works with an infinite system of hard balls that always remains discrete, rather than the limit of finite Hamiltonian systems with interaction, and he does not obtain hydrodynamic equations.

Throughout this section we use $\mathbf Q_t$ for the segment
\begin{equation}
\displaystyle \left\{ (x,y): 0\leq x\leq 1, y=t \right\} \subset \R^2
\end{equation}
and  $\Delta_t(d\bold x)$ for the normalized $1$-dimensional Lebesgue measure on $\mathbf Q_t$.
\begin{Thm}\label{Thm: ghost example}
For each $N\in \N$, there exists $\sigma_N>0$ and $\left( \bold x_k^{(N)}(t), \bold u_k^{(N)}(t)\right)$, $k=1,\ldots,N$, solution of the Hamiltonian system \eqref{Ham} with interaction $\Phi_{\sigma_N}$ for all $t\in \R$, such that 
for all $t\in\R$, the sequence of empirical measures 
\begin{equation}
\begin{split}
     M_t^{(N)}(d\bold x,d\bold v):=
     \frac1N
     \sum_{k=1}^{N} 
     \delta_{\left(\bold x_k^{(N)}(t), \bold u_k^{(N)} (t)\right)}
     (d \bold x,d \bold v) 
\end{split}
\end{equation}
converges weakly, as $N\to \infty$, to
\begin{equation}\label{eq: M_t ghost for all t}
M_t(d\bold x, d \bold v)
=
\left\{
\begin{array}{ll}
         \Delta_0(d\bold x)
         \otimes
         \delta_{\left(0,0\right)}\left(d\bold v\right)
       &
        t\leq 0 
        \\
         \dfrac12
         \Delta_{t}(d\bold x)
         \otimes
         \delta_{\left(0,1\right)}\left(d\bold v\right)
         +
         \dfrac 12
         \Delta_{-t}(d\bold x)
         \otimes
         \delta_{\left(0,-1\right)}\left(d\bold v\right)
      &
      t>0.
\end{array}
\right.
\end{equation} 
\end{Thm}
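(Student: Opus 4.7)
\medskip

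\textbf{Proof plan.} My plan is to construct, for each $N$, an explicit Hamiltonian system realizing the stated limit: $N-1$ motionless molecules whose positions project, in $\bold x$, onto a measure close to $\Delta_0$ on $\mathbf Q_0$, plus one ``ghost'' molecule coming in from far away with speed of order $\sqrt N$ (so kinetic energy $\sim(N-1)/2$). The ghost will reach the cluster near $t=0$ and, through a sequence of two-body scatterings, split it into half with velocity $(0,1)$ and half with $(0,-1)$. Since a single molecule is macroscopically invisible (Section~\ref{subsec: finite}), the ghost itself will not affect the limit.

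\medskip

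\emph{Construction and scattering dynamics.} I would fix $\sigma_N$ small, say $\sigma_N=N^{-1-\alpha}$ for some $\alpha\in(0,1/2)$, and place the stationary molecules at $(x_k,\varepsilon_k)$, $k=1,\dots,N-1$, with $x_k=k/N$ and $|\varepsilon_k|$ of order $\sigma_N$, choosing the signs of $\varepsilon_k$ to alternate so that half are positive and half negative. Since $|x_k-x_j|\ge 1/N>\sigma_N$, no two stationary molecules lie within each other's interaction range. The ghost starts at $(-Y_N,0)$ with velocity $(V_N,0)$, $V_N=\sqrt{N-1}$, and $Y_N$ is chosen so that the ghost enters the interaction region inside a window around $t=0$ of length $O(1/V_N)\to 0$. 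Before entry everything is at rest. Once the ghost is crossing, $\sigma_N\ll 1/N$ ensures it interacts with only one stationary molecule at a time. Each encounter is a two-body, equal-mass, elastic scattering: with ghost speed $V_k$ just before encounter $k$, conservation forces the struck molecule's final velocity $(a_k,b_k)$ onto the circle $(a_k-V_k/2)^2+b_k^2=V_k^2/4$. Inverting the scattering map $\theta=\theta(b,V)$ for $\Phi$ lets me choose $|\varepsilon_k|$ producing $|b_k|=1$, which then forces $a_k=V_k/2-\sqrt{V_k^2/4-1}=O(V_k^{-1})$ and $\mathrm{sgn}(b_k)=\mathrm{sgn}(\varepsilon_k)$. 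Each encounter costs the ghost kinetic energy $1/2$, so $V_k=\sqrt{V_N^2-(k-1)}$ stays large and $a_k=O(N^{-1/2})$.

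\medskip

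\emph{Weak convergence.} After the last encounter the ghost exits with bounded velocity, and each struck molecule flies freely with $y$-velocity $\pm 1$ and $x$-drift $a_k=O(N^{-1/2})$. Because the signs of $\varepsilon_k$ alternate, consecutive molecules separate in $y$ at speed $2$, so no two ever re-enter each other's interaction range. For $t>0$ molecule $k$ is at $(x_k+a_kt,\,\varepsilon_k+b_kt)$; since $a_k,\varepsilon_k\to 0$, the empirical measure of the $N-1$ non-ghost molecules converges weakly to
\[
\tfrac12\Delta_t(d\bold x)\otimes\delta_{(0,1)}(d\bold v)+\tfrac12\Delta_{-t}(d\bold x)\otimes\delta_{(0,-1)}(d\bold v),
\]
and for $t\le 0$ it converges to $\Delta_0(d\bold x)\otimes\delta_{(0,0)}(d\bold v)$. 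Adding the ghost modifies the empirical measure by $1/N$, which is negligible in the weak limit.

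\medskip

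\emph{Main obstacle.} The hard step will be the scattering analysis: the ghost's cumulative $y$-drift between collisions is of order $N^{-3/2}$ and must be dominated by $\sigma_N\sim N^{-1-\alpha}$ (forcing $\alpha<1/2$); it perturbs the effective impact parameter at each encounter, and $V_k$ decreases collision by collision. I will need uniform quantitative control on the scattering map of $\Phi$ to choose the $\varepsilon_k$ inductively so that $|b_k|=1$ holds up to errors vanishing as $N\to\infty$. Once these estimates are in place, the weak convergence computation is routine.
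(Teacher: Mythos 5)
Your overall strategy is the paper's: a single, macroscopically invisible ``ghost'' molecule comes in with speed of order $\sqrt N$ and scatters sequentially off $N$ nearly motionless molecules, imparting to each a velocity near $(0,\pm 1)$ with alternating sign. Your scattering normalization (set $|b_k|=1$ exactly, so the struck molecule has speed slightly more than $1$) differs from the paper's (set the struck molecule's \emph{speed} exactly $1$, with $y$-component $(-1)^{k+1}\cos\phi_k$), but both are legitimate and give the same weak limit.

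There is, however, a genuine quantitative error in where you place the stationary molecules. You assert $|\varepsilon_k|=O(\sigma_N)\sim N^{-1-\alpha}$, arguing that the ghost's per-segment $y$-drift is $O(N^{-3/2})\ll\sigma_N$ (whence $\alpha<1/2$). But the ghost's $y$-velocity does not stay near zero: by $y$-momentum conservation it equals $-\sum_{j\le k}b_j$, which for alternating $b_j=\pm 1$ oscillates between $\approx -1$ and $\approx 0$. The per-segment drift you compute is therefore $O(N^{-3/2})$ for roughly \emph{half} the segments, and these contributions do not cancel; the ghost's cumulative $y$-displacement by the time it reaches $Q_k$ is of order $k\,N^{-3/2}$, i.e. $O(N^{-1/2})$ for $k\sim N$. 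Since each $Q_k$ must sit within $\sigma_N$ of the ghost's \emph{current} trajectory, the heights $\varepsilon_k$ (the paper's $y_{Q_k}$) necessarily spread over a range of order $N^{-1/2}$, far larger than $\sigma_N$. This is not fatal — $N^{-1/2}\to 0$ still gives weak convergence to measures supported on $\mathbf Q_0$, $\mathbf Q_{\pm t}$ — but it means your claim $|\varepsilon_k|=O(\sigma_N)$ cannot hold, and the argument must instead track the full trajectory as the paper does: the angle bookkeeping of Lemma~\ref{theta&phi}, the center positions \eqref{eq:x_k,y_k}, and the estimate of Proposition~\ref{yQk to 0} giving $\max_k|y_{Q_k}|\lesssim N^{-1/2}\to 0$. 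Relatedly, ``a window of length $O(1/V_N)$'' for the interactions needs a quantitative collision-time bound (Lemma~\ref{Lem: CollisionTimeEstimate} gives $T<4\sigma/v$ per encounter) to conclude, as in Proposition~\ref{TN to 0}, that the total interaction interval collapses to $t=0$.
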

The proof of this theorem occupies the rest of this section. For the moment, note that
the first marginal (macroscopic density) of $M_t(d\bold x, d\bold v)$ in \eqref{eq: M_t ghost for all t} is 
\begin{equation} \label{eq: macro density from ghost}
\begin{split}
     \mu_t(d\bold x) 
     =
     \left\{
     \begin{array}{ll}
      \Delta_0(d\bold x) 
     & t\leq 0
     \\
     \dfrac12 \Delta_t(d\bold x)  + \dfrac 12 \Delta_{-t}(d\bold x)
     & t> 0 ,
     \end{array}
     \right. 
\end{split}
\end{equation}
If we disintegrate 
\begin{equation}\label {eq: disintegration of ghost Mt}
 M_t(d\bold x,d\bold v)
=
\int M_{t,\bold x}(d\bold v) \, \mu_t(d\bold x)
\end{equation}
 then
\begin{equation}\label {eq: M_t,x}
\begin{split}
     M_{t,\bold x}(d\bold v) 
     =
     \left\{
     \begin{array}{ll}
      \chi_{\mathbf Q_0}(\bold x)    \delta_{(0,0)}(d\bold v) 
     & t\leq 0
     \\
     \chi_{\mathbf Q_t}(\bold x) \delta_{(0,1)} (d\bold v)  + \chi_{\mathbf Q_{-t}}(\bold x) \delta_{(0,-1)} (d\bold v)
     & t> 0 .
     \end{array}
     \right. 
\end{split}
\end{equation}
Notice that in \eqref{eq: M_t,x} we only needed to specify $M_{t,\bold x}(d\bold v)$ for $\bold x$ in the support of $\mu_t(d\bold x)$.
The macroscopic velocity is the barycentric projection of this disintegration:
\begin{equation}\label{eq: macro velocity from ghost}
\begin{split}
   \bold u(t,\bold x)
   :=
   \int_{\R^2} \bold v M_{t,\bold x}(d\bold v)
   =
     \left\{
     \begin{array}{cl}
     (0,0)  &t\leq 0\\
    \chi_{\mathbf Q_t}(\bold x) \cdot (0,1) +  \chi_{\mathbf Q_{-t}}(\bold x) \cdot (0,-1)
     &t> 0.
     \end{array}
     \right.
     \end{split}
\end{equation}
The macroscopic density \eqref{eq: macro density from ghost} and velocity \eqref{eq: macro velocity from ghost} show clearly a macroscopic velocity generation (see Figure \ref{Ghost Flow}):  before $t=0$, the macroscopic system stays at rest, while, starting at $t=0$, two equal mass fronts split and move away from each other with velocity $\pm 1$.
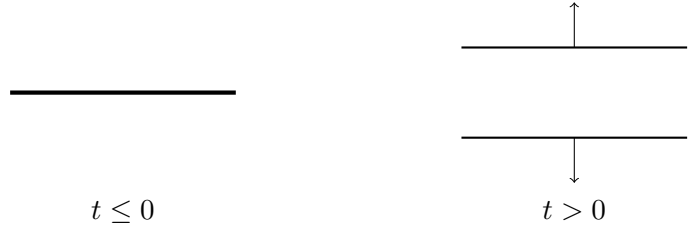
\begin{figure}
\begin{tikzpicture}
  \coordinate []  (A1) at (0,0);
  \coordinate [] (A2) at (3,0);
  \coordinate [] (B1) at (6,0.6);
  \coordinate [] (B2) at (9,0.6);
  \coordinate [] (C1) at (6,-0.6);
  \coordinate [] (C2) at (9,-0.6);
    \coordinate [] (M1) at (7.5,0.6);
  \coordinate [] (M2) at (7.5,1.2);
    \coordinate [] (N1) at (7.5,-0.6);
  \coordinate [] (N2) at (7.5,-1.2);
  \draw [ultra thick](A1) -- (A2);
  \draw [thick](B1) -- (B2);
  \draw [thick](C1) -- (C2);
  \draw [->](M1) -- (M2);
  \draw [->](N1) -- (N2);
  \coordinate [label=below:$t \leq 0$] (v) at (1.5,-1.3);
  \coordinate [label=below:$t > 0$] (v) at (7.5,-1.3);
   \end{tikzpicture}
    \caption{Macroscopic flow of $M_t$ in Theorem \ref{Thm: ghost example}.}
    \label{Ghost Flow}
\end{figure}
The sudden increase of macroscopic kinetic energy, of course, comes from interaction with an invisible molecule as we will see in the proof of Theorem \ref{Thm: ghost example} (subsections \ref{subsection: 2 particle system},  \ref{subsection: a system of molecules}, and \ref{subsection: N to infty}). 
In subsection \ref{Macroscopic Equations} we examine the macroscopic  hydrodynamic equation solved by the density \eqref{eq: macro density from ghost} and velocity \eqref{eq: macro velocity from ghost}.

\subsection{Interaction with one particle at rest}\label{subsection: 2 particle system}
\begin{figure} 
\begin{tikzpicture}
\draw   [dotted] (4,-4)  -- (4,3);
\draw   (0,0) circle [radius=1.5]; 
\coordinate [label=left:${(x_0,y_0)}$]  (v) at (0,0);
\node at (0,0)  {.};
\node at (.772,2.059)  {.};
\node at (-.772,-2.059)  {.};
\draw [dotted]  (.772,2.059)--(.772+4,2.059-1.5);
\draw [dotted]  (-.772,-2.059)--(-.772+1.3*4,-2.059-1.3*1.5); 
\node at (4,2.059-1.5*.807)  {.};
\node at (4,-2.059-1.5*1.193)  {.};
\draw  [|-|](4,2.059-1.5*.807)--(4,-2.059-1.5*1.193);
\coordinate [label=left:${I}$]  (I) at (3.6,-0.3-1.5*.807);
\draw  [dotted] (0,0) circle [radius=2.2];
\draw   [dotted] (0,0)  -- (4.,-1.5);
\draw   [<->] (0,3.)  -- (4.,3.);
\draw   [dotted] (0,0)  -- (1.5,0);
\coordinate [label=left:${\phi}$]  (phi) at (1.4,-.2);
\coordinate [label=above:${d}$]  (d) at (2,3.2);
\coordinate [label=right:${(x_0+d,y_0 + d \tan \phi)}$]  (v) at (4,-1.5);
\node at (4,-1.5)  {.};
\coordinate [label=above:${D}$]  (D) at (0,-1.);
\draw [solid] (0,0) -- (0,1.5);
\draw [solid] (0,0) -- (0,2.2);
\coordinate [label=left:${r}$]  (r) at (.5,.8) ;
\coordinate [label=left:$\sigma$]  (sigma) at (.5,1.8) ;

\end{tikzpicture}
\caption{The initial disc $D$ and the segment $I$ for $\phi<0$.}
\label {fig: the single step}
\end{figure}
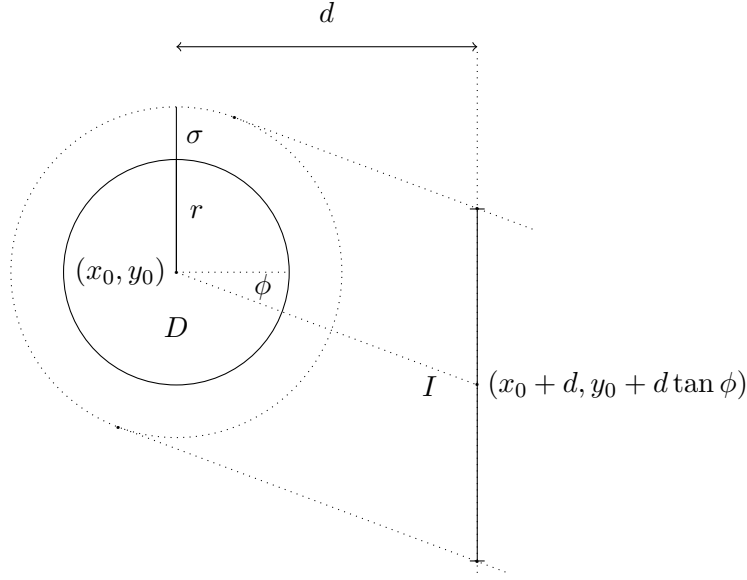
Start with two identical molecules $P$, $Q$ interacting with potential $\Phi_{\sigma}$ as in \eqref{Ham}.
Denote the positions and velocities of $P$, $Q$ as $\bold x_P= (x_P,y_P)$, $\bold x_Q = (x_Q,y_Q)$, $\bold v_P$, and $\bold v_Q$. Consulting Figure \ref{fig: the single step},
let $D$ be  the disc with center $(x_0,y_0)$ and radius $r>0$ and assume that at $t=0$ 
\begin{enumerate}
\item $(x_P,y_P)\in D$ and $x_Q=x_0+d$ with $d>r+\sigma$, i.e. $P$ is inside $D$ and $Q$ is on the vertical  line $x=x_0+d$.
\item $\bold v_P=v(\cos \phi,\sin \phi)$ with $-\dfrac{\pi}{2}<\phi <\dfrac{\pi}{2}$, $v>0$ and $\bold v_Q=(0,0)$, i.e. $P$ moves with speed $v$ and $Q$ is at rest.
\end{enumerate}

We say that there is interaction between $P$ and $Q$ whenever their distance is smaller than $\sigma$. Since $Q$ is at rest at $t=0$, there will be no interaction between $P$ and $Q$ as long as $P$ is inside $D$. The following lemma on the interaction between  $P$ and $Q$ is the building block of the rest of this section.
\begin{Lem}\label{successive collision}
Let  $P$, $Q$ be as above:
\begin{enumerate}
\item \label{second item}
For any $\theta$ in  $\left(-\dfrac{\pi}{2},\dfrac{\pi}{2}\right)$ there exists $y_Q$ such that $P$ and $Q$ will eventually interact (i.e. $P$ and $Q$ will interact at some time $t>0$), and after interaction $P$  and $Q$ will move in directions perpendicular to each other with constant velocities  $\bold v_P'=v\cos\theta\left(\cos\phi',\sin\phi'\right)$ and $\bold v_Q'=v\sin\theta\left(\sin\phi',-\cos\phi'\right)$, respectively, where $\phi'=\phi+\theta$.
\item
If interaction takes place then $y_Q$ satisfies
 \begin{equation*}
\begin{split}
     \left|\, y_Q-(y_0+ d\tan\phi)\,\right|
     <
     \dfrac{r+\sigma}{\cos\phi}.
\end{split}
\end{equation*}
 \item 
Whenever $P$ and $Q$ interact, they are both inside the disc with center $(x_0+d,y_0+d\tan\phi)$ and radius  $\dfrac{r+\sigma}{\cos\phi}+ 5 \sigma$.
 \end{enumerate}
\end{Lem}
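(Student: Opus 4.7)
The plan is to treat this as a classical two-body elastic scattering problem and then translate the scattering parameters into the geometric statements of items (1)--(3).

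For item (1), I begin with the conservation laws. Because $\Phi_\sigma$ is supported in $[0,\sigma)$, the interaction potential vanishes both before the first encounter and after the last separation, so across the interaction episode total momentum and total kinetic energy (the kinetic part of the conserved Hamiltonian for $N=2$) are preserved:
\[
\bold v_P'+\bold v_Q' = v(\cos\phi,\sin\phi),\qquad |\bold v_P'|^2+|\bold v_Q'|^2 = v^2.
\]
Squaring the momentum identity gives $v^2 = |\bold v_P'|^2+|\bold v_Q'|^2+2\bold v_P'\cdot\bold v_Q'$; combined with the energy identity this forces $\bold v_P'\cdot\bold v_Q'=0$, so the outgoing velocities are mutually orthogonal. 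Writing them in a generic orthogonal frame $\bold v_P' = A(\cos\phi',\sin\phi')$, $\bold v_Q' = B(\sin\phi',-\cos\phi')$ and matching momentum components yields $A = v\cos(\phi'-\phi)$, $B = v\sin(\phi'-\phi)$; setting $\theta := \phi'-\phi$ produces the claimed form. To realize any $\theta\in(-\pi/2,\pi/2)$, I use that $\theta$ depends continuously on $y_Q$ (continuous dependence of the Hamiltonian flow on initial data): for $|y_Q|$ far from the head-on value no interaction occurs and $\theta=0$, while at the head-on configuration the standard equal-mass elastic collision forces $\bold v_P'=0$ and $\bold v_Q'=\bold v_P$, i.e.\,$\theta = \pi/2$. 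The intermediate value theorem then supplies every $\theta\in[0,\pi/2)$, and the configuration reflected across $P$'s trajectory yields $(-\pi/2,0]$.

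Item (2) is a direct geometric estimate. Before the first encounter, $Q$ sits at $(x_0+d, y_Q)$ and $P$ moves in a straight line from $(x_P(0),y_P(0))\in D$ with direction $(\cos\phi,\sin\phi)$. Interaction requires $Q$ to lie within perpendicular distance $\sigma$ of this line, which reads
\[
|(y_Q-y_P(0))\cos\phi-(x_0+d-x_P(0))\sin\phi|\leq\sigma.
\]
Rewriting the left side as $|(y_Q-y_0)\cos\phi-d\sin\phi - ((y_P(0)-y_0)\cos\phi-(x_P(0)-x_0)\sin\phi)|$ and bounding the correction bracket by $|(x_P(0)-x_0,y_P(0)-y_0)|\cdot|(-\sin\phi,\cos\phi)|\leq r$ via Cauchy--Schwarz gives $|(y_Q-y_0)\cos\phi - d\sin\phi|\leq r+\sigma$; dividing by $\cos\phi>0$ delivers item (2).

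For item (3), I localize both particles at the onset of interaction and then control subsequent drift. At the entry time $t_1$, item (2) places $Q(t_1)=(x_0+d, y_Q)$ within $(r+\sigma)/\cos\phi$ of the target point $(x_0+d,\, y_0+d\tan\phi)$ and $|P(t_1)-Q(t_1)|=\sigma$. Throughout $[t_1,t_2]$, energy conservation $\tfrac{1}{2}(|\bold v_P|^2+|\bold v_Q|^2)+\tfrac{1}{2}\Phi_\sigma(|P-Q|)=v^2/2$ together with $\Phi_\sigma\geq 0$ bounds the individual speeds by $v$; by definition of the interaction region $|P(t)-Q(t)|\leq\sigma$; and the center of mass drifts rigidly with velocity $\tfrac{1}{2}v(\cos\phi,\sin\phi)$. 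Using the hypotheses $\Phi'\leq 0$ and $\Phi''\geq 0$ from \eqref{Uinteraction} to ensure that the reduced relative motion $\ddot\rho=-\Phi'_\sigma(|\rho|)\hat\rho$ for $\rho=P-Q$ is monotonically repulsive with a single turning point (so $\rho$ traverses $\{|\rho|\leq\sigma\}$ exactly once), I bound $t_2-t_1$ by a constant multiple of $\sigma/v$; combining the $(r+\sigma)/\cos\phi$ localization of $Q(t_1)$, the $\sigma$ cushion from $|P-Q|\leq\sigma$, and the center-of-mass displacement $\tfrac{v}{2}(t_2-t_1)$ gives the $5\sigma$ excess asserted in item (3). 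The principal obstacle is this quantitative duration bound: turning the qualitative finiteness of the interaction time into an explicit constant requires exploiting convexity of $\Phi$ to keep $|\dot\rho|$ bounded below away from the turning point, whereas the remaining steps are classical kinematics and planar geometry.
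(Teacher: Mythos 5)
Your proposal follows the same overall route as the paper: conservation of momentum and energy to get orthogonal outgoing velocities and the explicit formulas in item (1), an intermediate-value argument in the impact parameter to realize every $\theta$, elementary planar geometry for item (2), and a speed bound plus an interaction-time bound for item (3). Your item (2) computation is correct and is a clean algebraic rephrasing of the paper's strip/fattening geometry.

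The genuine gap is the one you flag yourself: the quantitative bound on the interaction duration $T$. The paper proves exactly $T<4\sigma/v$ (Lemma \ref{Lem: CollisionTimeEstimate} in the Appendix), and it is this constant that feeds the $4\sigma$ travel bound for $Q$ and hence the $5\sigma$ excess in item (3). Your suggested mechanism --- keeping $|\dot\rho|$ bounded below away from the turning point --- is not how the paper does it and is delicate, since the radial relative speed can be arbitrarily small near the turning point for near-grazing collisions. The paper instead decomposes the velocity along the chord $AB$ and along the axis of symmetry $OD$, uses $\Phi'\le0$ and $\Phi''\ge0$ from \eqref{Uinteraction} to show that the component $v_2$ along $OD$ is \emph{concave} in $t$, and then compares the distance $d=\int_0^{T/2}v_2\,dt$ swept along $OD$ with the triangle bound $\tfrac12 v_2(0)\cdot\tfrac T2$ and the geometric estimate $d<\sigma\sin\theta$. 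Together with $v_2(0)=v\sin\theta$ this yields $T<4\sigma/v$. Without that (or some substitute giving an explicit constant), your item (3) does not close.

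A second, smaller issue is in item (1): invoking ``continuous dependence of the Hamiltonian flow on initial data'' to get continuity of the asymptotic deflection angle $\theta$ in $y_Q$ is not immediate, since $\theta$ is a $t\to\infty$ quantity and flow continuity is a compact-time-interval statement. The paper handles this via an explicit integral formula \eqref{phiangle} for the deflection angle as a function of the impact parameter and a dominated convergence argument (Lemma before Corollary \ref{continuous dependence}), which is genuinely needed because the integrand has an integrable singularity at $r_{\min}(\alpha)$. Your IVT skeleton is fine once this continuity is actually secured, and (given the time bound from Lemma \ref{Lem: CollisionTimeEstimate}) the post-interaction motion is free, so one can also patch it by combining finite-time flow continuity with the uniform interaction-time bound; but as written the step is unjustified.
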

\begin{proof}
(1) 
Consulting Figure \ref{Collision triangle} (which is \cite{LL}'s Figure 17, p.\,47, in our notation),
for $\theta$ the deflection angle from $\bold v_P$ to $\bold v_{P}'$, conservation of momentum and energy gives the formulas of $\bold v_P'$ and $\bold v_Q'$. 
\begin{figure}
\begin{tikzpicture}
  \coordinate []  (0) at (-1,0);
  \coordinate [] (A) at (-4,0.);
  \coordinate [] (B) at (2.0, 0.);
  \coordinate [] (C) at (1.24, 2.);
  \draw [->,very thick](A) -- (C);
  \draw [->,very thick](A) -- (B);
  \draw [->,very thick](C) -- (B);
   \coordinate [label=below:$\bold v_P$] (v) at (-0.7,0.);
  \coordinate [label=below:$\bold v'_P$] (v) at (-1.,2.24);
  \coordinate [label=below:$\bold v'_Q$] (v) at (2.4,1.5);
  \draw[->,very thick]  (-2.5,0.)  arc (0:20:15mm);
  \coordinate [label=above:$\ \ \theta$] (theta) at (-2.2,0);
  \draw (1.24-5.6*0.22/6 , 2-2.14*0.22/6) -- (1.24-3.46*0.22/6,2-7.74*0.22/6) -- (1.24+2.14*0.22/6,2-5.6*0.22/6);
  \coordinate []  (00) at (1,0);
  \coordinate [] (AA) at (3.,0.);
  \coordinate [] (BB) at (9.0, 0.);
  \coordinate [] (CC) at (8.24, -2.);
  \draw [->,very thick](AA) -- (CC);
  \draw [->,very thick](AA) -- (BB);
  \draw [->,very thick](CC) -- (BB);
  \coordinate [label=below:$\bold v_P$] (v) at (7.3,0.);
  \coordinate [label=below:$\bold v'_P$] (v) at (6.,-1.2);
  \coordinate [label=below:$\bold v'_Q$] (v) at (9,-.9);
 \draw[->,very thick]  (4.5,0.)  arc (0:-20:15mm);
  \coordinate [label=right:$\ \ \theta$] (theta) at (4.4,-.3);
 \draw (8.24-5.6*0.22/6 , -2+2.14*0.22/6) -- (8.24-3.46*0.22/6,-2+7.74*0.22/6) -- (8.24+2.14*0.22/6,-2+5.6*0.22/6);
   \end{tikzpicture}
    \caption{The two possible deflection triangles for
    given $|\bold v'_Q|$.}\label{Collision triangle}
\end{figure}
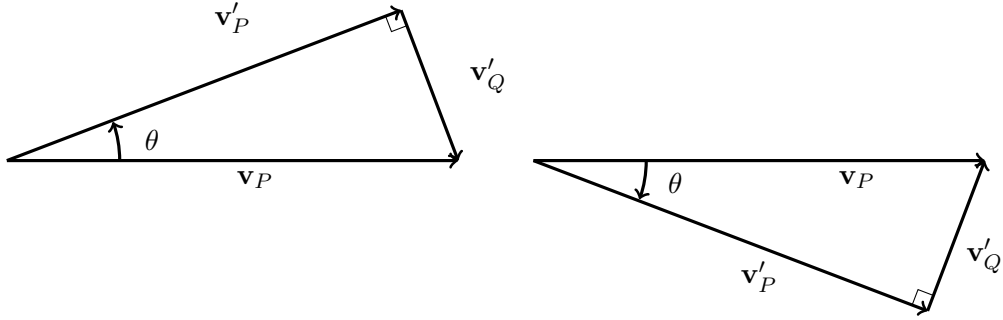
That any $\theta$ in $\left(-\dfrac{\pi}{2},\dfrac{\pi}{2}\right)$ is attained by some $y_Q$ follows from Corollary \ref{continuous dependence} in the Appendix and the formulas in \cite{LL}, \S 13 that show how to transform from motion in a central field to a system of two molecules. 

(2) Let $S$ be the strip between the two lines tangent to $D$ and parallel to $\bold v_P$, $S_{\sigma}$ the set of all points with distance smaller than $\sigma$ from $S$, and $I$ be the interval of intersection of  $S_{\sigma}$  with the line $x=x_0+d$, see Figure  \ref{fig: the single step}. Then if $Q$ has second coordinate anywhere out of $I$, $P$ ignores it and continues with unaltered velocity ${\bold v}_P$. Elementary geometry shows that $I$ has midpoint $y_0+d \tan\phi$ and half-length $\dfrac{r+\sigma}{\cos\phi}$, consult Figure \ref{fig: the single step}.

(3)
By Lemma \ref{Lem: CollisionTimeEstimate} in the Appendix, when $P$ and $Q$ interact, their interaction time is less than $\dfrac {4\sigma}{v}$ and by conservation of energy ($\Phi$ is positive) the speed of $Q$ will never be more than $v$ during interaction. 
Therefore, during interaction $Q$ travels less than $4 \sigma$, i.e.\ it stays in the disc centered at $\left(x_0+d,y_0+ d \tan \phi\right)$ with radius $\dfrac{r+\sigma}{\cos\phi}+ 4 \sigma$. 
As the distance between $P$ and $Q$ is always less than $\sigma$ during interaction, $P$ is always inside the circle centered at $\left(x_0+d,y_0+ d \tan \phi\right)$ with radius $\dfrac{r+\sigma}{\cos\phi}+ 5 \sigma$.
\end{proof}

\subsection{A system of molecules on the plane}\label{subsection: a system of molecules}
We describe now a system consisting of $N+1$ molecules $P, Q_k$, $k=1,\ldots,N$ where $P$ interacts (only once) with each $Q_k$ (in the order of increasing $k$) and interactions are independent ($P$ does not interact with $P_j$, $j\neq k$, when interacting with $P_k$, and there is no interaction between the $Q_k$'s). In addition, the moment before interacting with $Q_k$ the speed of $P$ will be greater than $1$ and the speed of $Q_k$ after interaction will be $1$.

We use $\theta_k$ for the deflection angle of $P$ due to the interaction with $Q_k$. Assume that before interacting with $Q_1$, $P$ moves along the $x$-axis. Then $ \displaystyle \phi_k=\sum_{j=1}^k\theta_j$ will be the angle from the $x$-axis to the direction of the velocity of $P$ right after its interaction with $Q_k$. The angle from the $x$-axis to the direction of the velocity of $Q_k$ after its interaction with $P$ will be denoted by $\hat \phi_k$. By Figure \ref{Collision triangle}, $\hat{\phi}_k =(-1)^{k+1}  \dfrac{\pi}{2} + \phi_k$.

\begin{Lem}\label{theta&phi}
For $N \in \N$ fixed and $k=1,2,\ldots,N$, let 
\begin{equation} \label{eq:theta&phi}
\begin{split}
     \theta_k=(-1)^k\arcsin\dfrac{1}{\sqrt{N+2-k}}, \quad
     \displaystyle\phi_k=\sum_{j=1}^k\theta_j.
\end{split}
\end{equation} 
Then 
\begin{enumerate}
\item $\phi_k<0$, when $k$ is odd and $\phi_k>0$, when $k$ is even,
\item $\left|\phi_k\right|<\left|\phi_{k+2}\right|$,
\item $\left|\phi_1\right| = \left| \theta_1 \right| \leq \dfrac{\pi}4$ and $\left|\phi_k\right| < \left| \theta_k \right| \leq \dfrac{\pi}4$,  for $k>1$.
\end{enumerate}
\end{Lem}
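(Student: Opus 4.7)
The plan is to treat (1) and (3) together by induction on $k$, and then derive (2) as a short corollary. Two features of the sequence $\theta_k = (-1)^k \arcsin\bigl(1/\sqrt{N+2-k}\bigr)$ drive the argument: first, $|\theta_k|$ is strictly increasing in $k$, because $k \mapsto N+2-k$ is decreasing and $\arcsin$ is increasing on $[0,1]$; second, $\theta_k$ has sign $(-1)^k$. The bound $|\theta_k| \le \pi/4$ in (3) is immediate from $N+2-k \ge 2$, i.e.\ $|\theta_k| \le \arcsin(1/\sqrt 2) = \pi/4$.

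For the induction, the base case $k=1$ gives $\phi_1 = \theta_1 = -\arcsin(1/\sqrt{N+1}) < 0$, and $|\phi_1| = |\theta_1| \le \pi/4$. For the inductive step, assume (1) holds at $k-1$ and (3) holds at $k-1$. Then $\phi_{k-1}$ has sign $(-1)^{k-1}$, which is opposite to the sign of $\theta_k$; and by induction plus the monotonicity of $|\theta_j|$,
\begin{equation*}
     |\phi_{k-1}| \;\le\; |\theta_{k-1}| \;<\; |\theta_k|,
\end{equation*}
where the first inequality is an equality only when $k-1 = 1$. Consequently
\begin{equation*}
     |\phi_k| \;=\; \bigl|\theta_k + \phi_{k-1}\bigr| \;=\; |\theta_k| - |\phi_{k-1}|,
\end{equation*}
which is strictly positive (confirming the correct sign $(-1)^k$, hence (1) at $k$) and strictly less than $|\theta_k|$ (confirming (3) at $k$; note that since $|\phi_{k-1}| > 0$ whenever $k-1 \ge 1$, the inequality is strict for $k > 1$, as stated).

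Finally, (2) follows at once. Writing $\phi_{k+2} - \phi_k = \theta_{k+1} + \theta_{k+2}$ and using $|\theta_{k+2}| > |\theta_{k+1}|$ together with opposite signs, the sum has the sign of $\theta_{k+2} = (-1)^{k+2} = (-1)^k$, which by (1) is also the sign of $\phi_k$. Therefore
\begin{equation*}
     |\phi_{k+2}| \;=\; |\phi_k| + \bigl||\theta_{k+2}| - |\theta_{k+1}|\bigr| \;>\; |\phi_k|.
\end{equation*}

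There is no genuine obstacle here; the only point requiring care is bookkeeping the base case $|\phi_1| = |\theta_1|$ (equality) versus the strict inequality $|\phi_k| < |\theta_k|$ for $k > 1$, so one must keep $|\theta_{k-1}| < |\theta_k|$ as the strict step that rescues the strict inequality in (3).
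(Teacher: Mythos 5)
Your proof is correct, but it is organized differently from the paper's. The paper proves (1) by a direct pairing argument, grouping the sum defining $\phi_k$ as $\theta_1+(\theta_2+\theta_3)+\cdots$ for odd $k$ (and as $(\theta_1+\theta_2)+\cdots$ for even $k$), and using the strict increase of $|\theta_j|$ to show each bracketed pair has the correct sign; it then proves (3) separately for $k>1$, using (1) to see that $\phi_{k-1}$ and $\theta_k$ have opposite signs. You instead run a single induction that establishes (1) and (3) simultaneously: the inequality $|\phi_{k-1}|<|\theta_k|$ (the content of (3) at $k-1$, promoted by monotonicity of $|\theta_j|$) immediately yields both the sign of $\phi_k=\phi_{k-1}+\theta_k$ and the bound $|\phi_k|=|\theta_k|-|\phi_{k-1}|<|\theta_k|$. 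This unification is a nice economy: the paper's pairing argument for (1) never reappears, whereas your induction step is exactly the paper's (3)-step, used twice over. Both treat (2) identically, by observing that $\theta_{k+1}+\theta_{k+2}$ shares the sign of $\phi_k$. Your bookkeeping of the equality in the base case $|\phi_1|=|\theta_1|$ versus strict inequality for $k>1$ is handled correctly.
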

\begin{proof}[Proof of Lemma  \ref{theta&phi}] 
For (1), observe that the $\theta_k$'s start negative, increase strictly in absolute value and alternate sign. Therefore for $k$ odd and $k>1$
\begin{equation}
    \begin{split}
          \phi_{k}
          =
          \theta_1
          +\left( \theta_2+\theta_3 \right)
          +\ldots
          +\left(  \theta_{k-1}+\theta_k \right)
          <
          \theta_1<0,
    \end{split}
\end{equation}
whereas for $k$ even
\begin{equation}
    \begin{split}
          \phi_{k}
          =
          \left( \theta_1+\theta_2 \right)
          +\ldots
          +\left( \theta_{k-1}+\theta_k \right)
          >0.
    \end{split}
\end{equation}
For (2), notice that 
$\theta_{k+1}+\theta_{k+2}$ always has the same sign as $\phi_k$, hence 
\begin{equation}
    \begin{split}
        \left|\phi_{k+2}\right|
        =
        \left|\phi_k\right|+\left|\theta_{k+1}+\theta_{k+2}\right|>\left|\phi_k\right|.
    \end{split}
\end{equation}
$(3)$ For $1\leq k\leq N$
\begin{equation} \label{eq: estimate on theta}
    \begin{split}
         \left|\theta_k\right|=\arcsin\dfrac{1}{\sqrt{N+2-k}}\leq \arcsin\dfrac{1}{\sqrt 2}=\dfrac {\pi}{4}.
    \end{split}
\end{equation}
 For $k$ is odd and $k>1$
\begin{equation} \label{phi vs. theta odd}
    \begin{split}
          \left|\phi_{k}\right|=-\phi_{k}=-\phi_{k-1}-\theta_{k}<-\theta_{k}\leq\dfrac{\pi}{4},
    \end{split}
\end{equation}
whereas for $k$ even
\begin{equation} \label{phi vs. theta even}
          \left|\phi_{k}\right|=\phi_{k}=\phi_{k-1}+\theta_{k}<\theta_{k}\leq\dfrac{\pi}{4}. \qedhere
\end{equation}
\end{proof}
Lemma \ref{theta&phi} shows that the even $\phi_k$'s are positive, increasing, and never more than $\pi/4$ (and therefore the even $\hat\phi_k$'s are negative, increasing, and never more than $-\pi/4$), whereas the odd $\phi_k$'s are negative, decreasing, and never less than $-\pi/4$ (and therefore the odd $\hat\phi_k$'s are positive, decreasing, and never less than $\pi/4$).
Figure \ref{angles} summarizes the behavior of $\phi_k$ and $\hat\phi_k$.
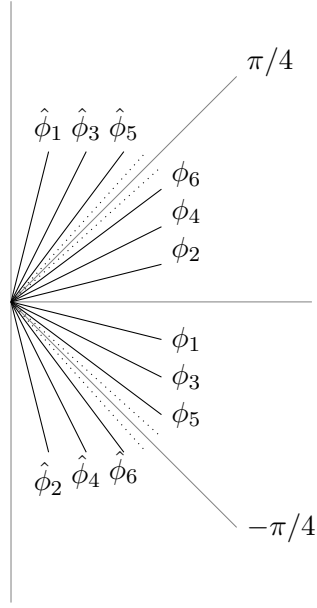
\begin{figure}
\begin{tikzpicture}
\draw   [help lines] (0,-4)  -- (0,4);
\draw   [help lines] (0,0)  -- (4,0);
\draw  [help lines] (0, 0)--(3,3); 
\coordinate [label=right:${\pi/4}$]  (pi/4) at (3,3.2);
\draw    (0, 0)--(2,1); 
\coordinate [label=right:${\phi_4}$]  (phi4) at (2,1.2);
\draw   (0, 0)--(2,1.5); 
\coordinate [label=right:${\phi_6}$]  (phi6) at (2,1.7);
\draw  (0, 0)--(2,.5); 
\coordinate [label=right:${\phi_2}$]  (phi2) at (2,.7);
\draw  [dotted] (0, 0)--(2,1.8);
\draw  [dotted] (0, 0)--(2,-1.8); 
\draw  [help lines] (0, 0)--(3,-3); 
\coordinate [label=right:${-\pi/4}$]  (-pi/4) at (3,-3.);
\draw   (0, 0)--(2,-1); 
\coordinate [label=right:${\phi_3}$]  (phi3) at (2,-1.);
\draw   (0, 0)--(2,-1.5); 
\coordinate [label=right:${\phi_5}$]  (phi5) at (2,-1.5);
\draw  (0, 0)--(2,-.5); 
\coordinate [label=right:${\phi_1}$]  (phi1) at (2,-.5);
\draw   (0, 0)--(1,-2); 
\coordinate [label=below:${\hat\phi_4}$]  (hatphi4) at (1,-1.9);
\draw   (0, 0)--(1.5, -2);
\draw   [dotted] (0, 0)--(1.8, -2);
\draw   [dotted] (0, 0)--(1.8, 2); 
\coordinate [label=below:${\hat\phi_6}$]  (hatphi6) at (1.5, -1.85);
\draw  (0, 0)--(.5, -2); 
\coordinate [label=below:${\hat\phi_2}$]  (hatphi2) at (.5, -2); 
\draw   (0, 0)--(1,2); 
\coordinate [label=above:${\hat\phi_3}$]  (hatphi3) at (1,2);
\draw   (0, 0)--(1.5,2); 
\coordinate [label=above:${\hat\phi_5}$]  (hatphi5) at (1.5,2);
\draw  (0, 0)--(.5,2); 
\coordinate [label=above:${\hat\phi_1}$]  (hatphi1) at (.5,2);
\end{tikzpicture}
\caption{The angles $\phi_k$ and $\hat{\phi}_k$. Observe how the even/odd $\phi_k$'s and the even/odd $\hat\phi_k$'s fall into four non-overlapping sectors.} \label{angles}
\end{figure} 

In the description of the interaction of $P$ and $Q_k$,
for $\phi_j$ as in \eqref{eq:theta&phi} (assuming $\phi_0 = 0$), the point
\begin{equation} \label{eq:x_k,y_k}
\begin{split}
     (x_k, y_k)
     =
     \left(\
     \dfrac kN, 
     \ \dfrac 1N \sum_{j=0}^{k-1}\tan\phi_j
     \ \right)
\end{split}
\end{equation} 
will play the same role as $(x_0+d , y_0+ d \tan \phi)$ in Lemma \ref{successive collision}. 
The segments and half-lines
\begin{equation} \label{eq:mathcalQ_n}
\begin{split}
      \mathcal{P}_k
        &:=
        \left\{  
        (x,y):
        x_k \leq x\leq x_{k+1},\
        y = \left(x-x_k\right)\tan \phi_k \,+y_k
        \right\}, \ k = 1, \ldots, N-1,
        \\
        \mathcal{P}_N
        &:=
        \left\{  
        (x,y):
        x_N \leq x,\
        y = \left(x-x_N\right)\tan \phi_N \,+y_N
        \right\},
\\
        \mathcal{Q}_k
        &:=
        \left\{
        (x,y): x\geq x_k,
        \
         y = (x-x_k)\tan \hat\phi_k\,+y_k  
        \right\}, \ k = 1, \ldots, N,
            \end{split}
\end{equation}
will be useful in describing the trajectories of $P$ and of each $Q_k$, respectively, see Figure \ref{Fig: P_nQ_n}. 
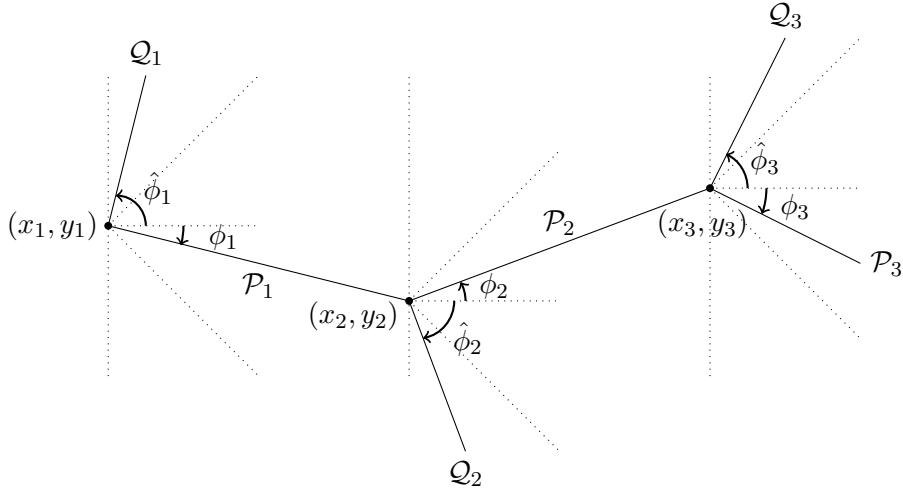
\begin{figure}
\begin{center}
\begin{tikzpicture}
\draw [dotted] (0,0) -- (0,4);
\draw [dotted] (4,0) -- (4,4);
\draw [dotted] (8,0) -- (8,4);
\draw[->, thick]  (1,2)  arc (0:-15:10mm);
\draw[->, thick]  (4.75,1)  arc (0:30:5mm);
\draw[->, thick]  (8.75,2.5)  arc (0:-20:10mm);
\draw[->, thick]  (0.5,2)  arc (0:90:4mm);
\draw[->, thick]   (4.6,1)  arc (0:-80:5mm);
\draw[->, thick]  (8.5,2.5)  arc (0:65:5mm);
\coordinate [label=right:${\phi_1}$]  (phi1) at (1.2,1.85);
\coordinate [label=right:${\phi_2}$]  (phi2) at (4.8,1.2);
\coordinate [label=right:${\phi_3}$]  (phi3) at (8.8,2.25);
\coordinate [label=left:${\hat\phi_1}$]  (hatphi1) at (1,2.5);
\coordinate [label=right:${\hat\phi_2}$]  (hatphi2) at (4.45,0.5);
\coordinate [label=right:${\hat\phi_3}$]  (hatphi3) at (8.4,2.9);
\draw [fill] (0,2) circle [radius=0.045];
\coordinate [label=left:${(x_1,y_1)}$]  (Q1) at (0,2);
\draw [fill] (4,1) circle [radius=0.045];
\coordinate [label=left:${(x_2,y_2)}$]  (Q2) at (4,0.8);
\draw [fill] (8,2.5) circle [radius=0.045];
\coordinate [label=below:${(x_3,y_3)}$]  (Q3) at (7.9,2.35);
\draw [dotted] (0,2) -- (2,4);
\draw [dotted] (0,2) -- (2,0);
\draw [dotted] (0,2) -- (2,2);
\draw [dotted] (4,1) -- (6,3);
\draw [dotted] (4,1) -- (6,1);
\draw [dotted] (4,1) -- (6,-1);
\draw [dotted] (8,2.5) -- (10,4.5);
\draw [dotted] (8,2.5) -- (10,2.5);
\draw [dotted] (8,2.5) -- (10,0.5);
\draw [solid] (0,2) -- (4,1) -- (8,2.5);
\coordinate [label=below:${\mathcal{P}_1}$]  (Q1) at (2,1.5);
\coordinate [label=above:${\mathcal{P}_2}$]  (Q2) at (6,1.75);
\draw [solid] (0,2) -- (0.5,4);
\coordinate [label=above:${\mathcal{Q}_1}$]  (P1) at (0.5,4);
\draw [solid] (4,1) -- (4.75,-1);
\coordinate [label=below:${\mathcal{Q}_2}$]  (P2) at (4.75,-1);
\draw [solid] (8,2.5) -- (9,4.5);
\coordinate [label=above:${\mathcal{Q}_3}$]  (P3) at (9,4.5);
\draw [solid] (8,2.5) -- (10,1.5);
\coordinate [label=right:${\mathcal{P}_3}$]  (Q3) at (10,1.5);
\end{tikzpicture}
\end{center}
\caption{First few segments $\mathcal{P}_n$ and half-lines $\mathcal{Q}_n$.}
\label{Fig: P_nQ_n}
\end{figure}
Define the distance between any two of these sets as 
\begin{equation}
\begin{split}
     d\left( \mathcal{A}, \mathcal{B}\right)
     :=
     \inf
      \left\{
      \|a -b \| :a\in\mathcal A,\ b\in \mathcal B
     \right\}.
\end{split}
\end{equation} 

\begin{Lem} \label{P_n Q_n}
Let $m,n=1,2,\ldots,N$.
For $\mathcal{Q}_m$, $\mathcal{P}_n$ as above, 
\begin{equation} \label{distance fm fn hn}
    \begin{split}
       d\left( \mathcal{Q}_m, \mathcal{Q}_n\right) &> \frac1N, \quad m \neq n,\\
       d\left( \mathcal{Q}_m, \mathcal{P}_n\right)&> \frac1N, \quad m < n.
    \end{split}
\end{equation}
\end{Lem}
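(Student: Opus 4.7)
The proof rests on three structural facts: each $\mathcal{P}_k$ is \emph{shallow} and each $\mathcal{Q}_k$ is \emph{steep}, $|\tan\phi_k| < 1 < |\tan\hat\phi_k|$, which follows from Lemma~\ref{theta&phi}(3) together with $\hat\phi_k = \phi_k \pm \pi/2$; the horizontal spacing $x_{k+1} - x_k = 1/N$; and the telescoping formula $y_n - y_m = \tfrac{1}{N}\sum_{j=m}^{n-1}\tan\phi_j$. The main algebraic tool is the trigonometric identity
\[
\cos\hat\phi_k \, (\tan\hat\phi_k - \tan\phi_j)
= \pm\frac{\cos(\phi_k - \phi_j)}{\cos\phi_j},
\]
with sign $+$ for $k$ odd and $-$ for $k$ even, which follows immediately from $\sin\hat\phi_k = \pm\cos\phi_k$ and $\cos\hat\phi_k = \mp\sin\phi_k$.

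For $d(\mathcal{Q}_m, \mathcal{Q}_n)$ with $m<n$ I first verify, by solving explicitly for the intersection abscissa, that the lines supporting the two half-lines meet at some $x^* < x_n$ in all four parity combinations, so the half-lines themselves are disjoint and
\[
d(\mathcal{Q}_m, \mathcal{Q}_n)
= \min\bigl(d((x_m, y_m), \mathcal{Q}_n),\; d((x_n, y_n), \mathcal{Q}_m)\bigr).
\]
For each of these two endpoint-to-halfline distances, either the perpendicular foot fails to lie on the halfline and the distance reduces to the endpoint-to-endpoint distance $\sqrt{\alpha^2 + \beta^2}$ with $\alpha = (n-m)/N \geq 1/N$ and $\beta = y_n - y_m$, which is strict $>1/N$ either because $n-m \geq 2$ or, when $n-m=1$, because $\beta = \tan\phi_m/N \neq 0$; or the perpendicular foot is on the halfline and the identity applied term by term to the telescoped $\beta$ gives
\[
d((x_n, y_n), \mathcal{Q}_m)
= \alpha\cos\phi_m + \beta\sin\phi_m
= \frac{1}{N}\sum_{j=m}^{n-1}\frac{\cos(\phi_m - \phi_j)}{\cos\phi_j},
\]
a sum of strictly positive terms whose $j=m$ contribution alone equals $1/(N\cos\phi_m) > 1/N$. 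The symmetric quantity $d((x_m, y_m), \mathcal{Q}_n) = |\alpha\cos\phi_n + \beta\sin\phi_n|$ does not have a distinguished $j$-term, but a direct case analysis shows that its perpendicular-foot regime forces the sign of $\beta\sin\phi_n$ to be favorable, so in this regime the distance is at least $\alpha\cos\phi_n \geq \alpha\sqrt{2}/2 \geq \sqrt{2}/N > 1/N$ since it only occurs for $n-m\geq 2$.

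For $d(\mathcal{Q}_m, \mathcal{P}_n)$ with $m<n$, the same line-crossing analysis gives $x^* < x_n$, so the segment $\mathcal{P}_n \subset [x_n, x_{n+1}]$ is entirely to the right of the line intersection and $\mathcal{Q}_m$ stays strictly on one side of $\mathcal{P}_n$. The distance is then the minimum of the three endpoint distances $d((x_n, y_n), \mathcal{Q}_m)$, $d((x_{n+1}, y_{n+1}), \mathcal{Q}_m)$, and $d((x_m, y_m), \mathcal{P}_n)$. The first two fall under the identity above (with the sum extending to $j=n$ in the second), so each exceeds $1/N$ via its $j=m$ term. For the third, a projection computation using $|\beta|\leq\alpha$ and $|\tan\phi_n|<1$ shows that the perpendicular foot from $(x_m, y_m)$ onto the line of $\mathcal{P}_n$ always lies strictly to the left of $(x_n, y_n)$, so the minimum is the endpoint distance $\sqrt{\alpha^2 + \beta^2} \geq \alpha \geq 1/N$, strict by the same $\beta\neq 0$ argument. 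The main obstacle throughout is the bookkeeping across parity combinations and perpendicular-vs.-endpoint sub-cases, but every case is unified by the single identity above, and the borderline strictness $>1/N$ (as opposed to $\geq 1/N$) hinges only on the elementary observation that $\beta = \tan\phi_m/N \neq 0$ when $n-m=1$.
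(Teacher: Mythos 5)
Your proof is correct, but it takes a genuinely different route from the paper's. The paper argues geometrically: using the relations among the angles $\phi_k$ and $\hat\phi_k$ (read off Figure \ref{angles}), it observes that the closest point of $\mathcal{Q}_n$ (resp.\ $\mathcal{P}_n$) to $\mathcal{Q}_m$, $m<n$, is precisely the endpoint $(x_n,y_n)$, and then bounds $d\left((x_n,y_n),\mathcal{Q}_m\right)\geq |\mathcal{P}_m|=\dfrac{1}{N\cos\phi_m}>\dfrac1N$ by exploiting that $\mathcal{P}_m$ and $\mathcal{Q}_m$ are perpendicular at their common endpoint. Your argument replaces this with explicit coordinates: the telescoping $y_n-y_m=\frac1N\sum_{j=m}^{n-1}\tan\phi_j$ plus the trigonometric identity yields
\[
\alpha\cos\phi_m+\beta\sin\phi_m \;=\;\frac1N\sum_{j=m}^{n-1}\frac{\cos(\phi_m-\phi_j)}{\cos\phi_j},
\]
a sum of strictly positive terms ($|\phi_m-\phi_j|<\pi/2$), whose $j=m$ term is exactly the paper's $|\mathcal{P}_m|$ --- so the two arguments converge on the same lower bound, derived algebraically rather than by perpendicularity. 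The trade-off: because you take $d(\mathcal{Q}_m,\mathcal{Q}_n)=\min\bigl(d((x_m,y_m),\mathcal{Q}_n),\,d((x_n,y_n),\mathcal{Q}_m)\bigr)$ without singling out which term realizes the minimum, you also have to bound the ``wrong-way'' quantity $d((x_m,y_m),\mathcal{Q}_n)$, which the paper's sharper geometric observation avoids entirely; this is what forces your extra parity and foot-on-ray case analyses. Those claims (that $x^*<x_n$ in all four parity combinations, that for $n-m=1$ the foot from $(x_m,y_m)$ never lands on $\mathcal{Q}_n$, and that the foot from $(x_m,y_m)$ onto the line of $\mathcal{P}_n$ sits to the left of $(x_n,y_n)$, the last of these following from $|\beta|<\alpha$ and $|\phi_n|<\pi/4$ giving $|\beta\sin\phi_n|<\alpha/\sqrt2<\alpha\cos\phi_n$) are all true and straightforward, but they are asserted rather than shown; in a final write-up they would need to be spelled out. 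One further small point: $\mathcal{P}_N$ is a half-line, not a segment, so the ``three endpoint distances'' bookkeeping degenerates to two in that case --- your foot-position argument still applies, but the phrasing should allow for it.
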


\begin{proof}[Proof of Lemma \ref{P_n Q_n}] 
Recalling \eqref{eq:mathcalQ_n}, we use here ``right half plane of $\mathcal{Q}_n$"  to mean the half-plane to the right of the $y$-axis defined by: $\mathcal{Q}_n$ is the positive $y$-axis when $\hat\phi_n$ is positive; $\mathcal{Q}_n$ is the negative $y$-axis when $\hat\phi_n$ is negative.

Observe first that for any fixed $n$, the point $(x_n,y_n)$ is always in the right half plane of $\mathcal{Q}_m$ for all $m<n$ : this holds by the relation of the angles $\phi_i$ to the angles $\hat\phi_j$, see Figure \ref{angles} and Figure \ref{Fig: P_nQ_n}. 

To get the first estimate in \eqref{distance fm fn hn}, it suffices to consider $n>m$. If $n-m$ is even, then the angle of $\mathcal{Q}_n$ (i.e.\,$\hat \phi_n$) is of smaller absolute value than the angle of $\mathcal{Q}_m$ (i.e.\,$\hat \phi_m$). If $n-m$ is odd, then the angles of  $\mathcal{Q}_m$ and  $\mathcal{Q}_n$ differ by more than $\pi/2$. In either case the point on $\mathcal{Q}_n$ closest to $\mathcal{Q}_m$ is $(x_n, y_n)$.

Similarly, the angle of $\mathcal{P}_n$ (i.e.\,$\phi_n$), is always of absolute value smaller than the angle of any $\mathcal{Q}_m$  (i.e.\,$\hat \phi_m$). Therefore 
the point on $\mathcal{P}_n$  closest to $\mathcal{Q}_m$, for $m< n$, is $(x_n, y_n)$.

Now it suffices to notice that the distance from $(x_n,y_n)$ to each $Q_m$ is greater or equal to $|\mathcal P_m|$ which is clearly bigger than $\dfrac 1N$ (consult Figure \ref{angles} and Figure \ref{Fig: P_nQ_n}).
\end{proof}
We are now ready to establish the evolution of $P,Q_1,\ldots,Q_N$.
\begin{Prop} \label{Prop: The System}
For each $N\in \N$ and $\displaystyle \sigma_{N} < \frac{1}{2\sqrt{2}}\frac{1}{N(N+3)^{3/2}}$, consider the system $P,Q_1,\ldots,Q_N$ with interaction $\Phi_{\sigma_N}$. 
Then there exist $y_{Q_k}$'s, $k=1,\ldots,N$, such that the system evolves as follows: for $t\leq 0$, 
\begin{equation}  
\begin{split}\label{P_k initial setting}
     P(t)=t\bold v_P,\quad   &\bold v_P(t)=\left(\sqrt{N+1},0\right), \\
     Q_k(t)=\left(\dfrac kN,y_{Q_k}\right),\quad &\bold 
     v_{Q_k}(t)=(0,0), \quad k=1,\ldots, N,
\end{split}
\end{equation}
and for $t>0$,
\begin{enumerate}
\item
There exist times $0<t_1'<t_1''<t_2'<t_2''<\ldots<t_N'<t_N''$ such that
for any $1\leq k\leq N$, $P$ starts to interact with $Q_k$ at $t=t_k'$ and completes this interaction at $t=t_k''$.
\item
For any $1\leq k\leq N$, the molecule $Q_k$ does not interact with any other molecule for $t<t_k'$ or $t>t_k''$ and its velocity is given by
        \begin{equation} \label{v_{Q_k}}
        \begin{split}
        \bold v_{Q_k}(t)=
         \begin{cases}
        (0,0) &t<t_k'  \\
        \left(\sin{|\phi_k|},(-1)^{k+1}\cos{\phi_k}\right)  & t>t_k'',
        \end{cases}
        \end{split}
        \end{equation}
for $\phi_k$ as in \eqref{eq:theta&phi}. 
\item
The velocity of $P$ satisfies
       \begin{equation} \label{eq: velocity of P}
        \begin{split}
        \bold v_P(t)=
         \begin{cases}
         (\sqrt{N+1},0)&t \leq t_1' \\
       \sqrt{N+1-k}\left(\cos{\phi_k},\sin{\phi_k}\right) &t_{k}''\leq t \leq t_{k+1}', k=1,2,\ldots,N-1\\
        \left(\cos{\phi_N},\sin{\phi_N}\right) &t \geq t_N'',
        \end{cases}
        \end{split}
        \end{equation}
for $\phi_k$ as in \eqref{eq:theta&phi}.
\item
During the time interval $[t_k', t_k'']$ for $1\leq k \leq N$ the molecules $P$ and $Q_k$ are in the disc of center $(x_k,y_k)$ as in \eqref{eq:x_k,y_k} and radius given recursively by 
\begin{equation}\label{r_k r_{k-1}}
\begin{split}
       r_k = \dfrac{r_{k-1}+\sigma_{N}}{\cos \phi_{k-1}} + 5 \sigma_{N},
       \quad
        r_0 =0.
\end{split}
\end{equation}
In particular, 
\begin{equation}\label{estimate of r_k}
   \begin{split}
r_k < 2\sqrt 2 (N+3)^{3/2}\sigma_N.
   \end{split}
\end{equation}
\end{enumerate}
\end{Prop}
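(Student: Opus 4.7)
The strategy is induction on $k$, with each step reducing to a single application of Lemma~\ref{successive collision} to the pair $(P,Q_k)$. The induction hypothesis is that at the end of stage $k-1$, the molecule $P$ lies in a disc of radius $r_{k-1}$ centered at $(x_{k-1},y_{k-1})$ with velocity $\sqrt{N+2-k}\,(\cos\phi_{k-1},\sin\phi_{k-1})$; the scalar factor arises because the cosines telescope, $\prod_{j=1}^{k-1}\cos\theta_j=\sqrt{(N+2-k)/(N+1)}$, reducing the initial speed $\sqrt{N+1}$ to exactly $\sqrt{N+2-k}$. Applying Lemma~\ref{successive collision} with this disc as $D$, with $d=1/N$ so that $Q_k$ sits on $x=x_k$, and with target deflection $\theta=\theta_k$, part~(1) produces $y_{Q_k}$ and the post-interaction velocities; using $\sqrt{N+2-k}\,|\sin\theta_k|=1$ and $\phi_{k-1}+\theta_k=\phi_k$ these become exactly \eqref{v_{Q_k}}--\eqref{eq: velocity of P}. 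Part~(3) confines $P$ and $Q_k$ during interaction to a disc of radius $r_k$ around $(x_k,y_k)$ satisfying \eqref{r_k r_{k-1}}, verifying the hypothesis for $k+1$.

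The induction step requires two non-interference checks: during $P$'s flight along $\mathcal P_{k-1}$ only $Q_k$ is accessible, and after each interaction $Q_k$ (moving along $\mathcal Q_k$) stays away from all other trajectories. Lemma~\ref{P_n Q_n} supplies $d(\mathcal Q_m,\mathcal Q_n)>1/N$ and $d(\mathcal Q_m,\mathcal P_n)>1/N$ for $m<n$; the symmetric case $m>n+1$ with $Q_m$ still at rest is immediate because $x_m\geq x_{n+2}$ lies beyond $\mathcal P_n$, while $Q_{n+1}$ is the intended next target. By the polynomial bound \eqref{estimate of r_k} and the choice $\sigma_N<1/(2\sqrt 2 N(N+3)^{3/2})$, each interaction disc has radius $r_k<1/N$, and $r_k+\sigma_N$ is small enough that the interaction disc around $(x_k,y_k)$ is separated by more than $\sigma_N$ from every $\mathcal P_n$ and $\mathcal Q_m$ with $m,n\neq k$. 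Consequently the interactions occur in the order $t_1'<t_1''<\cdots<t_N'<t_N''$, each $Q_k$ is hit by $P$ exactly once, and no two $Q_j$'s ever collide.

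To establish \eqref{estimate of r_k}, unfold \eqref{r_k r_{k-1}} from $r_0=0$ to obtain
\begin{equation*}
r_k \;=\; \frac{7\sigma_N}{\prod_{i=1}^{k-1}\cos\phi_i}\;+\;6\sigma_N\sum_{j=2}^{k-1}\frac{1}{\prod_{i=j}^{k-1}\cos\phi_i}\;+\;5\sigma_N.
\end{equation*}
Lemma~\ref{theta&phi}(3) gives $|\phi_i|\leq|\theta_i|$, hence $\cos\phi_i\geq\cos\theta_i=\sqrt{(N+1-i)/(N+2-i)}$, and these cosines telescope to $\prod_{i=j}^{k-1}\cos\phi_i\geq\sqrt{(N+2-k)/(N+2-j)}$. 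Substituting and comparing $\sum_{j=2}^{k-1}\sqrt{N+2-j}$ with $\int_0^{N+1}\sqrt{x}\,dx=\tfrac23(N+1)^{3/2}$ yields the claim. The main obstacle is precisely this telescoping: the naive estimate $\cos\phi_i\geq 1/\sqrt 2$ would produce $r_k\sim(\sqrt 2)^k\sigma_N$, exponentially larger than what the polynomial hypothesis on $\sigma_N$ can accommodate, so the refined bound from Lemma~\ref{theta&phi}(3), which exploits that most $|\phi_i|$ are in fact very small, is essential.
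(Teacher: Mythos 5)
Your proposal follows the same structure as the paper's proof: induction on $k$, applying Lemma~\ref{successive collision} at each step to determine $y_{Q_k}$, the velocities, and the containing disc; invoking Lemma~\ref{P_n Q_n} plus the choice of $\sigma_N$ for non-interference; and obtaining \eqref{estimate of r_k} by unrolling the recursion \eqref{r_k r_{k-1}} and using $|\phi_j|\leq|\theta_j|$ together with the telescoping $\prod_{m=j}^{k-1}\sec\theta_m=\sqrt{(N+2-j)/(N+2-k)}$. Your closed-form unfolding of $r_k$ and the cosine-telescoping estimate match the paper's computation, and your remark about why the crude bound $\cos\phi_i\geq 1/\sqrt 2$ would be fatal is a correct reading of why Lemma~\ref{theta&phi}(3) is the decisive ingredient.
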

\begin{proof}[Proof of Proposition \ref{Prop: The System}.]
For all $t\leq 0$ and any choice of $y_{Q_k}$, $k=1,\ldots, N$, take $\bold v_p(t)=\left(\sqrt{N+1},0\right)$, $P(t)=t\bold v_P$, $Q_k(t)=\left(\dfrac kN,y_{Q_k}\right)$, and $\bold v_{Q_k}(t)=(0,0)$. For all $\sigma_N<\dfrac1N$, it is clear that $P,Q_1,\ldots, Q_N$ solve the Hamiltonian system for $t\leq 0$ (as there is no interaction). We now specify $y_{Q_k}$'s for the evolution when $t>0$.

Applying Lemma \ref{successive collision}
for $x_0 = y_0=0$, $r = 0$, $\phi = 0$,  $v  = \sqrt{N+1}$, $d = 1/N$ and $\theta=\theta_1 = \phi_1 =-\arcsin(1/\sqrt{N+1})$ there is $y_{Q_1}$ such that $P$ will interact with $Q_1$ and after interaction
\begin{equation}
\begin{split}
     \bold v_P 
     = 
     \sqrt{N}\left(\cos{\phi_1},\sin{\phi_1}\right),\quad
     \bold v_{Q_1} 
     =
     \left(-\sin{\phi_1},\cos{\phi_1}\right). 
\end{split}
\end{equation}
In this way, the position of $Q_1$, depending on $\sigma_N$, is now determined.
The whole interaction, according to Lemma \ref{successive collision}, takes place in the disc of radius $r_1 =6 \sigma_N$ and center $(1/N, 0)$. Let $[t_1',t_1'']$ be the time interval of this interaction. Preparing for the next interaction, make a new choice of $\sigma_N$ so that $r_1=6 \sigma_N < 1/N$, and note that everything in this first step still holds for the new choice of $\sigma_N$.

For induction, fix $k \in \N$ and assume that $r_1$, \ldots,$r_k$ satisfy \eqref{estimate of r_k}, and therefore $r_j < 1/N$, $j=1,\ldots,k$, for all $\sigma_N$ small enough.
Further assume that $y_{Q_1}$,\ldots,$y_{Q_k}$, $t_1'$,\ldots,$t_k'$,
$t_1''$,\ldots,$t_k''$,
$\bold v_{Q_1}(t)$,\ldots, $\bold v_{Q_k}(t)$,
$\bold v_P(t)$, for $t\leq t_k''$, have all been determined and satisfy \eqref{eq: velocity of P} and \eqref{v_{Q_k}}.

Apply 
Lemma \ref{successive collision} for $(x_0,y_0) = (x_k,y_k)$, for $(x_k,y_k)$ as in \eqref{eq:x_k,y_k}, $r = r_k$, $\phi = \phi_k$, $v = \sqrt{N+1-k}$,  $d = 1/N$ and $\theta=\theta_{k+1}$ as in \eqref{eq:theta&phi}, 
to find that $r_{k+1}$ is determined by formula \eqref{estimate of r_k}, to determine $y_{Q_{k+1}}$, the times $t'_{k+1}$, $t''_{k+1}$, and the velocities $\bold v_P(t)$, $\bold v_{Q_{k+1}}(t)$ for $t\in[t'_{k+1}, t''_{k+1}]$ that will satisfy  \eqref{eq: velocity of P} and \eqref{v_{Q_k}}.  
Therefore $Q_{k+1}$ is always in the $r_{k+1}$-neighborhood of $\mathcal{Q}_{k+1}$, as defined in \eqref{eq:mathcalQ_n}.
Choose $\sigma_N$ so that $r_{k+1}$ is smaller than $1/N$. 
Using Lemma \ref{P_n Q_n}, 
$Q_{k+1}$ does not interact with $Q_1$,\ldots,$Q_k$ during the interval $(-\infty, t_{k+1}'']$. 
\vfill
\begin{figure}
\begin{center}
\begin{tikzpicture}
\draw [dotted] (-6,-2.5) -- (-6,1);
\draw [dotted] (-3,-2.5) -- (-3,1);
\draw [dotted] (0,-2.5) -- (0,1);
\draw [dotted] (3,-2.5) -- (3,1);
\draw (-3,0) circle [radius=0.1];
\draw (0,-1.7321) circle [radius=0.2];
\draw (3,-1.4556) circle [radius=0.6];
\draw [solid] (3,-1.4556) -- (2.75,-0.9264);
\draw [dotted] (-6,0) -- (-3,0) -- (0,-1.7321) -- (3,-1.4556);
\coordinate [label=right:$r_3$]  (r3) at (2.85,-1.2);
\node at (-6,0) {.};
\draw [solid] (-6,0) -- (-3.1,0);
\coordinate [label=below:${P}$]  (P) at (-4.5,0);
\draw [->] (-6,0) -- (-4,0);
\draw [->] (-2.9500, 0.0866) -- (-2.4500, 0.9526);
\coordinate [label=right:${Q_1}$]  (Q1) at (-2.4500, 0.9526);
\draw [->] (-2.9500, -0.0866) -- (-1.4500, -0.9526);
\coordinate [label=below:${P}$]  (P) at  (-1.4500, -0.9526);
\draw [solid]  (-1.4500, -0.9526) -- (-0.1926,-1.6781);
\draw [->] (0.06,-1.92284) -- (0.1518, -2.9187);
\coordinate [label=right:$Q_2$]  (Q2) at   (0.1518, -2.9187);
\draw [->]  (0.15,-1.86433) -- (1.5582 ,  -1.7346);
\coordinate [label=below:${P}$]  (P) at (1.5582 ,  -1.7346);
\draw [solid] (1.5582  , -1.7346) -- (2.4337 , -1.6540);
\draw [->]  (3.4305, -1.0377) -- (4.0697  , -0.2687);
\coordinate [label=right:$Q_3$]  (Q3) at  (4.0697  , -0.2687);
\draw [->] (3.4084,-1.8951) -- (4.1774 ,  -2.5343);
\coordinate [label=below:$P$]  (P) at  (4.1774 ,  -2.5343);
\end{tikzpicture}
\end{center}
\caption{Schematics of the system of Proposition \ref{Prop: The System} after three interactions. The radii discs are not up to scale.}
\label{Fig: three circles}
\end{figure}
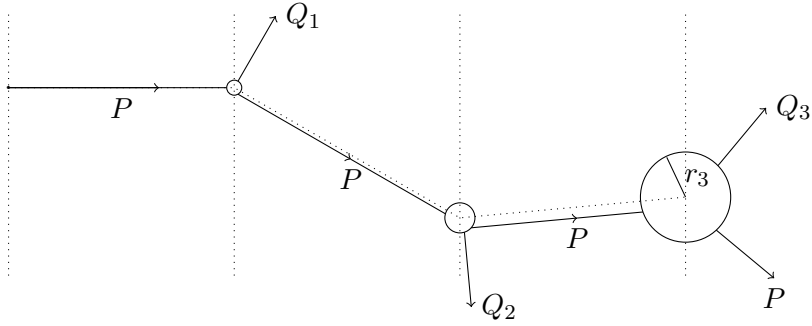

For \eqref{estimate of r_k}, rewrite first \eqref{r_k r_{k-1}} as
\begin{equation}
\begin{split}
         r_k&=\sec \phi_{k-1} r_{k-1} +\sec\phi_{k-1}\sigma_N+5\sigma_N\\
               &=\prod_{j=0}^{k-1} \sec\phi_j r_0 +\sum_{j=0}^{k-1}\prod_{m=j}^{k-1}\sec\phi_m \sigma_N
               +\left(\sum_{j=1}^{k-1}\prod_{m=j}^{k-1}\sec\phi_m+1\right)5\sigma_N\\
\end{split}
\end{equation}
and, using $r_0=0$ and $\left| \phi_j\right|\leq \left| \theta_j \right|$ (Lemma \ref{theta&phi}), estimate this by
\begin{equation}
\begin{split}
               &\leq \sum_{j=0}^{k-1}\prod_{m=j}^{k-1}\sec\theta_m \sigma_N
               +\left(\sum_{j=1}^{k-1}\prod_{m=j}^{k-1}\sec\theta_m+1\right)5\sigma_N, \end{split}
\end{equation}
and then, increasing $k$ to $N$ and using \eqref{eq:theta&phi}, estimate the same by
\begin{equation}
\begin{split}              
               &\leq \sum_{j=0}^{N-1}\dfrac{\sqrt{N+2-j}}{\sqrt 2}\sigma_N
               +\left(\sum_{j=1}^{N-1}\dfrac{\sqrt{N+2-j}}{\sqrt 2}+1\right)5\sigma_N\\
               &\leq 6 \sum_{j=0}^{N-1}\dfrac{\sqrt{N+2-j}}{\sqrt 2}\sigma_N
                 \leq 3\sqrt 2 \sum_{j=3}^{N+2}\sqrt{j}\sigma_N\\
               &\leq 3\sqrt 2 \sigma_N \int_3 ^{N+3} \sqrt x dx<2\sqrt2(N+3)^{3/2}\sigma_N.
\end{split}
\end{equation}
In particular,  $\displaystyle \sigma_N < \frac{1}{2\sqrt{2}}\frac{1}{N(N+3)^{3/2}}$ implies $r_k <1/N$ for all $k$. 
\end{proof}

\begin{Rmk}
Notice that, for each $N$, Proposition \ref{Prop: The System} provides examples of the general theory of Gal'perin and Vaserstein, \cite{G} and \cite{V}, according to which, for finite range interactions, molecules evolve by eventually separating into independent clusters. Each cluster here consists of a single molecule.
\end{Rmk}

\subsection{The limit system as $N \to \infty$} \label{subsection: N to infty}
In the notation of Proposition \ref{Prop: The System}, let 
 $T_N' := \displaystyle \sum_{j = 1}^N (t_j'' - t_j')$, the time during which $P$ interacts with some $Q_k$.
Then $T_N''=t_N'' - T_N' $ is the time during $[0,t_N'']$ when $P$ is not interacting at all.

\begin{Prop} \label{TN to 0} 
$t_N'' \to 0$, as $N \to \infty$.
\end{Prop}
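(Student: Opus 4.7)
I plan to split $t_N''$ into the total time $P$ spends in interaction and the total time it spends in free flight, then bound each separately. With the convention $t_0'' := 0$, write
\begin{equation*}
T_N' = \sum_{k=1}^N (t_k'' - t_k'), \qquad T_N'' = \sum_{k=1}^N (t_k' - t_{k-1}''),
\end{equation*}
so that $t_N'' = T_N' + T_N''$. By Proposition \ref{Prop: The System}(3), the constant speed of $P$ on the $k$-th free-flight leg $[t_{k-1}'', t_k']$, as well as the incoming speed of $P$ at the start of the $k$-th interaction, both equal $\sqrt{N+2-k}$ (so that the first is $\sqrt{N+1}$ and the last post-interaction speed is $1$).

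For $T_N''$, I would bound the length of the $k$-th free-flight leg above by
$|(x_k,y_k) - (x_{k-1},y_{k-1})| + r_{k-1} + r_k$, using the convention $(x_0,y_0) = (0,0)$, $r_0 = 0$, and the fact, from Proposition \ref{Prop: The System}(4), that the endpoints of the leg lie in the interaction discs around $(x_{k-1},y_{k-1})$ and $(x_k,y_k)$. From \eqref{eq:x_k,y_k} the center-to-center distance is $\sec(\phi_{k-1})/N \leq \sqrt{2}/N$ by Lemma \ref{theta&phi}(3), while the hypothesis on $\sigma_N$ together with \eqref{estimate of r_k} forces $r_j < 1/N$. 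Dividing by the speed and summing,
\begin{equation*}
T_N'' \leq \frac{\sqrt{2}+2}{N}\sum_{k=1}^N \frac{1}{\sqrt{N+2-k}} = \frac{\sqrt{2}+2}{N}\sum_{j=2}^{N+1}\frac{1}{\sqrt{j}} = O(N^{-1/2}).
\end{equation*}

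For $T_N'$, Lemma \ref{Lem: CollisionTimeEstimate} (already invoked in the proof of Lemma \ref{successive collision}(3)) bounds each interaction time by $t_k'' - t_k' \leq 4\sigma_N/\sqrt{N+2-k}$. Hence
\begin{equation*}
T_N' \leq 4\sigma_N \sum_{j=2}^{N+1}\frac{1}{\sqrt{j}} \leq 8\sigma_N\sqrt{N+1} = O(N^{-2}),
\end{equation*}
thanks to the bound $\sigma_N < 1/(2\sqrt{2}\,N(N+3)^{3/2})$ built into Proposition \ref{Prop: The System}. Combining, $t_N'' \leq T_N' + T_N'' = O(N^{-1/2}) \to 0$.

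\textbf{Main obstacle.} The argument is essentially bookkeeping once Proposition \ref{Prop: The System} is available. The only subtlety is to use the correct incoming speed $\sqrt{N+2-k}$ (not the post-interaction speed $\sqrt{N+1-k}$, which would vanish for $k = N$) in both sums, so that the quasi-harmonic sum $\sum 1/\sqrt{j}$ runs over $j \geq 2$ and contributes only $O(\sqrt{N})$. Everything else is dictated by the $O(1/N)$ spacing between successive interaction centers guaranteed by $|\phi_j| \leq \pi/4$, combined with the aggressive smallness of $\sigma_N$.
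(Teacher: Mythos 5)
Your proof is correct and follows the paper's own argument in all essentials: the same split of $t_N''$ into total interaction time $T_N'$ (bounded via Lemma \ref{Lem: CollisionTimeEstimate} and the incoming speed $\sqrt{N+2-k}$) and total free-flight time $T_N''$ (bounded leg by leg using the $O(1/N)$ spacing of interaction centers together with $|\phi_j| \leq \pi/4$). Your free-flight bound is marginally more careful than the paper's, which writes $d_k \leq \frac{1}{N\cos\phi_k}$ and leaves the $r_k$, $r_{k+1}$ disc-radius corrections implicit, but both give the same $O(N^{-1/2})$ estimate and the same conclusion.
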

\begin{proof}
According to \eqref{eq: velocity of P}, the speed of $P$ at $t_k'$ is $\displaystyle \sqrt{N+2-k}$. Then, by Lemma \ref{Lem: CollisionTimeEstimate},
\begin{equation}\label{T_N'}
   \begin{split}
        T_N'
        <
        \sum_{k=1}^N
        \dfrac{4\sigma_N}{\sqrt{N+2-k}}
        =
        4\sigma_N\sum_{k=2}^{N+1} \dfrac{1}{\sqrt{k}}.
   \end{split}
\end{equation}
After the interaction of $P$ with $Q_k$ is complete, $P$ moves with speed $\sqrt{N+1-k}$, forming angle $\phi_k$ with the $x$-axis. The distance $d_k$ that $P$ will travel until its interaction with $Q_{k+1}$ begins, satisfies
 \begin{equation}
   \begin{split}
        d_k
        \leq
        \dfrac {1}{N\cos\phi_k}
        \leq
        \dfrac {1}{N\cos\theta_k},
   \end{split}
\end{equation}
cf.\ Figure \ref{fig: the single step}.
Recalling that $|\theta_k|\leq\dfrac{\pi}{4}$ from Lemma \ref{theta&phi} gives 
 \begin{equation}
   \begin{split}
        T_N''
        &<
        \sum_{k=0}^{N-1} 
        \dfrac {1}{N\cos\theta_k}
        \dfrac{1}{\sqrt{N+1-k}}\\
        &\leq
        \dfrac{\sqrt{2}}{N}
        \sum_{k=0}^{N-1} 
        \dfrac{1}{\sqrt{N+1-k}}
        =
        \dfrac{\sqrt{2}}{N}
        \sum_{k=2}^{N+1} 
        \dfrac{1}{\sqrt{k}}.
   \end{split}
\end{equation}
This and \eqref{T_N'} imply
 \begin{equation}
   \begin{split}
        t_N''
        <
        \left(4\sigma_N+\dfrac {\sqrt{2}}N\right)\sum_{k=2}^{N+1} \dfrac{1}{\sqrt{k}}. 
   \end{split}
\end{equation}
As $\displaystyle \sum_{k=2}^{N+1} \dfrac{1}{\sqrt{k}} < 2\sqrt{N+1}$, and for $\sigma_N$ as in Proposition \ref{Prop: The System}, we conclude that $t''_N\to0$ as $N\to \infty$.
\end{proof}

\begin{Prop} \label{yQk to 0}
$\displaystyle \max_{0\leq k\leq N}y_{Q_k}\to 0$ as $N\to\infty$.
\end{Prop}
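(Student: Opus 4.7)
The plan is to decompose $y_{Q_k}=y_k+(y_{Q_k}-y_k)$, where $y_k$ is the $y$-coordinate of the $k$-th interaction center from \eqref{eq:x_k,y_k}, and to bound each piece uniformly in $k$. The first piece captures the cumulative vertical drift of the trajectory of $P$ baked into the construction; the second captures the small displacement of $Q_k$ from that center needed to force the $k$-th interaction to occur.

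To control $|y_{Q_k}-y_k|$, I would apply Lemma \ref{successive collision}(2) at the $k$-th step (with $r=r_{k-1}$, $\phi=\phi_{k-1}$) to obtain $|y_{Q_k}-y_k|<(r_{k-1}+\sigma_N)/\cos\phi_{k-1}$. Since $|\phi_{k-1}|\le\pi/4$ by Lemma \ref{theta&phi}(3) and $r_{k-1}<2\sqrt{2}\,(N+3)^{3/2}\sigma_N$ by \eqref{estimate of r_k}, this is $O\bigl((N+3)^{3/2}\sigma_N\bigr)$, and the standing hypothesis $\sigma_N<1/\bigl(2\sqrt{2}\,N(N+3)^{3/2}\bigr)$ of Proposition \ref{Prop: The System} then collapses it to $O(1/N)$, uniformly in $k$.

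For $y_k=\tfrac{1}{N}\sum_{j=0}^{k-1}\tan\phi_j$ (with $\phi_0=0$), I would combine $|\phi_j|\le|\theta_j|$ from Lemma \ref{theta&phi}(3) with the explicit identity $\tan|\theta_j|=1/\sqrt{N+1-j}$ obtained directly from $\sin|\theta_j|=1/\sqrt{N+2-j}$ in \eqref{eq:theta&phi}; monotonicity of $\tan$ on $[0,\pi/4]$ then gives $|\tan\phi_j|\le 1/\sqrt{N+1-j}$. Hence
\[
|y_k|\;\leq\;\dfrac{1}{N}\sum_{j=1}^{k-1}\dfrac{1}{\sqrt{N+1-j}}\;\leq\;\dfrac{1}{N}\sum_{i=1}^{N}\dfrac{1}{\sqrt{i}}\;\leq\;\dfrac{2}{\sqrt{N}},
\]
uniformly in $k$, where the last inequality uses $\sum_{i=1}^{N}i^{-1/2}\le 1+\int_1^N x^{-1/2}\,dx\le 2\sqrt{N}$. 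Combined with the previous estimate this yields $\max_{0\le k\le N}|y_{Q_k}|\to 0$ as $N\to\infty$. I do not anticipate a serious obstacle; the only mildly subtle point is realizing that the crude bound $|\tan\phi_j|\le\tan|\theta_j|$ already suffices, so the sign cancellations indicated by Lemma \ref{theta&phi}(2) need not be exploited, and the whole argument is driven by the choice of $\sigma_N$ built into Proposition \ref{Prop: The System}.
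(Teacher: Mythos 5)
Your proposal is correct and follows essentially the same route as the paper: decompose $y_{Q_k}$ into $y_k$ plus the discrepancy bounded by $r_k$ (you invoke Lemma \ref{successive collision}(2) where the paper appeals directly to the definition of $r_k$ together with \eqref{estimate of r_k}, but these give the same bound), then control $|y_k|$ via $|\phi_j|\le|\theta_j|$, monotonicity of $\tan$, and the identity $\tan\arcsin(1/\sqrt{N+2-j})=1/\sqrt{N+1-j}$. The only cosmetic difference is that you make the $O(1/\sqrt{N})$ rate explicit where the paper simply notes the sum tends to zero.
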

\begin{proof}
Noting that $y_{Q_k}$ is the second coordinate of $Q_k$ before $t=t_k'$, whereas $y_k$ is the second coordinate of the center of the $k$-interaction disc, it follows from the definition of $r_k$ and \eqref{estimate of r_k} that 
\begin{equation}
    \left|y_{Q_k}\right|
    <
    \left|y_k\right|
    +
    2\sqrt 2 (N+3)^{3/2}\sigma_N.
\end{equation}
For the second term on the right use $\sigma_N$ as in Proposition \ref{Prop: The System} and estimate the first term as
 \begin{equation}
   \begin{split}
        \left|y_k\right|
        &=
        \dfrac1N \left|\sum_{m=0}^{k-1} \tan\phi_m\right|
        \leq
        \dfrac 1N \sum_{m=0}^{k-1} \tan \left|\phi_m\right| \\
        &\leq
        \dfrac 1N \sum_{m=0}^{k-1} \tan \left|\theta_m\right|
        =
        \dfrac 1N\sum_{m=0}^{k-1} \tan \left(\arcsin\dfrac{1}{ \sqrt{N+2-m}}\right)\\
        &=
        \sum_{m=0}^{k-1}\dfrac{1}{N\sqrt{N+1-m}}
        <
         \sum_{m=0}^{N-1}\dfrac{1}{N\sqrt{N+1-m}}\\
         &=
         \dfrac 1N\sum_{m=2}^{N+1}\dfrac{1}{\sqrt m}
         \to 0,
   \end{split}
\end{equation}
as $N \to \infty$. 
\end{proof}
For each fixed $N$, writing ${\bold v} = (v_x, v_y)$ and following \eqref{M_t^{N}}, set for $t\in\R$
\begin{equation} \label{Main Theorem}
   \begin{split}
        &M_t^{(N+1)}(dx,dy,dv_x,dv_y)\\
        &\ \ \ \ \ =
        \dfrac 1{N+1}
        \left(
        \delta_{\left(P(t),\bold v_P(t)\right)}(dx,dy,dv_x,dv_y)
        +
        \sum_{k=1}^N \delta_{\left(Q_k(t),\bold v_{Q_k}(t)\right)}(dx,dy,dv_x,dv_y)
        \right).
    \end{split}
\end{equation}
The crucial observation in the following proposition is that, due to the factor $1/N$, no single molecule shows as $N\to 
\infty$, but its interaction with many other molecules, if their number is of order $N$, shows macroscopically.
\begin{Prop} \label{Thm: Main Theorem}
As $N\to \infty$, and  for $\sigma_N$ as in Proposition \ref{Prop: The System}: for $t\leq 0$,
\begin{equation} \label{M_t for T negative}
   \begin{split}
       M_t^{(N+1)}(dx,dy,dv_x,dv_y)
        \Rightarrow
        \chi_{[0,1]}(x)dx
        \otimes
        \delta_0(dy)
        \otimes 
        \delta_{(0,0)}(dv_x,dv_y),
    \end{split}
\end{equation}
and for $t>0$,
\begin{equation} \label{M_t for T positive}
   \begin{split}
        &M_t^{(N+1)}(dx,dy,dv_x,dv_y)\\
        &\ \ \ \ \Rightarrow 
        \chi_{[0,1]}(x)dx
        \otimes
        \left(
        \frac12\delta_t(dy)\otimes  \delta_{(0,1)}(dv_x,dv_y) 
        +
        \frac12 \delta_{-t}(dy) \otimes\delta_{(0,-1)}(dv_x,dv_y)  
    \right).
    \end{split}
\end{equation} 
\end{Prop}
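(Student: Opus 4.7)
I would prove this weak convergence directly, by testing $M_t^{(N+1)}$ against an arbitrary $f \in C_b(\mathbb{R}^4)$, controlling the contribution of each $Q_k$ via the explicit description of its trajectory from Proposition \ref{Prop: The System} together with the smallness results of Propositions \ref{TN to 0} and \ref{yQk to 0}. The single molecule $P$ is handled immediately: its contribution to $\displaystyle\int f\,dM_t^{(N+1)}$ is bounded by $\|f\|_\infty/(N+1) \to 0$, so only the $Q_k$ sum matters.

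For $t\leq 0$, each $Q_k$ is at rest at position $(k/N, y_{Q_k})$ with $\max_k |y_{Q_k}|\to 0$ by Proposition \ref{yQk to 0}. Writing
\[
\frac{1}{N+1}\sum_{k=1}^N f\!\left(\tfrac{k}{N}, y_{Q_k}, 0, 0\right)
= \frac{1}{N+1}\sum_{k=1}^N f\!\left(\tfrac{k}{N}, 0, 0, 0\right) + R_N,
\]
the error $R_N$ tends to $0$ since $f$ is uniformly continuous on the compact set $[0,1]\times[-1,1]\times\{0\}\times\{0\}$ (onto which all arguments fall for large $N$), and the main Riemann sum converges to $\int_0^1 f(x,0,0,0)\,dx$. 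This gives \eqref{M_t for T negative}.

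For $t>0$, fix $t$ and restrict to $N$ so large that $t_N''<t$ (possible by Proposition \ref{TN to 0}), so every interaction is complete. I would first show the approximation
\[
Q_k(t) = \left(\tfrac{k}{N}, y_{Q_k}\right) + t\,\mathbf v_{Q_k} + E_{k,N},
\]
with $|E_{k,N}| \leq r_k + 4\sigma_N + t_k''|\mathbf v_{Q_k}|$ uniformly in $k$. The first two terms are $O(N^{3/2}\sigma_N) \to 0$ by \eqref{estimate of r_k} and the choice of $\sigma_N$, while $t_k''\leq t_N''\to 0$ and $|\mathbf v_{Q_k}|=1$. Combined with $y_{Q_k}\to 0$, this reduces the analysis to the pair
\[
\left(\tfrac{k}{N} + t\sin|\phi_k|,\; t(-1)^{k+1}\cos\phi_k\right),\qquad \mathbf v_{Q_k}=\bigl(\sin|\phi_k|,(-1)^{k+1}\cos\phi_k\bigr).
\]
The decisive step is controlling $\phi_k$. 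By Lemma \ref{theta&phi}, $|\phi_k|\leq \arcsin(1/\sqrt{N+2-k})$ for $k>1$, so for any slowly growing cutoff (say $k\leq N - \lfloor\log N\rfloor$) one has $|\phi_k|\leq \arcsin(1/\sqrt{\log N})\to 0$; the remaining $\lfloor\log N\rfloor = o(N)$ indices contribute at most $\|f\|_\infty \lfloor\log N\rfloor/(N+1)\to 0$. For the good indices, $\sin|\phi_k|\to 0$ and $\cos\phi_k\to 1$ uniformly.

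I would then split the good indices by parity. Odd $k$ yield $\mathbf v_{Q_k}\to(0,1)$ and position $\to (k/N, t)$, while even $k$ yield $\mathbf v_{Q_k}\to(0,-1)$ and position $\to (k/N, -t)$. Using uniform continuity of $f$ on a compact set containing all limits, plus the fact that $\#\{k\leq N : k\text{ odd}\}/(N+1)\to 1/2$ and analogously for even $k$, Riemann sums yield
\[
\frac{1}{N+1}\sum_{k=1}^N f(Q_k(t),\mathbf v_{Q_k}(t)) \longrightarrow \tfrac12\!\int_0^1\! f(x,t,0,1)\,dx + \tfrac12\!\int_0^1\! f(x,-t,0,-1)\,dx,
\]
which is \eqref{M_t for T positive}. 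The only subtle point is the nonuniformity of $\phi_k\to 0$; the cutoff argument above is the main obstacle and resolves it because the ``bad'' indices are $o(N)$ in number.
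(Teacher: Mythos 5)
Your proposal is correct and mirrors the paper's proof: handle $P$ by boundedness, estimate the $Q_k$ positions/velocities from Propositions \ref{Prop: The System}, \ref{TN to 0}, \ref{yQk to 0}, discard an $o(N)$ tail of ``bad'' indices near $k=N$ where $\phi_k$ need not be small, and conclude with a Riemann sum. The only cosmetic differences are your use of uniform continuity of $f\in C_b$ in place of the paper's Lipschitz test functions, and your cutoff $N-\lfloor\log N\rfloor$ versus the paper's $\lfloor N-\sqrt N\rfloor$; both choices work for the same reason.
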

\begin{proof}
It suffices to check the statement on the integrals of bounded Lipschitz functions, see \cite{AGS}, page 109.
For this, for  $f:\mathbb R^2\times \mathbb R^2 \to \R$ bounded and Lipschitz
\begin{equation}
   \begin{split}
        &\int_{\R^4} f \left(x,y,v_x,v_y\right)M_t^{(N+1)}(dx,dy,dv_x,dv_y)\\
        &\ \ \ \ \ \ =
          \dfrac 1 {N+1} f\left(P(t),\bold v_P(t)\right)+
         \dfrac 1 {N+1} \sum_{k=1}^{N} f \left(Q_k(t),\bold v_{Q_k}(t)\right).
    \end{split}
\end{equation}
Since $f$ is bounded,  the first term vanishes as $N\to \infty$. The rest of the proof examines the convergence of the second term.

Fix any $t\leq0$. Recalling \eqref{P_k initial setting},
\begin{equation}
   \begin{split}
        \dfrac 1 {N+1} \sum_{k=1}^{N} f \left(Q_k(t),\bold v_{Q_k}(t)\right)
        =
         \dfrac 1 {N+1} \sum_{k=1}^{N} f \left(\dfrac kN,y_{Q_k},0,0\right).
    \end{split}
\end{equation}
For $L_f$ be the Lipschitz constant of $f$, 
and using Proposition \ref{yQk to 0},
\begin{equation}
   \begin{split}
        &\left| 
        \dfrac 1 {N+1} \sum_{k=1}^{N}f\left(\dfrac kN,y_{Q_k},0,0\right)
        -
         \dfrac 1 {N+1} \sum_{k=1}^{N}f\left(\dfrac kN,0,0,0\right)
        \right|\\
        &\ \ \ \ \ \ \ \ \ \ \ \ \ \ \ \leq
        \dfrac {N}{N+1} L_{f} \max_{1\leq k\leq N}\left|y_{Q_k}\right|\to 0.
    \end{split}
\end{equation}
By the definition of the Riemann integral, 
\begin{equation}
   \begin{split}
         \dfrac 1 {N+1} \sum_{k=1}^{N}f\left(\dfrac kN,0,0,0\right)
          \to
          \int_0^1 f\left( x,0,0,0\right)dx.
    \end{split}
\end{equation}
Therefore
\begin{equation}
   \begin{split}
        \int_{\R^4} f\left(x,y,v_x,v_y\right)M_t^{(N+1)}(dx,dy,dv_x,dv_y)
        \to
          \int_0^1 f\left( x,0,0,0\right)dx.
    \end{split}
\end{equation}
This is exactly \eqref{M_t for T negative}.
Now fix $t>0$.
By Proposition \ref{TN to 0} there exists $N_1$ such that for all $N>N_1$, $t_N''<t$, i.e.\ for each time we can choose $N$ large enough so that all interactions have already happened and all molecules {\em are} moving at time $t$, and are moving with their terminal velocities. We consider such $N$'s only.
According to  Proposition \ref{Prop: The System}, and since now $t \geq t_k''$, 
\begin{equation}
   \begin{split}
          x_{Q_k}(t)
          &=
          x_{Q_k}(t_k'')+\left(t- t_k''\right)v_{Q_k,x}(t_k''),\\
          y_{Q_k}(t)
          &=
          y_{Q_k}(t_k'')+\left(t- t_k''\right)v_{Q_k,y}(t_k'').
   \end{split}
\end{equation}
For $\alpha _N=\left[ N-\sqrt N\right]$, the integer part of $N-\sqrt N$, and by \eqref{v_{Q_k}},  for any $1\leq k\leq \alpha_N$
\begin{equation} \label{EQ: Estimates on t_k''}
   \begin{split}
    &\left| v_{Q_k,x}(t_k'')\right| =\sin | \phi_k |  \leq  \sin |\theta_k | \leq \dfrac{1}{\sqrt{N+2-\alpha_N}},\\
    &\left| v_{Q_k,y}(t_k'') - (-1)^{k+1} \right| =| \cos \phi_k -1| \leq | \sin {\phi_k}| \leq \dfrac{1}{\sqrt{N+2-\alpha_N}},\\
    &\left|x_{Q_k}(t_k'')-\dfrac kN\right|\leq r_k,
    \quad
    \left|y_{Q_k}(t_k'')\right|<\left|y_k\right|+r_k.
   \end{split}
\end{equation}
Therefore for $1\leq k\leq \alpha_N$,
by \eqref{estimate of r_k}, Proposition \ref{TN to 0}, and Proposition \ref{yQk to 0},
\begin{equation} \label{Eq: Q_k estimate}
   \begin{split}
    \left| x_{Q_k}(t)-\dfrac k N\right|
    & \leq \left|x_{Q_k}(t_k'')-\dfrac kN\right| + \left(t- t_k''\right) \left| v_{x,Q_k}(t_k'')\right|\\
    & < r_k+\dfrac {t}{\sqrt{N+2-\alpha_N}}\to 0,\\
    \left| y_{Q_k}(t)-(-1)^{k+1} t\right|
    & \leq \left|y_{Q_k}(t_k'')\right| +t \left|v_{Q_k,y}(t_k'')-(-1)^{k+1}\right|+t_k''\left| v_{Q_k,y}(t_k'') \right|\\
    & < \left|y_k\right|+r_k+\dfrac {t}{\sqrt{N+2-\alpha_N}} + t_k''\to 0.
   \end{split}
\end{equation}
Since $f$ is Lipschitz,  \eqref{EQ: Estimates on t_k''} and \eqref{Eq: Q_k estimate} imply that  
\begin{equation}
   \begin{split}
        \left| 
       \dfrac 1 {N+1} \sum_{k=1}^{\alpha_N}f\left(Q_k(t),\bold v_{Q_k}(t)\right)
        -
         \dfrac 1 {N+1} \sum_{k=1}^{\alpha_N}f\left(\dfrac k N,(-1)^{k+1} t, 0,(-1)^{k+1}\right)
         \right|
         \to 0.
    \end{split}
\end{equation}
For $C_{f}=\max \left|f\right|$, 
\begin{equation}
   \begin{split}
         &\left|
         \dfrac 1 {N+1} \sum_{k=\alpha_N+1}^{N}f\left(Q_k(t),\bold v_{Q_k}(t)\right)
         \right|
         \leq C_{f}\dfrac{N-\alpha_N}{N+1}
         \to 0,\\
         &\left|
         \dfrac 1 {N+1} \sum_{k=\alpha_N+1}^{N}f\left(\dfrac kN,(-1)^{k+1} t, 0,(-1)^{k+1}\right)
         \right|
         \leq C_{f}\dfrac{N-\alpha_N}{N+1}
         \to 0,
    \end{split}
\end{equation}
therefore,
\begin{equation}
   \begin{split}
        \left| 
       \dfrac 1 {N+1} \sum_{k=1}^{ N}f\left(Q_k(t),\bold v_{Q_k}(t)\right)
        -
         \dfrac 1 {N+1} \sum_{k=1}^{N}f\left(\dfrac kN,(-1)^{k+1} t, 0,(-1)^{k+1}\right)
         \right|
         \to 0.
    \end{split}
\end{equation}
By the definition of the Riemann integral, 
\begin{equation}
   \begin{split}
        \dfrac 1 {N+1} \sum_{k=1}^{N}f\left(\dfrac kN,(-1)^{k+1} t, 0,(-1)^{k+1}\right)
        \to&
        \int_0^1 \dfrac12\left(f(x,t,0,1)+ f(x,-t,0,-1)\right)dx
    \end{split}
\end{equation}
which implies \eqref{M_t for T positive}.
\end{proof}
With
\begin{equation}
\begin{split}
&\bold x_k^{(N+1)}(t)= Q_k(t), \quad
\bold u_k^{(N+1)}(t)= \bold v_{Q_k}(t), \  k=1,\ldots,N\\
&\bold  x_{N+1}^{(N+1)}(t) = P(t),  \quad
\bold  u_{N+1}^{(N+1)}(t) = \bold v_{P}(t),
\end{split}
\end{equation}
 Theorem \ref{Thm: ghost example} follows immediately from Propositions \ref{Prop: The System} and \ref{Thm: Main Theorem}.

\subsection{Macroscopic equations} \label{Macroscopic Equations}
We now examine the hydrodynamic equations for $M_t(d\bold x,d\bold v)$ as in Theorem \ref{Thm: ghost example}.
 It is easy to check that for any $\phi(t,\bold x)\in C_c^{\infty} (\R \times \R^2)$
\begin{equation}\label {eq: continuity momentum in ghost }
\begin{split}
   &
   \int_{-\infty}^{\infty} \int_{\R^4} 
   \partial_t\phi(t,\bold x) M_t(d\bold x,d\bold v)
   dt
    +
    \int_{-\infty}^{\infty}  \int_{\R^4}  
    \nabla_{\bold x}\phi(t,\bold x) \cdot \bold v 
    M_t(d\bold x,d\bold v)dt
    =
    0,\\
    &
   \int_{-\infty}^{\infty}  \int_{\R^4} 
    \partial_t\phi(t,\bold x) \bold v M_t(d\bold x,d\bold v)dt
    +
    \int_{-\infty}^{\infty}  \int_{\R^4} 
    \nabla_{\bold x}\phi(t,\bold x) \cdot \bold v \, \bold v    
    M_t(d\bold x,d\bold v)dt
    =
    0.
\end{split}
\end{equation}
Using disintegration \eqref{eq: disintegration of ghost Mt}, for $\mu_t(d\bold x)$ and $\bold u(t,\bold x)$ as in  
\eqref{eq: macro density from ghost} and \eqref{eq: macro velocity from ghost},
 we rewrite \eqref{eq: continuity momentum in ghost } as 
\begin{equation}\label{eq: rewritten form continuity momentum in ghost }
\begin{split}
   &
   \int_{-\infty}^{\infty} \int_{\R^2} 
   \partial_t\phi(t,\bold x) 
   \mu_t(d\bold x) dt
    +
    \int_{-\infty}^{\infty}  \int_{\R^2}  
    \nabla_{\bold x}\phi(t,\bold x) \cdot 
    \bold u(t,\bold x)
     \mu_t(d\bold x) dt
    =
    0,\\
    &
   \int_{-\infty}^{\infty}  \int_{\R^2} 
    \partial_t\phi(t,\bold x)
     \bold u(t, \bold x)
     \mu_t(d\bold x) dt\\
    &\quad\quad\quad\quad\ \  + 
    \int_{-\infty}^{\infty}  \int_{\R^2} 
    \nabla_{\bold x}\phi(t,\bold x) \cdot
    \left(  \int_{\R^2}
    \bold v \otimes \bold v
    M_{t,\bold x}( d\bold v)
    \right)
    \mu_t(d\bold x)dt
    =
    0.
\end{split}
\end{equation}
Notice that at each $t, \bold x$  the $M_{t,\bold x}(d \bold v)$ is singular, therefore
\begin{equation}
      \int_{\R^2}
    \bold v \otimes \bold v
    M_{t,\bold x}( d\bold v)
    =
    \bold u \otimes \bold u.
\end{equation}
Then \eqref{eq: rewritten form continuity momentum in ghost } becomes
\begin{equation} \label {system: weak p-less euler}
\begin{split}
   &
   \int_{-\infty}^{\infty} \int_{\R^2} 
   \partial_t\phi(t,\bold x) \mu_t(d\bold x)
   dt
    +
    \int_{-\infty}^{\infty}  \int_{\R^2}  
    \nabla_{\bold x}\phi(t,\bold x) \cdot \bold u 
    \, \mu_t(d\bold x) dt
    =
    0,\\
    &
   \int_{-\infty}^{\infty}  \int_{\R^2} 
    \partial_t\phi(t,\bold x) \bold u \, \mu_t(d\bold x) dt
    +
    \int_{-\infty}^{\infty}  \int_{\R^2} 
    \nabla_{\bold x}\phi(t,\bold x) \cdot \bold u \, \bold u   
     \,  \mu_t(d\bold x)dt
    =
    0.
\end{split}
\end{equation}
In other words $\left(\mu_t(d\bold x), \bold u(t,\bold x)\right)$, $t\in \R$  solves weakly two dimensional Euler system without pressure:
\begin{equation} \label {system: p-less euler}
\begin{split}
     &\partial_t \mu_t+ \div(\bold u\, \mu_t ) 
     = 0,\\
     &\partial_t( \bold u\, \mu_t) + \div( \bold u\otimes \bold u\ \mu_t) 
     =0.
\end{split}
\end{equation}
{For the naturalness of measure solutions in the presureless Euler system, see \cite{ERS}, p.\,354.}
\begin{Rmk}
The trivial solution $\widetilde \mu_t(d\bold x)=  \Delta_{0}(d\bold x) , \widetilde{ \bold u}=(0,0)$ also solves \eqref{system: weak p-less euler} for all $t$, and coincides with $\left( \mu_t, \bold u \right)$  for $t\leq 0$. Note that $\left( \mu_t, \bold u \right)$ is not ``energy admissible" since the kinetic energy of $\left( \mu_t, \bold u \right)$ increases in time:
\begin{equation}
\int_{\R^2} | \bold u|^2 \mu_t(d \bold x)
=
\int_{\R^4} |\bold v|^2 M_t(d\bold x,d\bold v)
=
\left \{
\begin{array}{ll}
0 & t\leq 0  \\
1 & t>0 .
\end{array}
\right.
\end{equation}
A solution to \eqref{system: weak p-less euler} with decreasing energy can be obtained by reversing the direction of time, as in the next section. The value of the construction in this section lies in the microscopic, Hamiltonian interpretation of spontaneous velocity generation in weak solutions of hydrodynamic equations as in \cite{Sch}, \cite{Sh}.  
\end{Rmk}


 \section{Time Reversal and Macroscopic Non-Uniqueness}\label{section:reversal}
\subsection{Reverse flow with decreasing energy} 
\label{subsection reverse flow}
 We now reverse time in the construction of the previous section to establish macroscopic non-uniqueness in the class of energy decreasing solutions.
It is standard that for $\left( \bold x_k^{(N)}(t), \bold u_k^{(N)}(t)\right)$ a Hamiltonian flow  as in Theorem \ref{Thm: ghost example} the reverse flow $\left( \bold x_k^{(N)}(-t), -\bold u_k^{(N)}(-t)\right)$ also solves the Hamiltonian system \eqref{Ham}.
{Roughly speaking, for each $N$ the reverse system consists of $N$ molecules moving with speed $1$ for $t<0$. At $t=0$, through interaction, one of the $N$ molecules gathers all the energy from the rest $N-1$ molecules and leaves the rest of the group. Therefore for $t>0$, macroscopically the system is motionless. If we still use $\left( \bold x_k^{(N)}(t), \bold u_k^{(N)}(t)\right)$ for the reverse flow then the measure $M_t^{(N)}$ converges weakly to
} 
\begin{equation}\label{eq: M_t reverse ghost for all t}
M_t(d\bold x, d \bold v)
=
\left\{
\begin{array}{ll}
         \dfrac12
         \Delta_{t}(d\bold x)
         \otimes
         \delta_{\left(0,1\right)}\left(d\bold v\right)
         +
         \dfrac 12
         \Delta_{-t}(d\bold x)
         \otimes
         \delta_{\left(0,-1\right)}\left(d\bold v\right)
      &
        t < 0 
        \\
         \Delta_0(d\bold x)
         \otimes
         \delta_{\left(0,0\right)}\left(d\bold v\right)
      &
      t\geq 0,
\end{array}
\right.
\end{equation} 
%
with
\begin{equation} \label {eq: mu_t u_t from t-ghost}
\begin{split}
     \mu_t(d\bold x) 
     &=
     \left\{
     \begin{array}{ll}
     \dfrac12 \Delta_t(d\bold x)  + \dfrac 12 \Delta_{-t}(d\bold x)
     & t< 0
     \\
     \Delta_0(d\bold x) 
     & t\geq 0,
     \end{array}
     \right. \\
     \bold u(t,\bold x) 
     &=
     \left\{
     \begin{array}{ll}
    \chi_{\mathbf Q_t}(\bold x) \cdot (0,1) +  \chi_{\mathbf Q_{-t}}(\bold x) \cdot (0,-1)
      & t< 0\\
     0 &  t\geq 0,
     \end{array}
     \right.
\end{split}
\end{equation} 
and decreasing energy:
\begin{equation} \label{eq: energy reverse flow}
\int_{\R^2} | \bold u|^2 \mu_t(d \bold x)
=
\int_{\R^4} |\bold v|^2 M_t(d\bold x,d\bold v)
=
\left \{
\begin{array}{ll}
1 & t<0\\
0 & t\geq 0,
\end{array}
\right.
\end{equation}
{cf. \cite{BN}, Defintion 2.1.}
\begin{Rmk}
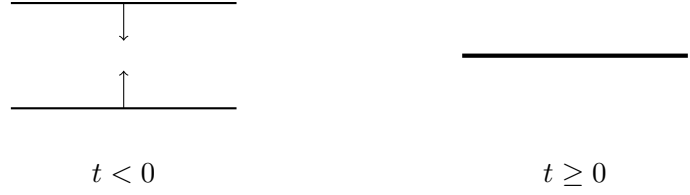
\begin{figure}
\begin{tikzpicture}
  \coordinate []  (A1) at (0,0.7);
  \coordinate [] (A2) at (3,0.7);
  \coordinate []  (B1) at (0,-0.7);
  \coordinate [] (B2) at (3,-0.7);
  \coordinate [] (C1) at (6,0);
  \coordinate [] (C2) at (9,0);
    \coordinate [] (M1) at (1.5,0.7);
  \coordinate [] (M2) at (1.5,0.2);
    \coordinate [] (N1) at (1.5,-0.7);
  \coordinate [] (N2) at (1.5,-0.2);
  \draw [thick](A1) -- (A2);
  \draw [thick](B1) -- (B2);
  \draw [ultra thick](C1) -- (C2);
  \draw [->](M1) -- (M2);
  \draw [->](N1) -- (N2);
  \coordinate [label=below:$t < 0$] (v) at (1.5,-1.3);
  \coordinate [label=below:$t \geq 0$] (v) at (7.5,-1.3);
   \end{tikzpicture}
    \caption{Macroscopic flow of \eqref{eq: M_t reverse ghost for all t}.}
    \label{Figure: Reverse Flow}
\end{figure}
This describes two fronts approaching each other up until $t=0$, when they merge and stay at rest, see Figure \ref{Figure: Reverse Flow}. In the context of the pressureless Euler system this is a ``sticky" macroscopic solution, cf. \cite{BN}. Rather than using particle systems with adhesion dynamics, here we obtain the solution as the limit of Hamiltonian dynamics with repulsive force. We also provide an explanation for the loss of energy:  all the energy is transferred to a macroscopically invisible part of the system.
\end{Rmk}

\subsection{Transverse flow}
\label{reversals}
It is known that merely requiring decreasing energy does not guarantee uniqueness of measure solutions to the system \eqref{system: p-less euler}, see \cite{BN}. This persists when comparing the flow of the previous section with the limit of a trivial Hamiltonian flow: for this we take the $N$-system to consist of molecules that stay far enough from each other so that they never interact. We obtain a solution to the system \eqref{system: weak p-less euler} that coincides with \eqref{eq: mu_t u_t from t-ghost} for all $t < 0$.  But at $t=0$, the moment the two fronts meet, instead of merging and staying at rest, they go through each other. 

More precisely, for each $N=2n \in \N$, $j=1,2,\ldots,N$, let
\begin{equation} \label{eq: square initial conditions}
\begin{split}
       \widetilde {\bold x}_{j}^{(N)}
        =
        \left(
        \dfrac{j}{N},
        0
        \right),
        \quad
       \widetilde {\bold u}_j^{(N)}
       =
       \begin{cases}
        (0,1)&\mbox{ if $j$ odd}\\
        (0,-1)&\mbox{ if $j$ even} .
        \end{cases}
    \end{split}
\end{equation}
For $t\in \R$ the orbits  
\begin{equation}\label{eq: free flow slns}
\begin{split}
    \widetilde {\bold x}_{j}^{(N)}(t) = \widetilde {\bold x}_{j}^{(N)} + t \widetilde {\bold u}_{j}^{(N)}
\end{split}
\end{equation} 
satisfy the Hamiltonian system \eqref{Ham}
provided that the interaction range is sufficiently short, for example, $\sigma<1/N$. (Notice that $\sigma_N$ in Theorem \ref{Thm: ghost example}, and therefore in Section \ref{section:reversal} satisfies $\sigma_N<1/N$.)
Recalling definition \eqref{M_t^{N}}, set
\begin{equation}
   \begin{split}
         \widetilde M_t^{(N)}(d\bold x,d\bold v)
         =
         \dfrac{1}{N} \sum_{j=1}^{N} \delta_{\left(\widetilde {\bold x}_{j}^{(N)}(t), \widetilde {\bold u}_{j}^{(N)}(t)\right)}
         (d\bold x,d\bold v).
   \end{split}
\end{equation} 
By the definition of Riemann integral, for any continuous bounded $f(\bold x,\bold v)$ we have 
\begin{equation}   \label{eq: limit of square at t}
   \begin{split}
        \lim_{N\to\infty}
         \int_{\R^4} f(\bold x,\bold v) \widetilde M_t^{(N)}(d\bold x,d\bold v)
         =
         \frac12\int_0^1
         f\left(x,t,0,1\right)
         dx\
         +
         \frac12\int_0^1
         f\left(x,-t,0,-1\right)
         dx.
   \end{split}
\end{equation}
Therefore 
\begin{equation} \label{eq: M_t in square}
\begin{split}
       \widetilde M_t^{(N)}(d\bold x,d\bold v)
        & \Rightarrow
         \widetilde M_t(d\bold x,d\bold v)\\
       : & =
       \dfrac12
         \Delta_{t}(d\bold x)
         \otimes
         \delta_{\left(0,1\right)}\left(d\bold v\right)
         +
         \dfrac 12
         \Delta_{-t}(d\bold x)
         \otimes
         \delta_{\left(0,-1\right)}\left(d\bold v\right)
         .      
\end{split}
\end{equation}
The macroscopic density and velocity are
\begin{equation} \label{eq: mu_t u_t p-less from planes }
\begin{split}
     \widetilde \mu_t (d\bold x)
      &= 
      \dfrac12
         \Delta_{t}(d\bold x)
         +
      \dfrac 12
         \Delta_{-t}(d\bold x),\\
       \widetilde {\bold u}(t,\bold x)
        &=
       \chi_{\mathbf Q_t}(\bold x) \cdot (0,1) +  \chi_{\mathbf Q_{-t}}(\bold x) \cdot (0,-1),
       \quad t\in \R,
   \end{split}
\end{equation}
see Figure \ref {Transverse Flow}.
\begin{figure}
\begin{tikzpicture}
  \coordinate []  (A1) at (0,0.7);
  \coordinate [] (A2) at (3,0.7);
  \coordinate []  (D1) at (0,-0.7);
  \coordinate [] (D2) at (3,-0.7);
   \coordinate [] (E1) at (5,0);
  \coordinate [] (E2) at (8,0);
  \coordinate [] (B1) at (10,0.7);
  \coordinate [] (B2) at (13,0.7);
  \coordinate [] (C1) at (10,-0.7);
  \coordinate [] (C2) at (13,-0.7);
    \coordinate [] (M1) at (11.5,0.7);
  \coordinate [] (M2) at (11.5,1.2);
    \coordinate [] (N1) at (11.5,-0.7);
  \coordinate [] (N2) at (11.5,-1.2);
  \coordinate [] (P1) at (1.5,0.2);
  \coordinate [] (P2) at (1.5,0.7);
  \coordinate [] (Q1) at (1.5,-0.2);
  \coordinate [] (Q2) at (1.5,-0.7);
  \draw [thick](A1) -- (A2);
  \draw [thick](D1) -- (D2);
  \draw [thick](B1) -- (B2);
  \draw [thick](C1) -- (C2);
   \draw [ultra thick](E1) -- (E2);
  \draw [->](M1) -- (M2);
  \draw [->](N1) -- (N2);
  \draw [->](P2) -- (P1);
  \draw [->](Q2) -- (Q1);
  \coordinate [label=below:$t < 0$] (v) at (1.5,-1.3);
  \coordinate [label=below:$t > 0$] (v) at (11.5,-1.3);
    \coordinate [label=below:$t\, {=}\, 0$] (v) at (6.5,-1.3);
   \end{tikzpicture}
    \caption{Macroscopic flow of \eqref {eq: mu_t u_t p-less from planes }. }
    \label{Transverse Flow}
\end{figure}
It is easily checked that \eqref {eq: continuity momentum in ghost }, 
\eqref {eq: rewritten form continuity momentum in ghost } 
hold, and that for all $t\neq 0$
\begin{equation}
      \int_{\R^2}
    \bold v \otimes \bold v
    \widetilde M_{t,\bold x}( d\bold v)
    =
    \widetilde {\bold u} \otimes \widetilde {\bold u}.
\end{equation}
Therefore $\left( \widetilde \mu_t(d\bold x), \widetilde {\bold u}(t,\bold x)\right)$ also solves weakly the pressureless Euler system for $t\in \R$.
Since $\displaystyle \int_{\R^2} |\bold {\widetilde u}|^2 \widetilde \mu_t(d\bold x) =1 $ except for $t=0$,
we can alter $\widetilde {\bold u}$ at time $t=0$ so that
 \begin{equation}
    \int_{\R^2} |\bold {\widetilde u}|^2 \widetilde \mu_0(d\bold x) =1,
\end{equation}
still solving equation \eqref {system: weak p-less euler}.  
If we still use $\widetilde \mu_t(d\bold x), \widetilde {\bold u}(t,\bold x)$ for the modified solution, we then have constant macroscopic kinetic energy in time:
\begin{equation}
\int_{\R^2} | \widetilde {\bold u}|^2 \widetilde \mu_t(d \bold x)
=
1,
\quad t\in \R.
\end{equation}
Clearly for all $t<0$, $(\widetilde \mu_t(d\bold x), \widetilde {\bold u}(t,\bold x))$, modified or not, coincides with $(\mu_t(d\bold x),  {\bold u}(t,\bold x))$.
Macroscopically, the same two fronts are approaching each other and, unless we know their microscopic origin, we are not be able to tell what will happen for $t>0$.

\begin{Rmk}
Notice here the {\em total} macroscopic energy of the limit system is conserved in time:
\begin{equation}\label{eq: total kinetic energy from transverse}
\int_{\R^4} |\bold v|^2 \widetilde M_t(d\bold x, d\bold v) = 1, \quad t\in \R,
\end{equation}
and the macroscopic kinetic energy $\displaystyle \int_{\R^2} |\bold{\widetilde u}|^2 \widetilde\mu_t (d\bold x)$ is only part of the total energy in general:
\begin{equation}
\int_{\R^4} |\bold v|^2 \widetilde M_t(d\bold x, d\bold v)
=
\int_{\R^2} 
|\bold {\widetilde u}|^2 \mu_t(d\bold x) 
+
\int_{\R^4} |\bold v- \bold{\widetilde u} |^2 \widetilde M_t(d\bold x, d\bold v).
\end{equation}
Let $\displaystyle h(t)=\int_{\R^4} \left| \bold v- \widetilde{\bold u}  \right|^2 \widetilde M_t(d\bold x,d\bold v)$.
Then
\begin{equation}
\int_{\R^2} 
|\bold {\widetilde u}|^2 \widetilde \mu_t(d\bold x) + h(t)= 1, \quad t\in \R.
\end{equation}
Notice that $h(t)=0$ when $t\neq 0$ and $h(0)=1$.
Therefore for $t<0$, all the energy of the system \eqref{eq: M_t in square} is macroscopic kinetic energy which becomes $h(0)$, the fluctuation energy, at $t=0$. 
For  $t>0$ all the energy is again macroscopic kinetic energy.

By \eqref{eq: energy reverse flow}, for the reverse flow in Section \ref{subsection reverse flow}, the total energy $\displaystyle \int_{\R^4} |\bold v|^2  M_t(d\bold x, d\bold v)$ is decreasing in time. Trivially, the corresponding fluctuation energy $h(t)= 0$ for all $t\in \R$.
\end{Rmk}

\begin{Rmk}
It is possible that from a Statistical Mechanics point of view the non-uniqueness described here can be avoided by excluding a set of flows $M_t$ negligible with respect to some probability measure. Notwithstanding this, our aim here is to understand specific non-uniqueness examples. 
\end{Rmk}

 
\section{Non-Uniqueness from Moments of Measures Satisfying Identical Transport Equations}
\label{layers}
Section \ref{section:reversal} has shown non-uniqueness by comparing moments of the two limit flows $M_t(d\bold x,d \bold v)$ of \eqref{eq: M_t reverse ghost for all t}  and $\widetilde M_t(d\bold x, d \bold v)$ of \eqref{eq: M_t in square}. Note that $M_t$ satisfies weakly the transport equation
\begin{equation} \label{eq: M-eqn}
\partial_t M_t  + \bold v\cdot \nabla_{\bold x} M_t=0,
\end{equation}
while $\widetilde M_t$ satisfies the same with a nonzero kick at $t=0$:
\begin{equation}
\begin{split}
      \partial_t \widetilde M_t 
      +
      \bold v\cdot \nabla_{\bold x} \widetilde M_t
      = 
      \left(
      \widetilde M_{0^+} - \widetilde M_{0^-}
      \right)
      \otimes
      \delta_0(dt),
      \quad t\in \R,      
\end{split}
\end{equation}
for $\displaystyle \widetilde M_{0^{\pm}} = \lim_{t\to 0^{\pm}} \widetilde M_t$. 
In this section we present two examples where two different measures solve the same transport equation \eqref{eq: M-eqn}, give identical macroscopic density and velocity at $t=0$, but the macroscopic density and velocity evolve differently to provide a non-uniqueness result for the Cauchy problem of the compressible Euler system in space dimension one. 

\subsection{Finite systems with velocity exchange}
\label{sec: Zemlyakov}
For systems in space dimension $1$, we use identical molecules that move freely until they collide. The arguments in this section also hold for systems \eqref{Ham} of (finite range, at least) interactions, rescaled as in \eqref{Uinteraction}. In fact, there exist $\sigma_N$'s such that, for space dimension $1$, the limit of elastic collisions coincides with  the limit of rescaled interactions, see \cite{X}. However, such $\sigma_N$'s might be too small for the rescaled interaction model to be physically better than elastic collisions. 
For simplicity then, we shall use elastic collisions. The complications of finite range interactions were evident in Section \ref{Section:Gost}.

In the elastic collision model collisions are instantaneous. Momentum and energy are conserved.
Here it will be enough to consider only two kinds of collisions, both compatible with finite range interaction dynamics:
\begin{enumerate}
\item
Binary collisions with incoming velocities $v_1$, $v_2$ and outgoing velocities $v_1'$, $v_2'$ satisfying
\begin{equation}
\begin{split}
    \left.
    \begin{array}{l}
    v_1 +v_2 = v_1'+v_2'\\ 
    v_1^2 +v_2^2 = (v_1')^2+(v_2')^2
    \end{array}
    \right\}
    \Rightarrow
    v_1 = v_2', \ v_2 = v_1',
\end{split}
\end{equation} 
i.e.\ the molecules exchange velocities (as they are not allowed to go through each other). 
\item
Triple collisions, consisting of two molecules exactly as in item (1) and a third molecule in between that stays motionless.
\end{enumerate} 
As Zemlyakov shows in his delightful article \cite{Z}, several important questions for such systems can be answered using the graphs of the molecule positions as functions of time. Following this, the two types of collision we consider are shown in Figure \ref{fig:Types}.
\begin{figure}
    \subfloat
{
\begin{tikzpicture}
[scale=.6]
\draw   [help lines, ->] (0,0)  -- (0,6);
\coordinate [label=right:${t}$]  (t) at (6,0);
\draw   [help lines, ->] (0,0)  -- (6,0);
\coordinate [label=above:${x}$]  (x) at (0,6);

\coordinate [label=left:${x_1(t_0)}$]  (x0) at (0,1);
\coordinate [label=left:${x_2(t_0)}$]  (x3) at (0,5);

\node at (0,1)  {.};
\node at (0,5)  {.};
\node at (0,0)  {.};

\draw  [lightgray,ultra thick] (0, 1)--(2,3);
\draw  [ultra thick, ->] (2, 3)--(4.3,5.3);
\draw  [ultra thick] (0, 5)--(2,3);
\draw  [lightgray,ultra thick, ->] (2, 3)--(4.3,.7);
\end{tikzpicture}
}
\qquad \    
\subfloat
{
\begin{tikzpicture}
[scale=.65]
\draw   [help lines, ->] (0,0)  -- (0,6);
\coordinate [label=right:${t}$]  (t) at (6,0);
\draw   [help lines, ->] (0,0)  -- (6,0);
\coordinate [label=above:${x}$]  (x) at (0,6);

\coordinate [label=left:${x_1(t_0)}$]  (x0) at (0,1);
\coordinate [label=left:${x_2(t_0)}$]  (x2) at (0,3);
\coordinate [label=left:${x_3(t_0)}$]  (x3) at (0,5);

\node at (0,1)  {.};
\node at (0,3)  {.};
\node at (0,5)  {.};
\node at (0,0)  {.};

\draw  [lightgray,ultra thick] (0, 1)--(2,3);
\draw  [ultra thick, ->] (2, 3)--(4.3,5.3);
\draw  [ultra thick,dashed, ->] (0, 3)--(4.3,3);
\draw  [ultra thick] (0, 5)--(2,3);
\draw  [lightgray,ultra thick, ->] (2, 3)--(4.3,.7);
\end{tikzpicture}
}
    \caption{The collisions of subsection \ref{sec: Zemlyakov}.}
    \label{fig:Types}
\end{figure}
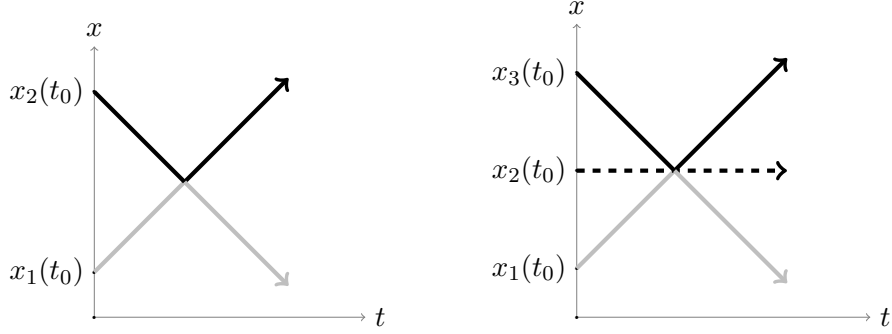

Consider a $1$-dimensional point system $\left(x_k^{(N)}(t), u_k^{(N)}(t)\right)$, $k=1,\ldots,N$ obeying elastic collision dynamics. Fix any $T\in (0,\infty)$. For all $t\in[0,T]$, assume that all collisions are binary or triple as above.
\begin{Prop} \label{Prop: Zemlyakov}
Let $S_t(x,v)=(x+vt,v)$. For all $t\in[0,T]$ the empirical measures
\begin{equation}
M_t(dx,dv)
=
\frac1N \sum_{k = 1}^N \delta_{\left(x_k^{(N)}(t), u_k^{(N)}(t)\right)}(dx,dv)
\end{equation}
satisfy
\begin{equation}
     M_t^{(N)}(dx,dv) = S_t M_0^{(N)}(dx,dv).
\end{equation}
\end{Prop}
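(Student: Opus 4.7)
The plan is to exploit that, for indistinguishable molecules undergoing the two collision types listed, the empirical measure $M_t^{(N)}$ cannot detect the velocity permutation effected at a collision, because the colliding molecules share a spatial position at that instant. Once this ``collision invariance'' of $M_t^{(N)}$ is established, the claim reduces to pure free-transport evolution between collisions.

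First I would verify that $M_t^{(N)}$ has no jump across any collision time $t_0 \in [0,T]$. For a binary collision of molecules $i,j$ at the common position $x^\ast$, with incoming velocities $v_1,v_2$ and outgoing $v_2,v_1$, the post-collision contribution to $M_{t_0^+}^{(N)}$ is $\tfrac{1}{N}\bigl(\delta_{(x^\ast,v_2)}+\delta_{(x^\ast,v_1)}\bigr)$, while the pre-collision contribution is $\tfrac{1}{N}\bigl(\delta_{(x^\ast,v_1)}+\delta_{(x^\ast,v_2)}\bigr)$; these coincide as (unordered) sums of atoms, so $M_{t_0^-}^{(N)} = M_{t_0^+}^{(N)}$. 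The triple-collision case is handled identically: the stationary middle molecule contributes the same atom $\delta_{(x^\ast,0)}$ on both sides, while the two outer molecules perform the same binary-type swap at the same $x^\ast$. Second, on any inter-collision interval $[s,t] \subset [0,T]$ the dynamics are force-free, so $x_k^{(N)}(t) = x_k^{(N)}(s)+(t-s)u_k^{(N)}(s)$ and $u_k^{(N)}(t)=u_k^{(N)}(s)$ for every $k$, which is exactly the statement that each atom of $M_s^{(N)}$ is pushed forward by $S_{t-s}$, i.e., $M_t^{(N)} = S_{t-s}M_s^{(N)}$. Third, using the semigroup property $S_t = S_{t-s} \circ S_s$ and inducting over the (finitely many in $[0,T]$) collision times, collision invariance lets one splice these free-transport segments together to conclude $M_t^{(N)} = S_t M_0^{(N)}$ for every $t \in [0,T]$.

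The main obstacle is bookkeeping rather than analysis: one must confirm that at each collision instant the Dirac masses whose velocities are permuted genuinely sit on the same spatial atom, so that their sum is invariant under the permutation. This is the Zemlyakov observation referenced above the statement, translated from trajectory pictures to empirical measures, and it is robust to the two collision types allowed here. A minor side check is the finiteness of collisions in $[0,T]$, which follows from the standing hypothesis that within $[0,T]$ only binary and triple collisions occur and the molecules are otherwise in free flight.
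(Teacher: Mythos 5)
Your proof is correct and relies on the same core observation as the paper's: exchanging the velocities of molecules that share a spatial point is a permutation of Dirac atoms and hence invisible to the empirical measure. The paper expresses this as a bijection of orbits traced back through the finitely many collisions at a fixed time $t$, while you express it as two-sided continuity of $M_t^{(N)}$ at collision times spliced with free-flight segments via the semigroup property of $S_t$ --- the same idea, framed as an induction over collision times rather than a single global bijection.
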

\begin{proof} Merely notice that for each $t$
\begin{equation}
\begin{split}
     \frac1N \sum_{k = 1}^N \delta_{\left(x_k^{(N)}(t), u_k^{(N)}(t)\right)}
     =
     \frac1N \sum_{k = 1}^N \delta_{\left(x_k^{(N)}(0)+t u_k^{(N)}(0), u_k^{(N)}(0)\right)}
\end{split}
\end{equation}
since there is a bijection, if multiplicities are taken into account: 
\begin{equation}
\begin{split}
      \left\{\left(x_k^{(N)}(t), u_k^{(N)}(t)\right)\right\}\leftrightarrow
      \left\{\left(x_k^{(N)}(0)+t u_k^{(N)}(0), u_k^{(N)}(0)\right)\right\}.
\end{split}
\end{equation}
Indeed, the  exchange of velocities between the moving molecules of a collision establishes a bijection between the orbits before and after that collision. Iterating this finitely many times brings us back to the initial orbits given by $
\left(x_k^{(N)}(0)+t u_k^{(N)}(0), u_k^{(N)}(0)\right)$.
\end{proof}
The following Lemma will be used repeatedly.
\begin{Lem}\label{transport flow limit}
Suppose that
\begin{equation}
   M_t^{(N)}(d x,d v)=S_t M_0^{{N}}(d x,d v),\quad
   M_0^{(N)}(d x,d v)\Rightarrow M_0(d x,d v).
\end{equation}
Then
$M_t^{(N)}(dx,dv)\Rightarrow  S_t M_0(dx,d v)$.
\end{Lem}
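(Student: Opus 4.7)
The plan is to reduce the lemma to a one-line application of the continuous mapping theorem for weak convergence, built on two ingredients: the change-of-variables formula for push-forward measures, and the (obvious) fact that $S_t(x,v) = (x+vt,v)$ is a continuous self-map of $\R^2$. Since the map $S_t$ does not depend on $N$, nothing nontrivial about the dynamics enters; weak convergence should propagate through $S_t$ automatically.

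Concretely, I would fix an arbitrary bounded continuous test function $f:\R^2\to\R$ and rewrite
\[
\int_{\R^2} f(x,v)\, M_t^{(N)}(dx,dv)
= \int_{\R^2} f(S_t(x,v))\, M_0^{(N)}(dx,dv)
= \int_{\R^2} f(x+vt,v)\, M_0^{(N)}(dx,dv),
\]
using the hypothesis $M_t^{(N)} = S_t M_0^{(N)}$ and the definition of push-forward. The key observation is then that the function $g := f\circ S_t$, given by $g(x,v) = f(x+vt,v)$, is itself bounded (because $f$ is) and continuous (because $S_t$ is continuous and composition preserves continuity). Applying the weak convergence hypothesis $M_0^{(N)}\Rightarrow M_0$ to this admissible test function $g$ yields
\[
\int_{\R^2} f(x+vt,v)\, M_0^{(N)}(dx,dv) \longrightarrow \int_{\R^2} f(x+vt,v)\, M_0(dx,dv) = \int_{\R^2} f(x,v)\,(S_t M_0)(dx,dv),
\]
where the final equality is again the push-forward change-of-variables formula, now applied to $M_0$. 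Combining the two displays, $\int f\, dM_t^{(N)} \to \int f\, d(S_t M_0)$ for every bounded continuous $f$, which is precisely $M_t^{(N)} \Rightarrow S_t M_0$.

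There is no genuine obstacle here; the only conceptual move is to test weak convergence against the transformed function $f\circ S_t$ rather than $f$ itself, which is the standard trick underlying the continuous mapping theorem for push-forwards. The argument works uniformly in $t$ because continuity of $S_t$ holds for every fixed $t\in\R$.
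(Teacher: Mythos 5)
Your proof is correct and follows exactly the approach the paper has in mind: the paper's own proof is the one-line remark ``Use the definitions of weak convergence and push forward under $S_t$,'' and your argument is precisely the standard unpacking of that remark, testing weak convergence against $f\circ S_t$ and invoking continuity of $S_t$.
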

\begin{proof}
Use the definitions of weak convergence and push forward under $S_t$.
\end{proof}
%
%

As it is standard that $M_t(d x,d v)=S_t M_0(d x,d v)$ solves weakly the free transport equation
\begin{equation}
\partial_t M_t  + v\partial_x M_t=0
\end{equation}
we shell refer to it as the {\em a free transport flow}.


\subsection{Euler system from free transport flow}\label{hydro equations}

We find here conditions that imply that averages with respect to free transport flow satisfy the compressible Euler system in dimension $1$.
The next two subsections provide examples satisfying such conditions.

\begin{Lem}\label {lem: macro eqn of g(v)}
Suppose that $M_t(dx,dv)=S_t M_0(dx,dv)$. Then for all $\phi(t,x)\in C_c^1\left( [0,T) \times \R \right)$ and $g(v)$ such that $v g(v) \in L^1 \left( M_0\right)$, we have
\begin{equation} \label {eq: general initial value eqn}
\begin{split}
    \int_0^T \int_{\R^{2}} 
    [
    \partial_t\phi(t, x) g(v)
    +&
    \partial_x \phi(t,x)\, vg(v)
     ]
     M_t(d x, dv )dt\\
    &\quad\quad\quad\quad\quad\quad+
    \int_{\R^{2}} \phi(0,x) g(v) M_0(dx,dv)
    =
    0.
\end{split}
\end{equation}
\end{Lem}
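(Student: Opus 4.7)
The plan is to reduce everything to the definition of the push-forward $M_t = S_t M_0$ and then recognize a total time derivative. Concretely, for any Borel $f(x,v)$ integrable against $M_t$ one has, by the change of variables formula for push-forwards,
\begin{equation*}
\int_{\R^2} f(x,v)\, M_t(dx,dv) = \int_{\R^2} f(x+vt, v)\, M_0(dx,dv).
\end{equation*}
I would apply this identity to $f_1(x,v) = \partial_t\phi(t,x) g(v)$ and $f_2(x,v)=\partial_x\phi(t,x) v g(v)$, converting both space-time integrals into integrals over $[0,T]\times\R^2$ with respect to $dt\otimes M_0(dx,dv)$.

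Next I would invoke Fubini to swap the $t$-integral with the $M_0$-integral. The integrability hypothesis $vg(v)\in L^1(M_0)$ together with the fact that $\phi\in C_c^1$ (so $\phi$, $\partial_t\phi$, $\partial_x\phi$ are all bounded and vanish outside a compact time-space set) controls the dominating function: the relevant integrand is bounded by $T\,(\|\partial_t\phi\|_\infty |g(v)| + \|\partial_x\phi\|_\infty |vg(v)|)$ on the support in $(x,v)$, and for each fixed $v$ the $x$-support is a compact translate of the $x$-support of $\phi$, which gives the required integrability against the probability measure $M_0$. After Fubini, the inner $t$-integral becomes
\begin{equation*}
\int_0^T \bigl[\partial_t\phi(t,x+vt) + v\,\partial_x\phi(t,x+vt)\bigr]\, dt = \int_0^T \frac{d}{dt}\bigl[\phi(t,x+vt)\bigr]\, dt.
\end{equation*}

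The key observation is that this last integrand is the total derivative along the characteristic $t\mapsto x+vt$, so the fundamental theorem of calculus gives $\phi(T,x+vT)-\phi(0,x)$. Because $\phi$ has compact support in $[0,T)\times\R$, the boundary term at $t=T$ vanishes identically, and what remains is $-\phi(0,x)$. Substituting back yields
\begin{equation*}
\int_0^T\!\!\int_{\R^2}\!\bigl[\partial_t\phi\, g + \partial_x\phi\, vg\bigr]\, M_t\, dt = -\int_{\R^2}\phi(0,x) g(v)\, M_0(dx,dv),
\end{equation*}
which rearranges to \eqref{eq: general initial value eqn}.

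The only real obstacle is the bookkeeping for Fubini, since $g$ need not itself be in $L^1(M_0)$: the hypothesis only controls $vg(v)$. This is handled by noting that on the support of $\phi$ the variable $x$ is bounded, hence on the support of $\phi(t,x+vt)$ we have $|v|$ bounded below by a constant depending on $x$ when $|v|$ is large, making $|g(v)|\le C|vg(v)|$ in that regime; alternatively, since $\phi$ is compactly supported in $[0,T)$, the boundary contribution at $t=T$ is zero and one can approximate $g$ by $g\,\chi_{\{|v|\le R\}}$ and pass to the limit using dominated convergence with $|vg(v)|$ as dominating function. Everything else is a routine application of the push-forward identity.
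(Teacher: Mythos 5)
Your proof is correct and is exactly the ``straightforward calculation'' the paper has in mind: push forward via $S_t$, recognize the integrand as the total derivative $\frac{d}{dt}\bigl[\phi(t,x+vt)\bigr]$ along characteristics, integrate by the fundamental theorem of calculus, and use that $\phi$ vanishes near $t=T$ so only the $t=0$ boundary term survives. The paper's proof is a one-line remark to this effect, so you are matching it.

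One small comment on the integrability discussion: you overthink it slightly, and the fix you sketch is not quite the clean one. The worry is that the hypothesis $vg\in L^1(M_0)$ does not by itself give $g\in L^1(M_0)$; your attempt to extract this from the compact support of $\phi$ in $x$ does not actually work, because for any fixed $x$ and any $v\neq 0$ one can choose $t$ so that $x+vt$ lies in the support of $\phi$, so there is no lower bound on the admissible $|v|$. The cleaner observation is that in every application of the Lemma $g$ is a polynomial (hence continuous and bounded on compacts), so one has the elementary split $|g(v)|\le \sup_{|w|\le 1}|g(w)| + |v g(v)|$ for all $v$; combined with $M_0$ being a probability measure this yields $g\in L^1(M_0)$, and then your $T\bigl(\|\partial_t\phi\|_\infty|g(v)|+\|\partial_x\phi\|_\infty|vg(v)|\bigr)$ dominating function is integrable and Fubini applies. (Strictly speaking the Lemma as stated should probably add continuity of $g$, or $g\in L^1(M_0)$, to its hypotheses; this is harmless for the paper's use of it.)
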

\begin{proof}
Straight forward calculation using the definition of the push forward under $S_t$ and the assumption that $\phi$ is compactly supported.
\end{proof}
Disintegrating $M_t(dx,dv)$ of Lemma \ref{lem: macro eqn of g(v)} as 
 \begin{equation}
 \displaystyle M_t(dx,dv) = \int M_{t,x}(dv) \mu_t(dx),
 \end{equation}
and for
  \begin{equation}
 \overline{g(v)}(t,x) =  \int g(v) M_{t,x}(dv),
 \end{equation}
 \eqref{eq: general initial value eqn} becomes 
\begin{equation} \label {eq: general initial value eqn in bar form}
\begin{split}
    \int_0^T \int_{\R}  
    &\left[
    \partial_t\phi(t,x) \, \overline{g(v)}(t,x)
    +
    \partial_x \phi(t,x) \, \overline{vg(v)}(t,x)
     \right]
     \mu_t(d x)dt\\
    &\quad\quad\quad\quad\quad\quad\quad\quad\quad\quad\quad\quad+
    \int_{\R } \phi(0,x)\, \overline{g(v)}(0,x) \mu_0 (dx)
    =
    0.
\end{split}
\end{equation}
To apply Lemma  \ref {lem: macro eqn of g(v)} for $g(v)=1,v$, and $\dfrac12 v^2$, assume $v^3 \in L^1 \left( M_0\right)$.
Noting that
\begin{equation}
\displaystyle u(t,x) =\overline v(t,x)= \int v M_{t,x}(dv),
\end{equation}
and using the notation
\begin{equation} \label{eq: xi^2 xi^3}
\overline {\xi^2}(t,x)
=
 \int_{\R} (v-u(t,x))^2 M_{t,x}(dv),
 \quad
 \overline {\xi^3}(t,x)
=
 \int_{\R} (v-u(t,x))^3 M_{t,x}(dv),
\end{equation}
it follows that 
\begin{equation} \label {eq: decomposition v v^2 v^3}
\begin{split}
    & \overline{v^2}(t,x) =  u^2(t,x) + \overline{\xi^2} (t,x),\\
    & \overline{v^3}(t,x) =  u^3(t,x) 
    + 3u(t,x) \overline{\xi^2} (t,x) + \overline{\xi^3}(t,x).
\end{split}
\end{equation}
Then  \eqref{eq: general initial value eqn in bar form} for $\displaystyle g(v) = 1, v,$  and $\dfrac 12 v^2$ gives
\begin{equation} \label{eq: general initial value eqn in bar form and decomposed}
\begin{split}
    &\int_0^T \int_{\R} 
    ( \partial_t\phi  +  \partial_x \phi u )
     \mu_t(d x)dt
     +
    \int_{\R} \phi(0,x) \mu_0(dx)
    =
    0,\\
    &\int_0^T \int_{\R} 
    \left( \partial_t\phi u  +  \partial_x\left( \phi u^2  + \overline{\xi^2} \right) \right)
     \mu_t(d x)dt
     +
    \int_{\R} \phi(0,x) u\, \mu_0(dx)
    =
    0,\\
    &\int_0^T \int_{\R} 
   \left\{
     \partial_t\phi \left(\dfrac 12 u^2 + \dfrac 12 \overline{\xi^2}\right)
       +
     \partial_x \phi \left[ \left( \dfrac 12 u^2  + \dfrac 32 \overline{\xi^2} \right) u +\dfrac {\overline{\xi^3}}{2} \right ]
     \right\}
     \mu_t(d x)dt\\
     &\quad\quad\quad\quad\quad\quad\quad\quad\quad\quad\quad\quad\quad\quad
     +
    \int_{\R} \phi(0,x) \left(\dfrac 12 u^2(0,x) + \dfrac 12 \overline{\xi^2}(0,x)\right)  \mu_0(dx)
    =
    0.\\
\end{split}
\end{equation}
Moreover, if $\mu_t(dx) = \rho(t,x)dx$, $\overline{\xi^3}(t,x)=0$ and for $e(t,x) = \dfrac{ \overline{\xi^2}(t,x)}{2}$, $p=2\rho e$,  \eqref{eq: general initial value eqn in bar form and decomposed} shows that $\rho, u, e$ solve weakly the Cauchy problem
 \begin{equation} \label{cEic with internal energy}
\left\{
\begin{array}{l}
        \partial_t\rho+\partial_x(\rho u)=0\\
        \partial_t(\rho u)+\partial_x\left(\rho        
        u^2\right)+\partial_x p=0\\
        
        \partial_t
        \left(\rho \dfrac{u^2}2 + 
        \rho e\right)
        +
        \partial_x 
        \left( \rho u
        \left(
        \dfrac{u^2}2 + 
        e 
        \right)+pu
        \right)
        =0,\\
        p = 2\rho e,\\
        
        \rho\vert_{t=0} = \rho(0,x), \quad
        u\vert _{t=0} = u(0,x), \quad 
        e\vert_{t=0} = e(0,x),
\end{array}
\right.
\end{equation}
the one dimensional Euler system, cf.\ \cite{CF}, p.\,7. In summary, we have shown:
\begin{Prop}\label{prop: euler from free flow}
For $M_t(dx,dv)=S_t M_0(dx,dv)$, suppose that $v^3 \in L^1 \left( M_0\right)$, $\mu_t(dx) = \rho(t,x)dx$, and $\overline{\xi^3}(t,x)=0$. Then $\rho(t,x), u(t,x), e(t,x)$ as defined above is a weak solution to the one dimensional Euler system \eqref{cEic with internal energy}.
\end{Prop}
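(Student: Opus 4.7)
The plan is to observe that the proposition follows essentially by bookkeeping from what has already been assembled, so the proof reduces to verifying that the hypotheses license each substitution. First I would check the integrability requirements: since $v^3 \in L^1(M_0)$ and $S_t$ preserves the velocity coordinate, the moments $\overline{v^k}(t,x)$ are well defined for $k = 0,1,2,3$ and $v^k g(v)$ is $M_0$-integrable for $g(v) = 1, v, \tfrac{1}{2}v^2$ with $k \leq 1$. This lets me apply Lemma \ref{lem: macro eqn of g(v)} for each of these three choices of $g$.

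Next I would invoke the disintegration $M_t(dx,dv) = \int M_{t,x}(dv)\,\mu_t(dx)$ to pass from \eqref{eq: general initial value eqn} to the bar-form \eqref{eq: general initial value eqn in bar form}, and then use the decomposition \eqref{eq: decomposition v v^2 v^3} of $\overline{v^2}$ and $\overline{v^3}$ around the mean $u$ in terms of the centered moments $\overline{\xi^2}, \overline{\xi^3}$ of \eqref{eq: xi^2 xi^3}. This produces exactly the system \eqref{eq: general initial value eqn in bar form and decomposed}, which is the weak formulation in terms of the generic measure $\mu_t(dx)$ together with the fields $u, \overline{\xi^2}, \overline{\xi^3}$.

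At this point I would plug in the standing assumptions: $\mu_t(dx) = \rho(t,x)\,dx$, set $e = \tfrac{1}{2}\overline{\xi^2}$ and $p = 2\rho e = \rho\overline{\xi^2}$, and use $\overline{\xi^3} \equiv 0$ to kill the cubic flux in the energy equation. Reading off each line of \eqref{eq: general initial value eqn in bar form and decomposed} then gives precisely the weak form (with initial data at $t=0$) of the three conservation laws
\begin{equation*}
\partial_t\rho + \partial_x(\rho u) = 0,\quad
\partial_t(\rho u) + \partial_x(\rho u^2 + p) = 0,\quad
\partial_t\!\left(\rho\tfrac{u^2}{2} + \rho e\right) + \partial_x\!\left(\rho u\bigl(\tfrac{u^2}{2} + e\bigr) + p u\right) = 0,
\end{equation*}
together with the equation of state $p = 2\rho e$ from the definition, which is the system \eqref{cEic with internal energy}.

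There is no real obstacle beyond careful bookkeeping; the one place I would be most attentive is the energy equation, where the assumption $\overline{\xi^3} = 0$ is essential to eliminate the heat-flux term $\tfrac{1}{2}\rho\overline{\xi^3}$ that would otherwise appear in the flux and obstruct matching the classical Euler form. Everything else is algebraic rearrangement of \eqref{eq: general initial value eqn in bar form and decomposed} using \eqref{eq: decomposition v v^2 v^3} and the definitions $e = \overline{\xi^2}/2$, $p = 2\rho e$.
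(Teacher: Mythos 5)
Your proof is correct and follows exactly the route the paper takes: apply Lemma \ref{lem: macro eqn of g(v)} for $g(v)=1,v,\tfrac12 v^2$, disintegrate to obtain \eqref{eq: general initial value eqn in bar form}, expand $\overline{v^2},\overline{v^3}$ about $u$ via \eqref{eq: decomposition v v^2 v^3}, and then impose $\mu_t=\rho\,dx$, $\overline{\xi^3}=0$, $e=\overline{\xi^2}/2$, $p=2\rho e$ to read off \eqref{cEic with internal energy}. The paper presents this as the derivation culminating in the proposition rather than as a separate proof, and you have correctly identified the one nontrivial hypothesis: $\overline{\xi^3}=0$ is precisely what removes the heat-flux term from the energy flux.
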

{The definition of initial conditions for weak solutions here is compatible with the one in \cite{dP}, p.\,2 and \cite{VF}, \S VII.10.}
Two examples satisfying the conditions of this proposition now follow.

\subsection{Two-layer system}   \label{sec: isentropic molecular}

For $N$ fixed, consider $N=2n$ point molecules $x_1$, $x_2$,\ldots, $x_{N}$ on the real line, with
\begin{equation} \label{eq: ic for 2-layer}
\begin{split}
   x_k(0)=\dfrac{k}{N},
   \quad 
   u_k(0)=
   \left\{
   \begin{array}{cc}
   1& \text{for $k$ odd}\\
   -1& \text{for $k$ even}.
   \end{array}
   \right.
\end{split}
\end{equation}
Let the system evolve as in subsection \ref{sec: Zemlyakov}. After the first $n$ simultaneous collisions take place the molecules with labels $1$ and ${N}$  move with velocities $1$ and ${-1}$, respectively, without ever interacting with any other molecule again.
The remaining molecules now form a replica of the initial system, reduced by two molecules.
 
As in \cite{Z}, the graphs of the positions as functions of time show the evolution of the system, Figure \ref{fig:2-layer}. 
\begin{figure}
\begin{tikzpicture}[scale=.3]
\draw   [help lines, ->] (0,0)  -- (0,20);
\coordinate [label=right:${t}$]  (t) at (20,0);
\draw   [help lines, ->] (0,0)  -- (20,0);
\coordinate [label=above:${x}$]  (x) at (0,20);
\coordinate [label=left:${x_1(0)}$]  (x0) at (0,0);
\coordinate [label=left:${x_{N}(0)}$]  (xN) at (0,18);
\node at (0,0)  {.};
\node at (0,2)  {.};
\node at (0,4)  {.};
\node at (0,6)  {.};
\node at (0,8)  {.};
\node at (0,10)  {.};
\node at (0,12)  {.};
\node at (0,14)  {.};
\node at (0,16)  {.};
\node at (0,18)  {.};

\draw  [dotted,ultra thick] (0, 0)--(1,1);
\draw  [ultra thick] (1, 1)--(3,3);
\draw  [dotted,ultra thick] (3, 3)--(5,5);
\draw  [ultra thick] (5, 5)--(7,7);
\draw  [dotted,ultra thick] (7, 7)--(9,9);
\draw  [ultra thick, ->] (9, 9)--(18,18);

\draw  [dotted,ultra thick] (0, 4)--(1,5);
\draw  [ultra thick] (1, 5)--(3,7);
\draw  [dotted,ultra thick] (3, 7)--(5,9);
\draw  [ultra thick] (5, 9)--(7,11);
\draw  [dotted,ultra thick, ->] (7, 11)--(14,18);

\draw  [dotted,ultra thick] (0, 8)--(1,9);
\draw  [ultra thick] (1,9)--(3,11);
\draw  [dotted,ultra thick] (3,11)--(5,13);
\draw  [ultra thick, ->] (5, 13)--(10,18);

\draw  [dotted,ultra thick] (0, 12)--(1,13);
\draw  [ultra thick] (1, 13)--(3,15);
\draw  [dotted,ultra thick,->] (3, 15)--(6,18);

\draw  [dotted,ultra thick] (0, 16)--(1,17);
\draw  [ultra thick, ->] (1, 17)--(2,18);

\draw  [ultra thick] (0, 2)--(1,1);
\draw  [dotted,ultra thick, ->] (1, 1)--(2,0);

\draw  [ultra thick] (0, 6)--(1,5);
\draw  [dotted,ultra thick] (1, 5)--(3,3);
\draw  [ultra thick, ->] (3, 3)--(6,0);
 
\draw  [ultra thick] (0, 10)--(1,9);
\draw  [dotted,ultra thick] (1,9)--(3,7);
\draw  [ultra thick] (3, 7)--(5,5);
\draw  [dotted,ultra thick, ->] (5, 5)--(10,0);
 
\draw  [ultra thick] (0, 14)--(1,13);
\draw  [dotted,ultra thick] (1, 13)--(3,11);
\draw  [ultra thick] (3, 11)--(5,9);
\draw  [dotted,ultra thick] (5, 9)--(7,7);
\draw  [ultra thick, ->] (7, 7)--(14,0);

\draw  [ultra thick] (0, 18)--(1,17);
\draw  [dotted,ultra thick] (1, 17)--(3,15);
\draw  [ultra thick] (3, 15)--(5,13);
\draw  [dotted,ultra thick] (5, 13)--(7,11);
\draw  [ultra thick] (7, 11)--(9,9);
\draw  [dotted,ultra thick, ->] (9, 9)--(18,0);
 
\end{tikzpicture}
\caption{Microscopic evolution of subsection \ref{sec: isentropic molecular}.} \label{fig:2-layer}
\end{figure}
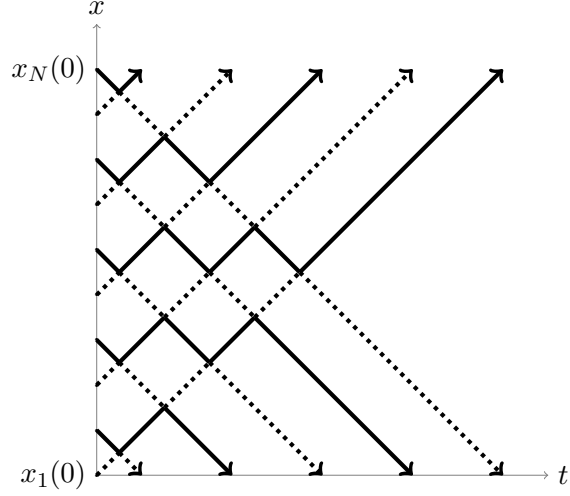 
For 
\begin{equation}
       M_t^{(N)}
        =
        \dfrac 1{N}
        \sum_{k=1}^{N} \delta_{\left(x_k(t),u_k(t)\right)},
\end{equation}
according to Proposition \ref{Prop: Zemlyakov},
\begin{equation}  
\begin{split}
       M_t^{(N)} = S_t M_0^{(N)}. 
\end{split}
\end{equation}
On the other hand, it is easy to check that as $N \to \infty$, 
\begin{equation}
\begin{split}
     M_0^{(N)}(dx,dv) 
     \Rightarrow 
     M_0(dx,dv)
        &=
        \chi_{[0,1]}(x)dx
        \otimes
        \left(
        \dfrac 12 \delta_{-1}(dv)+\dfrac 12 \delta_{1}(dv)
        \right), \end{split}
\end{equation}
therefore, by Lemma \ref{transport flow limit},
\begin{equation}
\begin{split}
     M_t^{(N)} 
     \Rightarrow 
     M_t=S_t M_0
     , 
     \quad N \to \infty. 
\end{split}
\end{equation}
It is straightforward to calculate that
\begin{equation}
\begin{split}
    M_t(dx,dv) 
    = 
    \frac12 \chi_{[t, t+1]}(x) dx\otimes \delta_1(dv)
    +
    \frac12 \chi_{[-t, -t+1]}(x) dx\otimes \delta_{-1}(dv).
\end{split}
\end{equation}
$M_t$ describes two layers, each of total mass $1/2$, initially overlapping on the interval $[0,1]$, moving with velocities $\pm1$ for $t\geq 0$, see Figure \ref{Fig: two vertical leayers}.
\begin{figure}
%
%
%
%
%
%
%
%
\begin{tikzpicture}
[scale=.2]


\coordinate [label=left:${0}$]  (x0) at (0,0);
\coordinate [label=left:${1}$]  (x3) at (0,10);

\coordinate [label=below:${t=0}$]  (t0) at (0,-1);
\coordinate [label=below:${t=.25}$]  (t1) at (6,-2);
\coordinate [label=below:${t=.5}$]  (t2) at (12,-5);
\coordinate [label=below:${t=.75}$]  (t3) at (18,-7);

\fill (0,0)--(0,10)--(.5,10)--(.5,0);

\fill [gray](6,-2.5)--(6,2.5)--(6.5,2.5)--(6.5,-2.5);
\fill [gray](6,7.5)--(6,12.5)--(6.5,12.5)--(6.5,7.5);
\fill [](6,2.5)--(6,7.5)--(6.5,7.5)--(6.5,2.5);

\fill [gray]   (12,-5)--(12,15)--(12.5,15)--(12.5,-5);

\fill [gray] (18,-7.5)--(18,2.55)--(18.5,2.55)--(18.5,-7.5);
\fill [gray] (18,7.5)--(18,17.5)--(18.5,17.5)--(18.5,7.5);

\end{tikzpicture}
    \caption{Macroscopic evolution of subsection \ref{sec: isentropic molecular}.}
    \label{Fig: two vertical leayers}
\end{figure}
The macroscopic density, velocity and energy density given by $M_t$ are
\begin{equation}
\left\{
 \begin{array}{l}
         \rho(t,x)
         =\dfrac12 \chi_{[t,1+t]}(x) + \dfrac12 \chi_{[-t,1-t]}(x)
         \\
        u(t,x)
        = \chi_{[t,1+t]}(x)  - \chi_{[-t,1-t]}(x),
        \\
        e(t,x)
        = \dfrac12 \chi_{[-t,1-t]}(x) \cdot \chi_{[t,1+t]}(x).
  \end{array}
  \right.
\end{equation}
Notice that $\displaystyle \int_{\R^2} |v|^3 M_0(dx,dv)<\infty$ and 
\begin{equation}
\overline{\xi^3}
=
\int_{\R} (v-u(t,x))^3 M_{t,x}(dv)
=
0.
\end{equation}
Therefore, by Proposition \ref{prop: euler from free flow}, $(\rho,u,e)$ is a solution to the Euler system
 \begin{equation} \label{system: euler from 2 layers}
\left\{
\begin{array}{l}
        \partial_t\rho+\partial_x(\rho u)=0\\
        \partial_t(\rho u)+\partial_x\left(\rho        
        u^2\right)+\partial_x p=0\\
        \partial_t
        \left(\rho \dfrac{u^2}2 + 
        \rho e\right)
        +
        \partial_x 
        \left( \rho u
        \left(
        \dfrac{u^2}2 + 
        e 
        \right)+pu
        \right)
        =0,\\
        p = 2\rho e,\\      
        \rho\vert_{t=0} = \chi_{[0,1]}(x), \quad
        u\vert _{t=0} = 0, \quad 
        e\vert_{t=0} = \dfrac12 \chi_{[0,1]}(x) .
\end{array}
\right.
\end{equation}

\subsection{Three-layer system} \label{sec: 3-layers}
Consider now for each $N=3n$ a second system, consisting of $N$ molecules $x_1, x_2,\ldots, x_{N}$ on the real line with 
\begin{equation} \label{eq: 3-layer initial}
\begin{split}
   &x_k(0)=\dfrac{k}{N},
   \quad k = 1, \ldots, N, \\
   &u_k(0)=
   \left\{
   \begin{array}{lll}
   \sqrt{6}/2   & \text{for $k=3m-2$}&\\
   0     & \text{for $k = 3m-1$}&\\
   -\sqrt{6}/2  & \text{for $k = 3m$},&\quad m = 1, \ldots, n,
   \end{array}
   \right.
   \end{split}
\end{equation}
also evolving under elastic collisions as in section \ref{sec: Zemlyakov}. 

The evolution of the system initialized by \eqref{eq: 3-layer initial} is shown in Figure \ref{fig:3-layer}. 
\begin{figure}
\begin{tikzpicture}[scale=.45]
\draw   [help lines, ->] (0,0)  -- (0,15);
\coordinate [label=right:${t}$]  (t) at (15,0);
\draw   [help lines, ->] (0,0)  -- (15,0);
\coordinate [label=above:${x}$]  (x) at (0,15);

\draw  [ultra thick] (0, 1)--(1,2);
\draw  [ultra thick, lightgray] (0, 2)--(1,2);
\draw  [ultra thick, dotted] (0, 3)--(1,2);

\draw  [ultra thick] (0, 4)--(1,5);
\draw  [ultra thick, lightgray] (0, 5)--(1,5);
\draw  [ultra thick, dotted] (0, 6)--(1,5);

\draw  [ultra thick] (0, 7)--(1,8);
\draw  [ultra thick, lightgray] (0, 8)--(1,8);
\draw  [ultra thick, dotted] (0, 9)--(1,8);

\draw  [ultra thick] (0, 10)--(1,11);
\draw  [ultra thick, lightgray] (0, 11)--(1,11);
\draw  [ultra thick, dotted] (0, 12)--(1,11);


\draw  [ultra thick] (1, 2)--(2,1);
\draw  [ultra thick, lightgray] (1, 2)--(2,2);
\draw  [ultra thick, dotted] (1, 2)--(2,3);

\draw  [ultra thick] (1,5)--(2,4);
\draw  [ultra thick, lightgray] (1, 5)--(2,5);
\draw  [ultra thick, dotted] (1, 5)--(2,6);

\draw  [ultra thick] (1,8)--(2,7);
\draw  [ultra thick, lightgray] (1, 8)--(2,8);
\draw  [ultra thick, dotted] (1,8)--(2,9);

\draw  [ultra thick] (1, 11)--(2,10);
\draw  [ultra thick, lightgray] (1, 11)--(2,11);
\draw  [ultra thick, dotted] (1, 11)--(2,12);


\draw  [ultra thick, ->] (2,1)--(3, 0);
\draw  [ultra thick, lightgray] (2,2)--(2.5,2);
\draw  [ultra thick, dotted] (2,3)--(2.5,3.5);

\draw  [ultra thick] (2,4)--(2.5,3.5);
\draw  [ultra thick, lightgray] (2,5)--(2.5, 5);
\draw  [ultra thick, dotted] (2,6)--(2.5,6.5);

\draw  [ultra thick] (2,7)--(2.5,6.5);
\draw  [ultra thick, lightgray] (2,8)--(2.5,8);
\draw  [ultra thick, dotted] (2,9)--(2.5,9.5);

\draw  [ultra thick] (2,10)--(2.5,9.5);
\draw  [ultra thick, lightgray] (2,11)--(2.5,11);
\draw  [ultra thick, dotted, ->] (2,12)--(3, 13);


\draw  [ultra thick, lightgray] (2,2)--(4,2);
\draw  [ultra thick, dotted] (2.5,3.5)--(4,2);

\draw  [ultra thick] (2.5,3.5)--(4,5);
\draw  [ultra thick, lightgray] (2,5)--(4, 5);
\draw  [ultra thick, dotted] (2.5,6.5)--(4, 5);

\draw  [ultra thick] (2.5,6.5)--(4,8);
\draw  [ultra thick, lightgray] (2.5,8)--(4,8);
\draw  [ultra thick, dotted] (2.5,9.5)--(4,8);

\draw  [ultra thick] (2.5,9.5)--(4,11);
\draw  [ultra thick, lightgray] (2.5,11)--(4,11);


\draw  [ultra thick, lightgray, ->] (4,2)--(6,0);
\draw  [ultra thick, dotted] (4,2)--(5.5,2);

\draw  [ultra thick] (4,5)--(5.5,3.5);
\draw  [ultra thick, lightgray] (4, 5)--(5.5, 5);
\draw  [ultra thick, dotted] (4, 5)--(5.5,6.5);

\draw  [ultra thick] (4,8)--(5.5, 6.5);
\draw  [ultra thick, lightgray] (4,8)--(5.5,8);
\draw  [ultra thick, dotted] (4,8)--(5.5,9.5);

\draw  [ultra thick] (4,11)--(5.5,11);
\draw  [ultra thick, lightgray, ->] (4,11)--(6,13);


\draw  [ultra thick, dotted] (5.5,2)--(7, 2);

\draw  [ultra thick] (5.5,3.5)--(7,2);
\draw  [ultra thick, lightgray] (5.5, 5)--(7,5);
\draw  [ultra thick, dotted] (5.5,6.5)--(7,5);

\draw  [ultra thick] (5.5, 6.5)--(7,8);
\draw  [ultra thick, lightgray] (5.5,8)--(7,8);
\draw  [ultra thick, dotted] (5.5,9.5)--(7,11);

\draw  [ultra thick] (5.5,11)--(7,11);


\draw  [ultra thick, dotted, ->] (7, 2)--(9,0);

\draw  [ultra thick] (7,2)--(10,2);

\draw  [ultra thick, lightgray] (7,5)--(10,2);
\draw  [ultra thick, dotted, ->] (7,5)--(12,5);

\draw  [ultra thick, ->] (7,8)--(12,8);
\draw  [ultra thick, lightgray] (7,8)--(10, 11);
\draw  [ultra thick, dotted] (7,11)--(10,11);

\draw  [ultra thick, ->] (7,11)--(9,13);


\draw  [ultra thick, ->] (10,2)--(12,0);

\draw  [ultra thick, lightgray, ->] (10,2)--(12,2);

\draw  [ultra thick, lightgray, ->] (10, 11)--(12,11);
\draw  [ultra thick, dotted, ->] (10,11)--(12, 13);

%
%


\end{tikzpicture}
\caption{ Microscopic evolution of subsection \ref{sec: 3-layers}.} \label{fig:3-layer}
\end{figure}
Again, if for the current system 
\begin{equation}
        \widetilde{M}_t^{(N)}
        =
        \dfrac 1{N}
        \sum_{k=1}^{N} \delta_{\left(x_k(t),u_k(t)\right)},
\end{equation}
by Proposition \ref{Prop: Zemlyakov},
\begin{equation}
\begin{split}
       \widetilde{M}_t^{(N)} = S_t \widetilde{M}_0^{(N)}. 
\end{split}
\end{equation}
On the other hand, as $N \to \infty$,
\begin{equation}
\begin{split}
     \widetilde{M}_0^{(N)}(dx,dv) 
     &\Rightarrow
     \widetilde{M}_0(dx,dv)\\
        &=
        \chi_{[0,1]}(x)dx
        \otimes
        \left(
        \dfrac 13 \delta_{- \sqrt 6/2}(dv)+\dfrac13\delta_0(dv)+\dfrac 13 \delta_{\sqrt6 /2}(dv)
        \right),
\end{split}
\end{equation}
therefore
\begin{equation}
     \widetilde{M}_t^{(N)} 
     \Rightarrow 
     \widetilde{M}_t = S_t \widetilde{M}_0, \quad N \to \infty.
\end{equation}
It is again a straightforward calculation that
\begin{equation}
\begin{split}
     \widetilde M_t(dx,dv)
     =
     &\frac13 \chi_{\left[-\frac{\sqrt{6}}{2}t,\ 1-\frac{\sqrt{6}}{2}t\right]}(x)
     dx
     \otimes
     \delta_{-\frac{\sqrt{6}}{2}}(dv) \\ 
     & +
     \frac13 \chi_{[0,1]}(x) dx
     \otimes
     \delta_{0}(dv)
     +
     \frac13 \chi_{\left[\frac{\sqrt{6}}{2}t,\ 1+\frac{\sqrt{6}}{2}t\right]}(x)
     dx
     \otimes
     \delta_{\frac{\sqrt{6}}{2}}(dv).
\end{split}
\end{equation} 
$\widetilde M_t$ describes three layers, each of total mass $1/3$, initially overlapping on the interval $[0,1]$. Two of them move with velocities $\pm\sqrt{6}/2$ for $t>0$, while the third stays at rest, see Figure \ref{Fig: three vertical leayers}.
\begin{figure}
\begin{tikzpicture}
[scale=.2]

\coordinate [label=left:${0}$]  (x0) at (0,0);
\coordinate [label=left:${1}$]  (x3) at (0,10);

\fill (0,0)--(0,10)--(.5,10)--(.5,0);

\fill [lightgray](6,-2.5)--(6,0)--(6.5,0)--(6.5,-2.5);
\fill [gray](6,0)--(6,2.5)--(6.5,2.5)--(6.5,0);
\fill (6,2.5)--(6,7.5)--(6.5, 7.5)--(6.5,2.5);
\fill [gray](6,7.5)--(6,10)--(6.5,10)--(6.5,7.5);
\fill [lightgray](6,10)--(6,12)--(6.5,12)--(6.5,10);

\fill[lightgray](12.5,-5)--(12.5,0)--(13,0)--(13,-5);
\fill [gray] (12.5,0)--(12.5,5)--(13, 5)--(13,0);
\fill [gray] (12.5,5)--(12.5,10)--(13, 10)--(13,5);
\fill [lightgray] (12.5,10)--(12.5,15)--(13, 15)--(13,10);

\fill [lightgray] (19, -7.5)--(19,0)--(19.5,0)--(19.5,-7.5);
\fill [gray] (19, 0)--(19,2.55)--(19.5,2.55)--(19.5,0);
\fill[lightgray]  (19, 2.55)--(19,7.5)--(19.5,7.5)--(19.5,2.55);
\fill [gray] (19, 7.5)--(19,10)--(19.5,10)--(19.5,7.5);
\fill [lightgray] (19, 10)--(19,17.5)--(19.5,17.5)--(19.5,10);

\coordinate [label=below:${t=0}$]  (t0) at (0,-1);
\coordinate [label=below:${t=.2}$]  (t1) at (6.5,-3);
\coordinate [label=below:${t=.4}$]  (t2) at (13,-5);
\coordinate [label=below:${t=.6}$]  (t3) at (19.5,-7);

\end{tikzpicture}
\caption
		{Macroscopic evolution of subsection \ref{sec: 3-layers}.}
\label
		{Fig: three vertical leayers}
\end{figure}
The macroscopic density, velocity and energy density given by $\widetilde M_t$ are
\begin{equation}
\left\{
 \begin{array}{l}
         \widetilde{\rho}(t,x)
         =\frac13 \chi_{\left[-\frac{\sqrt{6}}{2}t,\ 1-\frac{\sqrt{6}}{2}t\right]}(x) 
          +
          \frac13 \chi_{[0,1]}(x)
          +
         \frac13 \chi_{\left[\frac{\sqrt{6}}{2}t,\ 1+\frac{\sqrt{6}}{2}t\right]}(x)
         \\
        \widetilde{u}(t,x)
        = \dfrac {
        -\frac{\sqrt{6}}{6} \chi_{\left[-\frac{\sqrt{6}}{2}t,\ 1-\frac{\sqrt{6}}{2}t\right]}(x) 
          +
         \frac{\sqrt{6}}{6} \chi_{\left[\frac{\sqrt{6}}{2}t,\ 1+\frac{\sqrt{6}}{2}t\right]}(x)
       }  
        {\widetilde \rho (t,x) } 
        \\
       \widetilde e(t,x)
        =
        \dfrac{
         \dfrac14 \chi_{\left[-\frac{\sqrt{6}}{2}t,\ 1-\frac{\sqrt{6}}{2}t\right]}(x) 
         +
         \dfrac14 \chi_{\left[\frac{\sqrt{6}}{2}t,\ 1+\frac{\sqrt{6}}{2}t\right]}(x)
         -
         \dfrac 12 \widetilde\rho(t,x)\widetilde u^2(t,x)
         }
         {\widetilde{\rho}(t,x)}.
  \end{array}
  \right.
\end{equation}
When $\widetilde\rho(t,x)=0$, take $\widetilde u(t,x),\widetilde e(t,x)=0$.
Notice that 
\begin{equation}
\overline{\xi^3} \tilde{\, \phantom l} (0,x) 
=
\int_{\R} (v-\widetilde u(t,x))^3 \widetilde M_{t,x}(dv)
=
0.
\end{equation}
By Proposition \ref{prop: euler from free flow},  $\left(\widetilde \rho, \widetilde u, \widetilde e \right)$ is also a solution to the Cauchy problem \eqref{system: euler from 2 layers}, clearly distinct from the solution $\left(\rho, u, e \right)$. 

\begin{Rmk}
It is well known that weak solutions to systems like \eqref{system: euler from 2 layers} are not unique, see \cite{D}. This section provides a microscopic interpretation of such macroscopic non-uniqueness, showing that such phenomena are quite natural from a Hamiltonian point of view.


\end{Rmk}

\section*{Appendix : Motion in a Central Field}
\label{central field}
\setcounter{equation}{0}
\renewcommand\theequation{A.\arabic{equation}}
\setcounter{Thm}{0}
\renewcommand\theThm{A.\arabic{Thm}}
We establish some facts for the motion in dimension $2$ of a single particle in an external field of potential energy $\Phi$ of finite range $\sigma$:
\begin{equation} \label{eq: central}
\begin{split}
   \bold x''(t) = -\Phi'(|\bold x|)\frac{\bold x}{|\bold x|}.
\end{split}
\end{equation}
To accommodate  \eqref{Uinteraction}, assume that $\Phi:(0, \infty) \to [0,\infty)$ satisfies
\begin{equation}
   \lim_{r \to 0} \Phi (r) = + \infty, \quad
   \Phi' \leq 0, \quad \Phi'' \geq 0, \quad
   \Phi(r)  \neq 0 \Leftrightarrow  0< r<\sigma.
\end{equation}

Consulting Figure \ref{fig: complete collision}, let $O$ be the center of the potential $\Phi$.  A molecule $m$ enters the range of   $\Phi$ at $A$ with velocity $\bold v$ and leaves at $B$. For $D$ the middle of $AB$, the path of $m$ in the range of $\Phi$ is symmetric about $OD$, by the reversibility of the equations of motion. Decompose $\bold v(t)$ into $\bold v_1(t)$ and $\bold v_2(t)$ along $AB$ and $OD$, respectively, and let $E$ be the intersection of $OD$ and the trajectory of $m$. When $m$ crosses $OD$ it has moved $d$ on the direction of $OD$. If $\theta$ is the angle between $\bold v$ and $AB$ and $C$ is the point on $OD$ with $AC$ of direction $\bold v$, then
\begin{equation} \label{d<sigmasintheta}
   \begin{split}
       d=DE<CD=AC\cdot \sin\theta<AO\cdot \sin\theta=\sigma \sin \theta.
   \end{split}
\end{equation}
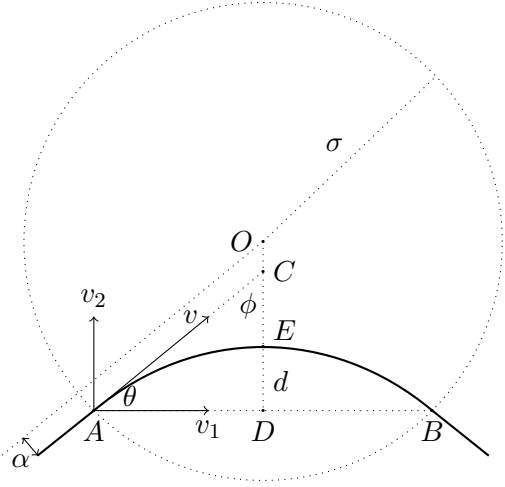
\begin{figure}
\begin{tikzpicture}
\coordinate [label=left:$O$]  (0) at (0,0);
  \coordinate [label=below:$A$] (A) at (-2.25,-2.25);
  \coordinate [label=right:$\theta$] (theta) at (-2.,-2.05);
  \draw[dotted] (0) let
              \p1 = ($ (A) - (0) $)
            in
              circle ({veclen(\x1,\y1)});
   \coordinate  [label=below:$B$] (B) at (2.25,-2.25);
   \draw [dotted](A) -- (B);
   \coordinate (V) at (2.3,2.2);
   \draw [dotted] (0) -- (V);
   \coordinate (W) at (-3.5,-2.87);
   \draw [dotted] (0) -- (W);
   \coordinate [label=below:$\sigma$] (sigma) at (.95,1.5);
   \coordinate [label=below:$\phi$] (phi) at (-.2,-.55);
   \node at (0,0) {.};
   \node at (2.25,-2.25) {.};
   \node at (-2.25,-2.25) {.};
   \coordinate [label=below:$D$] (D) at (0,-2.25);
   \node at (0,-2.25) {.};
   \draw [dotted](0) -- (D);
   \coordinate [label=above right:$E$] (E) at (0,-1.45);
   \node at (0,-1.40) {.}; 
   \coordinate [label=left:$v$] (v) at (-.725,-1.);
   \draw [->] (A) -- (v);
   \coordinate [label=below:$v_1$] (v_1) at (-.725,-2.25);
   \draw [->] (A) -- (v_1);
   \coordinate [label=above:$v_2$] (v_2) at (-2.25,-1);
   \draw [->] (A) -- (v_2);
   \coordinate [label=right:$d$]  (d) at (0,-1.85);
   \coordinate [label=right:$C$] (C) at (0,-.40);
   \node at (0,-.40) {.};
   \draw [dotted] (v) -- (C);
   \draw [thick](-2.25,-2.25) to [out=40,in=140] (2.25,-2.25);
   \draw [thick] (A) -- (-3,-2.85);
   \draw [thick] (B) -- (3,-2.85);
   \draw [<->] (-3.195,-2.61) -- (-2.99,-2.85);
   \coordinate [label=below:$\alpha$] (alpha) at (-3.22,-2.7);
\end{tikzpicture}
\caption{Motion in a central field of finite range $\sigma$.}
\label{fig: complete collision}
\end{figure}                                                                                                                                                                 
Let $T$ be the time it takes $m$ to travel from $A$ to $B$. 
\begin{Lem} 
\label{Lem: CollisionTimeEstimate}
For $\sigma$ the range of $\Phi$, $T$ and $v$ as above satisfy 
$T < \dfrac{4\sigma}{v}$.
\end{Lem}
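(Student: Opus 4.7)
The plan is to reduce the two-dimensional central-field problem to a one-dimensional radial problem via the conserved angular momentum $L = |\bold x(t)\times\bold x'(t)|$, and then exploit the convexity hypothesis $\Phi'' \geq 0$ to bound the resulting radial-time integral.  With the effective potential
\[
V_{\mathrm{eff}}(r) \;:=\; \Phi(r) + \frac{L^2}{2r^2},
\]
energy conservation for the radial motion reads $\tfrac12 \dot r^2 + V_{\mathrm{eff}}(r) = \tfrac12 v^2$.  The particle enters the range at $r=\sigma$, decreases $r$ monotonically to a perihelion $r_0$ (the radial coordinate of $E$, characterized by $V_{\mathrm{eff}}(r_0) = v^2/2$), and by the time-reversibility of the equations of motion already invoked in the text, returns symmetrically to $r=\sigma$; hence
\[
T \;=\; 2\int_{r_0}^{\sigma} \frac{dr}{\sqrt{v^2-2V_{\mathrm{eff}}(r)}}.
\]

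The decisive step is a chord bound on $V_{\mathrm{eff}}$.  Since $\Phi'' \geq 0$ and $r\mapsto 1/r^2$ is convex on $(0,\infty)$, $V_{\mathrm{eff}}$ is convex, hence dominated on $[r_0,\sigma]$ by the line through $(r_0, v^2/2)$ and $(\sigma, L^2/(2\sigma^2))$.  Writing $\gamma := \sqrt{1 - L^2/(\sigma^2 v^2)} \in [0,1]$, the chord inequality rearranges to
\[
v^2 - 2V_{\mathrm{eff}}(r) \;\geq\; \frac{r - r_0}{\sigma - r_0}\, v^2 \gamma^2, \qquad r \in [r_0,\sigma].
\]
Substituting into the integral and using $\int_{r_0}^{\sigma}(r - r_0)^{-1/2}\,dr = 2\sqrt{\sigma - r_0}$ gives
\[
T \;\leq\; \frac{4(\sigma - r_0)}{v\gamma}.
\]

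To close I compare $\sigma - r_0$ with $\gamma$.  The perihelion condition $V_{\mathrm{eff}}(r_0) = v^2/2$ together with $\Phi \geq 0$ forces $L^2/(2 r_0^2) \leq v^2/2$, i.e.\ $r_0 \geq L/v$.  Writing $s := L/(\sigma v) \in [0,1]$ gives $\sigma - r_0 \leq \sigma(1-s)$ while $\gamma = \sqrt{1 - s^2}$, and the elementary identity
\[
\frac{1-s}{\sqrt{1-s^2}} \;=\; \sqrt{\frac{1-s}{1+s}} \;\leq\; 1
\]
yields $(\sigma - r_0)/\gamma \leq \sigma$, whence $T \leq 4\sigma/v$.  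Strictness follows because in all non-degenerate cases at least one of these estimates is strict, while the grazing case $s=1$ is trivial with $T = 0 < 4\sigma/v$.

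The hardest point, I expect, is recognizing that the convexity hypothesis $\Phi'' \geq 0$ is precisely what is needed here: without it, $\Phi$ could hover just below $v^2/2$ over a macroscopic radial interval, trapping the particle near the perihelion and making $T$ arbitrarily large.  A secondary delicacy is that the estimates $\sigma - r_0 \leq \sigma(1-s)$ and $\gamma \geq \sqrt{(1-s)(1+s)}$ are individually tight in opposite limits (head-on $s \to 0$ vs.\ grazing $s \to 1$); the square-root identity above is what glues the two extremes into the single uniform bound $4\sigma/v$.
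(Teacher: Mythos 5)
Your proof is correct, and it takes a genuinely different route from the paper's. The paper works in Cartesian coordinates adapted to the collision geometry: it decomposes the velocity into components $v_1$ along the chord $AB$ and $v_2$ along the axis of symmetry $OD$, computes $v_2''$ directly from the equation of motion, and uses the convexity $\Phi''\geq 0$ (together with sign information on $x_1, x_2$ and the conserved quantity $x_1 x_2' - x_2 x_1'$) to show $v_2$ is concave in $t$. Combined with the geometric bound $d<\sigma\sin\theta$ from \eqref{d<sigmasintheta}, this yields $\tfrac{v\sin\theta}{2}\cdot\tfrac{T}{2} < \sigma\sin\theta$ and hence $T<4\sigma/v$. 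You instead pass to polar coordinates, invoke angular momentum to reduce to a radial problem, express $T$ as $2\int_{r_0}^\sigma \bigl(v^2 - 2V_{\mathrm{eff}}(r)\bigr)^{-1/2}\,dr$, and apply the chord bound to the convex $V_{\mathrm{eff}}$ on $[r_0,\sigma]$; the final step matching $\sigma - r_0 \leq \sigma(1-s)$ against $\gamma=\sqrt{1-s^2}$ is what recovers the same constant $4$. Both arguments use $\Phi''\geq 0$ as the essential hypothesis — the paper to get concavity of $v_2(t)$ in time, you to get convexity of $V_{\mathrm{eff}}(r)$ in radius — and both exploit $\Phi\geq 0$ and the finiteness of the range. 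Your version is more conceptual and avoids the sign-tracking in the $v_2''$ computation, at the modest cost of introducing the radial reduction machinery; the paper's is more elementary and stays closer to the figure. One small point worth spelling out if you wrote this up formally: the change of variables from $t$ to $r$ and the uniqueness of the perihelion $r_0$ rely on $V_{\mathrm{eff}}$ being \emph{strictly} decreasing on $(0,\sigma)$, which does follow from \eqref{Uinteraction} (since $\Phi' = 0$ at some point would, by convexity and $\Phi(\sigma)=0$, force $\Phi\equiv 0$ beyond that point, contradicting $\Phi>0$ on $(0,\sigma)$), but it is not completely automatic.
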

 \begin{proof}      
From \eqref{eq: central}, 
\begin{equation}
\begin{split}
     v_2'' = 
     -\Phi''(|\bold x|)\frac{\bold x\cdot \bold x'}{|\bold x|^2}x_2
     -
     \Phi'(|\bold x|)\frac{ x'_2}{|\bold x|}
     +
     \Phi'(|\bold x|)\frac{\bold x\cdot \bold x'}{|\bold x|^3} x_2
     .
\end{split}
\end{equation}
For $x_2<0$ and as $\dfrac{d|\bold x|^2}{dt} <0$ for $t \in (0,T/2)$, and as $\Phi$ is convex, the first term of this is negative and, if $x_1$ is also negative,
the sum of the remaining two terms is also negative provided that
\begin{equation}
\begin{split}
    -
    { x'_2}{|\bold x|^2}
     +
    x_2 ({\bold x\cdot \bold x'}) 
     >0
     \Leftrightarrow
     -
    { x'_2}x_1 
     +
    x_2  x_1'  
     <0,
\end{split}
\end{equation}
since $\Phi$ is decreasing. Now note that $-{ x'_2}x_1 + x_2  x_1'$ stays constant in time and the inequality is satified at $t=0$.
Therefore $v_2$ is concave and by \eqref{d<sigmasintheta}
\begin{equation}
   \begin{split}
       \dfrac{v_2(0)}2 \cdot \dfrac T 2
        <d
       <\sigma \sin \theta, 
   \end{split}
\end{equation}
which, along with $v_2(0)=v\sin \theta$, concludes the proof.
\end{proof}
Still in Figure \ref{fig: complete collision}, let $\angle ACD=\phi$. 
Denoting the distance of $O$ from $AC$ (the impact parameter) by $\alpha$,
by \cite{LL}, p.\,49\footnote{Note here that \cite{LL}'s analysis of motion in a central field in their \S 14 is valid for any central field, including the ones with finite range.}


\begin{equation}  \label{phiangle}
    \begin{split}
         \phi(\alpha)
         =
      \int
         _{r_{min}}^{\infty}
         \dfrac{\alpha}{r^2\sqrt{1-\dfrac{\alpha^2}{r^2}-\dfrac{\Phi(r)}{E}}} dr,
    \end{split}
\end{equation}
where $E=\dfrac12 mv^2$ and $r_{min}$ is a zero of the radicand:
\begin{equation}\label{r_min}
    \begin{split}
    1-\dfrac{\alpha^2}{r_{min}^2}-\dfrac{\Phi(r_{min})}{E}=0.
    \end{split}
\end{equation}
\begin{Lem} 
For interaction potential as in \eqref{Uinteraction}, $r_{min} = r_{min}(\alpha)$ is increasing and $\phi(\alpha)$ is continuous on $[0,\infty)$.
\end{Lem}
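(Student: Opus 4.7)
Proof plan. Introduce $F(r) := r^2(1 - \Phi(r)/E)$ on $(0,\infty)$, so that the defining equation \eqref{r_min} reads $F(r_{min}) = \alpha^2$. Since $\Phi : (0, \sigma] \to [0, \infty)$ is continuous and strictly decreasing from $+\infty$ to $0$ (strict decrease follows from the boundary constraint $\Phi(\sigma)=0$ combined with $\Phi'\leq 0$, $\Phi''\geq 0$), there is a unique $r^* \in (0, \sigma)$ with $\Phi(r^*) = E$. Direct computation gives
\begin{equation*}
F'(r) = 2r\Bigl(1 - \frac{\Phi(r)}{E}\Bigr) - \frac{r^2 \Phi'(r)}{E},
\end{equation*}
and on $[r^*, \sigma]$ both terms are nonnegative (since $\Phi \leq E$ and $\Phi' \leq 0$), so $F' \geq 0$ with strict inequality on $(r^*, \sigma]$; on $[\sigma, \infty)$, $F(r) = r^2$ gives $F'(r) = 2r > 0$. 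At $r^*$ itself, $F'(r^*) = -(r^*)^2 \Phi'(r^*)/E$ is strictly positive provided $\Phi'(r^*) < 0$; this must hold, since otherwise the convexity $\Phi'' \geq 0$ combined with $\Phi' \leq 0$ forces $\Phi' \equiv 0$ on $[r^*, \infty)$, contradicting $\Phi(r^*) = E > 0 = \Phi(\sigma)$. Therefore $F : [r^*, \infty) \to [0, \infty)$ is a $C^1$ bijection with $F' > 0$, so $r_{min}(\alpha) := F^{-1}(\alpha^2)$ is continuous and strictly increasing on $[0, \infty)$.

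For continuity of $\phi$, rewrite \eqref{phiangle} using $1 - \alpha^2/r^2 - \Phi(r)/E = (F(r) - \alpha^2)/r^2$ as
\begin{equation*}
\phi(\alpha) = \int_{r_{min}(\alpha)}^{\infty} \frac{\alpha \, dr}{r \sqrt{F(r) - \alpha^2}},
\end{equation*}
which converges as an improper integral (square-root singularity at $r_{min}$, tail bounded by $\alpha/r^2$). If $\alpha \geq \sigma$, then $r_{min}(\alpha) = \alpha$ and $F(r) = r^2$ throughout, so $u = \alpha/r$ yields $\phi(\alpha) = \pi/2$. If $\alpha < \sigma$, split at $r = \sigma$ as $\phi(\alpha) = I_1(\alpha) + I_2(\alpha)$, where
\begin{equation*}
I_1(\alpha) = \int_{r_{min}(\alpha)}^{\sigma} \frac{\alpha \, dr}{r \sqrt{F(r) - \alpha^2}}, \qquad I_2(\alpha) = \int_{\sigma}^{\infty} \frac{\alpha \, dr}{r \sqrt{r^2 - \alpha^2}} = \arcsin(\alpha/\sigma);
\end{equation*}
$I_2$ is manifestly continuous.

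To desingularize $I_1$, change variables by $F(r) = \alpha^2 + t^2$, which is licit by Step 1 ($F$ is a $C^1$ diffeomorphism with $F' > 0$ on $[r^*, \sigma]$), so that $dr = (2t/F'(r)) \, dt$ and
\begin{equation*}
I_1(\alpha) = \int_{0}^{\sqrt{\sigma^2 - \alpha^2}} \frac{2\alpha \, dt}{r(\alpha, t) \, F'(r(\alpha, t))}, \qquad r(\alpha, t) := F^{-1}(\alpha^2 + t^2).
\end{equation*}
The integrand is jointly continuous in $(\alpha, t)$ and uniformly bounded (since $r(\alpha, t) \geq r^*$ and $F'$ admits a positive minimum on $[r^*, \sigma]$), and the upper limit is continuous in $\alpha$; dominated convergence then gives continuity of $I_1$ on $[0, \sigma]$. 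At $\alpha = 0$ the prefactor forces $I_1(0) = 0 = \phi(0)$, and at $\alpha = \sigma^-$, $I_1(\sigma^-) = 0$ while $I_2(\sigma^-) = \pi/2$, matching the $\alpha \geq \sigma$ regime at the junction.

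The main obstacle is the strict positivity $F'(r^*) > 0$: this is exactly what makes the change of variables nondegenerate at $\alpha = 0$ and, in turn, gives continuity of $I_1$ at the origin. Ruling out $\Phi'(r^*) = 0$ is the single non-routine ingredient, and it relies critically on combining the convexity $\Phi'' \geq 0$, the sign condition $\Phi' \leq 0$, and the boundary datum $\Phi(\sigma) = 0$ from \eqref{Uinteraction}.
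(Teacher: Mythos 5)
Your argument is correct, and for the continuity of $\phi$ it genuinely departs from the paper's proof. The paper instead substitutes $r = r_{min}(\alpha)\,y$ in \eqref{phiangle}, which (using \eqref{r_minalt}) rewrites
\begin{equation*}
\phi(\alpha)=\int_1^\infty \frac{dy}{y^2\sqrt{\dfrac{E-\Phi(r_{min}y)}{E-\Phi(r_{min})}-\dfrac{1}{y^2}}},
\end{equation*}
and then exploits the single fact that $\Phi$ is decreasing: for $y\geq 1$ the ratio in the radicand is at least $1$, so the integrand is dominated by $\bigl(y^2\sqrt{1-1/y^2}\,\bigr)^{-1}$, an explicit integrable majorant ($\int_1^\infty=\pi/2$). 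Dominated convergence plus the continuity of $r_{min}(\alpha)$ then finishes the job, with no need to control the behaviour of $F'$ at $r^*$. Your route — desingularizing with $F(r)=\alpha^2+t^2$ on $[r_{min},\sigma]$ and evaluating the tail as $I_2=\arcsin(\alpha/\sigma)$ — is also valid, but it is structurally more expensive: the change of variables needs $F'$ bounded away from zero on $[r^*,\sigma]$, which forces you to supply the extra (correct) argument ruling out $\Phi'(r^*)=0$ via convexity and $\Phi(\sigma)=0$. In exchange you get a bounded integrand on a finite interval and the closed form for $I_2$. For the monotonicity of $r_{min}$ the two proofs are the same observation in different clothes: the paper inverts $\alpha(r_{min})=r_{min}\bigl(1-\Phi(r_{min})/E\bigr)^{1/2}$ directly, and your $F(r)=r^2\bigl(1-\Phi(r)/E\bigr)$ is precisely $\alpha(r)^2$. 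One small overreach to flag: $F$ need not be globally $C^1$, since under \eqref{Uinteraction} nothing prevents a corner at $r=\sigma$ (a negative left derivative $\Phi'(\sigma^-)$ against $\Phi\equiv 0$ for $r\geq\sigma$); your split of the integral at $r=\sigma$ already accommodates this, but the phrase ``$C^1$ bijection on $[r^*,\infty)$'' should be read as piecewise $C^1$ and globally a strictly increasing homeomorphism.
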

\begin{proof}
For fixed $\alpha$ and $E$, the function
\begin{equation}
\begin{split}
     r \mapsto \dfrac{\alpha^2}{r^2} +\dfrac{\Phi(r)}{E}
\end{split}
\end{equation}
is strictly decreasing from $+\infty$ to $0$ for $r>0$ and the pre-image $r_{min}$ of $1$ satisfies \eqref{r_min}, or
\begin{equation}
\begin{split} \label{r_minalt}
    {\alpha}=\left( 1  -\dfrac{\Phi(r_{min})}{E}\right)^{1/2} {r_{min}}
\end{split}
\end{equation}
showing that $\alpha = \alpha(r_{min}
)$, and therefore $r_{min} = r_{min}(\alpha)$, is increasing.

To show that $\phi$ is continuous, change the variable in \eqref{phiangle} via $r=r_{min}y$:
\begin{equation}
  \begin{split}
         \phi(\alpha)
        =&      
         \int_1^{\infty}
         \dfrac 1
         {
         y^2\sqrt{
        \dfrac{ r_{min}^2}{\alpha^2}\left(1 - \dfrac {\Phi(r_{min}y)} {E}\right)-\dfrac{1}{y^2}
         }
         } 
         dy\\
         \substack{\text{by \eqref{r_minalt}}\\ \displaystyle =}
         &
        \int_1^{\infty}
         \dfrac 1
         {
         y^2\sqrt{
        \dfrac {E   -  \Phi(r_{min}y)} {E   -  \Phi(r_{min})}    -\dfrac{1}{y^2}
         }
         } 
         dy.
       \end{split}
\end{equation}
From \eqref{r_min} we have 
\begin{equation}
    \begin{split}
    E>\Phi(r_{min})
    \end{split}
\end{equation}
and since  $\Phi(r)$ is decreasing,
\begin{equation}
    \begin{split}
    \Phi(r_{min})\geq \Phi(r_{min}y), \quad y\geq 1,
    \end{split}
\end{equation}
therefore
\begin{equation}
   \begin{split}
         \dfrac 1
         {
         y^2\sqrt{
        \dfrac {E   -  \Phi(r_{min}y)} {E   -  \Phi(r_{min})}    -\dfrac{1}{y^2}
         }
         } 
         \leq
         \dfrac 1
         {
         y^2\sqrt{
        1    -\dfrac{1}{y^2}
         }
         }, 
       \end{split}
\end{equation}
{with} 
\begin{equation}
    \begin{split}
        \int_1^{\infty}
         \dfrac 1
         {
         y^2\sqrt{
        1    -\dfrac{1}{y^2}
         }
         } dy
         =\dfrac {\pi}2.
\end{split}
\end{equation}
In other words, the integrand of $\phi$ is dominated by an integrable function. This, and the continuity of $r_{min}$ in $\alpha$, show that $\phi$ is continuous in $\alpha$.
\end{proof}

\begin{Cor} \label{continuous dependence}
For any $0\leq\phi_0\leq  \pi/2$, there exists $0\leq \alpha_0 \leq \sigma$ such that $\phi(\alpha_0)=\phi_0$.
\end{Cor}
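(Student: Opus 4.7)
The strategy is to apply the Intermediate Value Theorem on $[0,\sigma]$. By the preceding lemma $\phi$ is continuous on $[0,\infty)$, so once the endpoint values $\phi(0)=0$ and $\phi(\sigma)=\pi/2$ are established, the corollary follows at once.

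For the left endpoint, $\phi(0)=0$ is immediate from \eqref{phiangle}: the integrand carries an overall factor of $\alpha$, and the integral vanishes when $\alpha=0$ (we only need $r_{min}$ to be finite, which holds because $\Phi(r)\to +\infty$ as $r\to 0$ guarantees a positive root of $\Phi(r)=E$).

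For the right endpoint, I would first check that when $\alpha=\sigma$ the defining relation \eqref{r_min} forces $r_{min}=\sigma$: on $(0,\sigma)$ one has $\alpha^2/r^2>1$ together with $\Phi(r)/E\geq 0$, so the relation has no root there; on $(\sigma,\infty)$ one has $\Phi(r)=0$ and $\alpha^2/r^2<1$, again ruling out a root; while $r=\sigma$ itself is a root. Since $\Phi$ then vanishes on $[r_{min},\infty)=[\sigma,\infty)$, formula \eqref{phiangle} reduces to
\[
   \phi(\sigma)=\int_{\sigma}^{\infty}\frac{\sigma}{r^2\sqrt{\,1-\sigma^2/r^2\,}}\,dr,
\]
which, under the substitution $u=\sigma/r$, evaluates to $\int_{0}^{1}(1-u^2)^{-1/2}\,du=\pi/2$.

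Combining continuity with $\phi(0)=0$ and $\phi(\sigma)=\pi/2$, the Intermediate Value Theorem supplies, for every $\phi_0\in[0,\pi/2]$, some $\alpha_0\in[0,\sigma]$ with $\phi(\alpha_0)=\phi_0$. The only delicate point is the identification of $r_{min}$ at $\alpha=\sigma$, which is exactly where the finite-range hypothesis on $\Phi$ is essential; beyond that no substantive obstacle appears, since continuity of $\phi$ has already been secured by the previous lemma.
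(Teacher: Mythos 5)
Your proposal is correct and follows the same route as the paper: continuity from the preceding lemma plus the endpoint values $\phi(0)=0$ and $\phi(\sigma)=\pi/2$, then the Intermediate Value Theorem. The paper simply asserts the endpoint values with their physical interpretations (``head-on collision'' and ``no interaction''), whereas you verify them directly from \eqref{phiangle} and \eqref{r_min}; your substitution $u=\sigma/r$ at $\alpha=\sigma$ and the identification $r_{min}=\sigma$ there are both correct.
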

\begin{proof}
Just use continuity and that $\phi(0)=0$ (``head-on collision"), $\phi(\sigma)=\dfrac{\pi}{2}$ (no interaction).
\end{proof}

As is well known, motion in a central field also describes a system of two bodies interacting with each other via $\Phi$, a function of their distance, 
in a coordinate system with its origin at the center of mass of the system. The formulas for this transformation are in \cite{LL}, \S 13.

%



\end{document}